\newtheorem{theorem}{Theorem}[section]
\newtheorem{corollary}{Corollary}[section]
\newtheorem{proposition}{Proposition}[section]
\newtheorem{lemma}{Lemma}[section]
\theoremstyle{remark}
\theoremstyle{definition}
\newtheorem{definition}{Definition}[section]
\definecolor{brickred}{cmyk}{0,0.89,0.94,0.28}
\definecolor{goldenrod}{cmyk}{0,0.10,0.84,0}
\definecolor{purple}{cmyk}{0.45,0.86,0,0}
\definecolor{rawsienna}{cmyk}{0,0.72,1,0.45}
\definecolor{olivegreen}{cmyk}{0.64,0,0.95,0.40}
\definecolor{peach}{cmyk}{0,0.5,0.7,0}
\definecolor{darkolive}{rgb}{0.,0.4,0.}
\colorlet{grey}{gray!40}
\global\long\def\P{\mathbb{P}}
\global\long\def\E{\mathbb{E}}
\global\long\def\d{\mathrm{d}}
\DeclareMathOperator*{\argmax}{arg\,max} \DeclareMathOperator*{\argmin}{arg\,min}
\newlist{mylist}{enumerate}{3}
\setlist[mylist]{label={}}
\def\BibTeX{{\rm B\kern-.05em{\sc i\kern-.025em b}\kern-.08em
		T\kern-.1667em\lower.7ex\hbox{E}\kern-.125emX}}
\begin{document}

\title{{On Improving the Composition Privacy Loss in Differential Privacy for Fixed Estimation Error}}


\author{
	\IEEEauthorblockN{V.~Arvind~Rameshwar,~\IEEEmembership{Member,~IEEE,}  \and
		Anshoo Tandon,~\IEEEmembership{Senior Member,~IEEE}}
	
	%
	\thanks{The authors {are with} the India Urban Data Exchange Program Unit, Indian Institute of Science, Bengaluru 560012, India, email: \texttt{arvind.rameshwar@gmail.com, anshoo.tandon@gmail.com}.
	}
}

\markboth{}%
{Rameshwar and Tandon: Improving the Composition Privacy Loss for Fixed Error}

%






\maketitle

\begin{abstract}
	This paper considers the private release of statistics of disjoint subsets of a dataset, {in the setting of data heterogeneity, where users could contribute more than one sample, with different users contributing potentially different numbers of samples}. In particular, we {focus on} the $\epsilon$-differentially private release of sample means and variances of sample values in disjoint subsets of a dataset, {under the assumption that the numbers of contributions of each user in each subset is publicly known}. 
	Our main contribution is an iterative algorithm, based on suppressing user contributions, which seeks to reduce the overall privacy loss degradation under a canonical Laplace mechanism, while not increasing the worst estimation error among the subsets. Important components of this analysis are our exact, analytical characterizations of the sensitivities and the worst-case bias errors of estimators of the sample mean and variance, which are obtained by clipping or suppressing user contributions. We test the performance of our algorithm on real-world and synthetic datasets and demonstrate clear improvements in the privacy loss degradation, for fixed {worst-case} estimation error.  
\end{abstract}

\begin{IEEEkeywords}
	Differential privacy, minimax error, composition
\end{IEEEkeywords}

\section{Introduction}
\label{sec:intro}
Several landmark works have demonstrated that queries about seemingly benign functions of a dataset that is not publicly available can compromise the identities of the individuals in the dataset (see, e.g., \cite{sweeney,narayanan,homer,dinurnissim}). Examples of such reconstruction attacks for the specific setting of traffic datasets -- {a use-case that the results of this paper apply to} -- can be found in \cite{taxi, pandurangan}. In this context, the framework of (item-level) differential privacy (DP) was introduced in \cite{dwork06} (see also \cite{dworkroth}), which aims to preserve the privacy of users when each user contributes at most one sample, even in the presence of additional side information. 

The work {\cite{continual}} then considered the setting where users could contribute more than one sample and formalized the framework of \emph{user-level} DP, which requires the statistical indistinguishability of the output generated by a private mechanism, where potentially all of a user's contributions could be altered, from the output of the mechanism on the original dataset.  {{While early papers (see, e.g., \cite{userlevel}) on user-level DP} considered the setting where each user contributes the \emph{same} number of samples to the dataset, the work in \cite{hetero} extended the mean estimation algorithms \cite{userlevel} to settings where the dataset is ``heterogeneous'', in that different users could contribute different numbers of samples, from potentially different distributions.} {Later works explored PAC learning \cite{usereg1}, bounding user contributions in ML models \cite{usereg2}, and federated learning \cite{usereg3,usereg4, usereg5}, under user-level DP.} The focus on user-level privacy assumes significance in the context of most real-world IoT datasets, such as traffic databases, which record multiple contributions from every user, with different users contributing potentially different number of samples. 

Our interest in this work is in the private release of sample means and variances in several \emph{disjoint} subsets of a dataset, possibly sequentially, {in the setting where the samples are known to lie in a bounded interval}. Importantly, we assume that the partition of the dataset into subsets and the  ``occupancy array," or the numbers of contributions of each user in each subset, are publicly known, and fixed. Such partitioning of a dataset into ``grids" or subsets is natural in the setting of public transport datasets, datasets of Air Quality Indices (AQI), databases of call records, and datasets of credit card spends, where a given dataset is partitioned into disjoint grids \cite{h3} corresponding to disjoint areas in a city or country. We add that there exists work \cite{qardaji} on choosing partitions or grids in a differentially private manner. In what follows, we use the terminology ``grids'' to mean these disjoint subsets under consideration.

{The assumption that the occupancy array is public knowledge holds in the setting of, say public transport datasets, wherein the routes taken by the vehicles are fixed and known in advance, while the speeds (or other sensor readings) are private. Such an assumption is also natural in, say, repositories of crowd-funded investments, where the projects that are being funded (with each project considered as a subset or part of a ``grid'') are known, and so is the information about the number of times a particular user invested in a particular project over a fixed time span, while the actual amounts being invested are kept private. Another example where the occupancy array can be assumed to be known is the case where the number of visits of different individuals in different neighborhood stores, over a period of one month, for their daily needs are known, while the actual amount expended by an individual during a visit to a particular store is kept private. In this work, we hence focus on the setting of user-level DP mechanisms, which guarantee the statistical indistinguishability of their outputs to changes in a single user's contributions, while keeping the occupancy array fixed (see Section \ref{sec:dp} for details). This is a natural generalization of the ``replace" (or ``swap") model of DP (see, e.g., \cite{dwork06,vadhan}) to the setting of a partitioned dataset.}

{Further, the assumption that each user sample lies within a bounded interval is well-founded in the case of traffic datasets containing (non-negative) speed samples measured by sensors with a known maximum reading. Such an assumption is also natural in the setting of repositories of crowd-funded investments, wherein the investment amounts per issuer in a given time window are bounded, for deterring fraud and unethical non-disclosure \cite{crowd}, \cite[Sec. 302]{jobs}.}

Now, in the traditional setting of (pure) $\epsilon$-item-level DP or $\epsilon$-user-level DP (where $\epsilon$ captures the privacy loss), the {Basic Composition Theorem} shows that if the {analyst (or client)} were to pose multiple queries to the data curator in a potentially sequential (or adaptive) manner, the total privacy loss degrades by a factor that, in the worst case, equals the number of queries (see \cite[Cor. 3.15]{dworkroth})\footnote{{In fact, as we argue in Theorem \ref{thm:improvcomp}, this na\"ive bound on privacy loss degradation can be easily improved for the setting where the queries pertain to disjoint subsets of the dataset (see, e.g., \cite[Thm. 4]{pinq}).}}. It is also well-known that there exists a differentially private mechanism, namely, the canonical Laplace mechanism, which achieves this privacy loss (see, e.g., \cite[Sec. 2]{steinkechapter}). We mention that in the setting where we allow for (approximate) $(\epsilon,\delta)$-DP, for a certain range of $(\epsilon,\delta)$ values, it is possible to obtain improvements in the worst-case privacy loss as compared to that guaranteed by basic composition \cite[Sec. 3.5]{dworkroth}, \cite{boosting_composition, kairouz}. 
\subsection{Comparison with related work and contributions}
{
While traditional analysis of DP composition results \cite{dworkroth, boosting_composition, kairouz} focus exclusively on the privacy loss degradation assuming a generic differentially private estimator, a \emph{joint analysis} of the composition privacy loss and estimation error has been largely lacking. Often, in practice, a client who poses queries to a dataset and receives private answers has a fixed, but non-zero, error tolerance. Since the client is unaware of the true sample values in the dataset, he/she is willing to tolerate this fixed error for \emph{any} dataset queried -- in particular, for the \emph{worst} dataset in terms of error.}
 Our treatment, hence, is a study of composition of user-level DP mechanisms that \emph{jointly} considers the errors due to noise addition for privacy and due to the \emph{worst-case bias} (over all datasets) that results from the estimator used in the DP mechanism being different from the true function to be released. {We emphasize that this worst-case error (that is the sum of the worst-case bias error and a term proportional to the standard deviation of noise added) depends only on the publicly known occupancy array and interval in which the samples are known to lie, and is independent of the sample values themselves.} We mention that in recent work \cite{dp_preprint}, we presented some algorithms for real-world datasets, based on the work in \cite{userlevel} and \cite{tyagi}, which guarantee user-level $\epsilon$-DP, and also provided theoretical proofs of their performance trends. {However, the work \cite{dp_preprint} {focuses} exclusively on a single sample mean query to a dataset; in this work, our primary focus is on multiple (possible sequential) queries that involve the sample mean \emph{and variance} of disjoint subsets of a dataset.} 

{Our main contributions in this work are as follows:
\begin{enumerate}
	\item We initiate a study into the worst-case bias errors in the sample mean and sample variance that result from clipping the number of user contributions, possibly arbitrarily. In particular, we exactly characterize the worst-case errors (over all datasets) in the estimation of the sample mean and variance, under arbitrary clipping strategies.
	\item We exactly characterize the user-level sensitivities of estimators of the sample mean and variance, which are obtained by clipping the number of user contributions, possibly arbitrarily. The exact user-level sensitivity of the sample variance computed in this work is a non-trivial generalization of the exact  \emph{item-level} sensitivity of the sample variance (see \cite[Lemma A.2]{dorazio}), which is a strict improvement over the bound on \emph{item-level} sensitivity in \cite[p. 10]{dwork06} that is often taken as the standard in the DP literature. Our (fairly involved) mathematical analysis of the sensitivity of estimators of the variance is, interestingly, closely related to the analysis of the worst-case bias error.
	\item With the aid of our exact characterizations of the worst-case bias and sensitivities, we propose a novel, iterative algorithm, for improving the overall privacy loss under composition of several user-level DP mechanisms ({assuming an original, publicly known occupancy array}), each of which releases the sample mean and variance of disjoint subsets of a dataset, while maintaining the worst estimation error across the subsets. {We mention that the hyperparameters used by our clipping algorithm are the dataset-independent worst-case errors, which hence do not require additional privacy budget for private estimation, unlike the case in the works \cite{hyper1,hyper2,hyper3,hyper4}.}
	\item For reducing the worst-case estimation errors in any subset \emph{post the execution of our algorithm}, we propose a natural extension of the psuedo-user creation-based mean estimation algorithm in \cite{dp_preprint}.
	\item We evaluate the performance of our main algorithm and our pseudo-user creation-based procedure on real-world Intelligent Traffic Management System (ITMS) data from an Indian city and on synthetic datasets.
\end{enumerate}
}
 
 Our algorithm achieves the claimed improvement in privacy loss by \emph{suppressing}, or {completely removing}, the contributions of selected users in selected subsets, while not increasing the \emph{largest worst-case} error across all the subsets. 
 We emphasize that our algorithm can be applied more generally to the release of other statistics (potentially different from the sample mean and variance) of several disjoint subsets of the records in a dataset, {as long as exact, analytical characterizations of the sensitivities and worst-case errors can be derived. 
 }

\subsection{Organization of material}

The paper is organized as follows: Section \ref{sec:prelim} presents the problem formulation and recapitulates preliminaries on DP and user-level DP. Section \ref{sec:pseudo} contains a description of  the mechanisms of importance to this paper and presents an exact characterization of the (user-level) sensitivity of the sample variance function. Section \ref{sec:variance} exactly characterizes the worst-case errors in the estimation of sample mean and variance due to the suppression of selected records. Section \ref{sec:alg} then describes our main algorithm that suppresses user contributions in an effort to improve the privacy loss under composition. We then numerically evaluate the performance of our algorithm on {real-world and} synthetically generated datasets in terms of the privacy loss degradation, in Section \ref{sec:results}, and suggest a simple pseudo-user creation-based algorithm to improve the worst-case estimation error, over all grids. The paper is concluded in Section \ref{sec:conclusion} with some directions for future research.

\section{Preliminaries}
\label{sec:prelim}
\subsection{Notation}
For a given $n\in \mathbb{N}$, the notation $[n]$ denotes the set $\{1,2,\ldots,n\}$ and the notation $[a:b]$ denotes the set $\{a,a+1,\ldots,b\}$, for $a,b\in \mathbb{N}$ and $a\leq b$. Given a length-$n$ vector ${u}\in \mathbb{R}^n$, we define $\lVert {u} \rVert_1:= \sum_{i=1}^n |u_i|$ to be the $\ell_1$-norm of the vector ${u}$.
We write $X\sim P$ to denote that the random variable $X$ is drawn from the distribution $P$. We use the notation $\text{Lap}(b)$ to refer to a random variable $X$ drawn from the zero-mean Laplace distribution with standard deviation $\sqrt{2}b$; its probability distribution function (p.d.f.) obeys
\[
f_X(x) = \frac{1}{2b}e^{-|x|/b}, \ x\in \mathbb{R}.
\]
\subsection{Motivation and Problem Setup}
{This work is motivated by the analysis of {spatio-temporal datasets}, such as datasets of traffic information, call records or credit card spends, which contain records of the data provided by sensors or cellular equipment in a city or a country. Each record catalogues, typically among other information, an identification of a user (say, via the licence plate or phone number), the location at which the data was recorded, a timestamp, and the actual data value itself, (say the speed, call duration, or AQI).} Most data analysis tasks on such datasets proceed as follows: first, in an attempt to obtain fine-grained information about  statistics in different areas of the city or country, the total area is divided into hexagon-shaped grids (see, e.g., Uber's Hexagonal Hierarchical Spatial Indexing System \cite{h3}, which provides an open-source library for such partitioning tasks). Next, the timestamps present in the data records are quantized (or binned) into timeslots of fixed duration (say, one hour). 

\subsection{Problem Formulation}
Let $\mathcal{L}$ denote the collection of all users in the dataset, and let $\mathcal{G}$ be the collection of grids or identifiers of subsets of the dataset (in this paper, we use ``grids'' and ``subsets'' interchangeably). We set $L := |\mathcal{L}|$ and $G := |\mathcal{G}|$. Furthermore, for each user $\ell\in \mathcal{L}$ and each grid $g\in \mathcal{G}$, we let $ m_{g,\ell}$ denote the (non-negative integer) number of samples contributed by user $\ell$ in the records corresponding to grid $g$. Now, for a given user $\ell\in \mathcal{L}$, let $m_\ell:= \sum_{g\in \mathcal{G}} m_{g,\ell}$ be the total number of samples contributed by user $\ell$ across all grids. Next, for every grid $g\in \mathcal{G}$, let 
\[
\mathcal{L}_g := \{\ell \in \mathcal{L}:\ m_{g,\ell} >0\}
\]
be the collection of users whose contributions constitute the data records corresponding to grid $g$. We let $m_g^\star$ denote the largest number of samples contributed by any user in grid $g\in \mathcal{G}$. Formally, $m_g^\star = \max_{\ell\in \mathcal{L}_g} m_{g,\ell}$. For every user $\ell\in \mathcal{L}$, let
\[
\mathcal{G}_\ell := \{g\in \mathcal{G}:\ m_{g,\ell}>0\} 
\]
be the collection of grids whose records user $\ell$ contributes to. In line with the previous notation, we set $L_g := |\mathcal{L}_g|$ and $G_\ell:= |\mathcal{G}_\ell|$. {Further, we assume that the sets $\mathcal{L}_g$, $g\in \mathcal{G}$, and $\mathcal{G}_\ell$, $\ell\in \mathcal{L}$, are publicly known. In other words, we assume that the occupancy array $\mathsf{A}\in \mathbb{N}^{L\times G}$ such that $\mathsf{A}_{\ell,g} = m_{g,\ell}$, for $\ell\in \mathcal{L}, g\in \mathcal{G}$, is publicly known.} Throughout this paper, we assume, without loss of generality, that $G_1\geq G_2\geq \ldots\geq G_L$.

Now, let $S_{g,\ell}$ denote the vector of samples contributed by user $\ell \in \mathcal{L}$ in grid $g\in \mathcal{G}$; more precisely, $S_{g,\ell}:= \left(S_{g,\ell}^{(j)}:\ j\in \left[m_{g,\ell}\right] \right)$. We assume that each $S_{g,\ell}^{(j)}$ is a non-negative real number that lies in the interval $[0,U]$, where $U$ is a fixed, {publicly known} {upper bound on the sample values (the results in this paper can also be extended to situations where the data samples can take negative, but bounded values)}. For most real-world datasets, the samples are drawn according to some unknown {joint distribution $P$ over all the samples contributed by the different users}, that is potentially non-i.i.d. (where i.i.d. stands for ``independent and identically distributed'') across samples and users. Our analysis is distribution-free in that we work with the worst-case errors in estimation over all datasets, in place of distribution-dependent error metrics such as the expected error (see, e.g., \cite[Sec. 1.1]{primer_stat} for a discussion).


We call the dataset consisting of the records contributed by users as $$\mathcal{D} = \left\{\left(\ell,\left\{S_{g,\ell}: g\in \mathcal{G}\right\}\right): \ell \in \mathcal{L}\right\}.$$ We let $\mathsf{D}$ denote the universe of all possible datasets with a given distribution of numbers of samples contributed by users across grids $\{m_{g,\ell}:\ell\in \mathcal{L}, g\in \mathcal{G}\}$.

The function that we are interested in is the length-$G$ vector $f: \mathsf{D}\to (\mathbb{R}^2)^G$, {each of the $G$ components of which} is a $2$-tuple of the sample average and the sample variance of samples in each grid. More precisely, we have
\begin{equation}
	\label{eq:f}
f(\mathcal{D}) = \left(f_g(\mathcal{D}): g\in \mathcal{G}\right),
\end{equation}
where $f_g: \mathsf{D}\to \mathbb{R}^2$ is such that
\begin{equation}
	\label{eq:fg}
f_g(\mathcal{D}) = 
\begin{bmatrix}
	\mu_g(\mathcal{D})\\
	\textsf{Var}_g(\mathcal{D})
\end{bmatrix}.
\end{equation}
Here,
\begin{equation}
	\label{eq:meang}
\mu_g(\mathcal{D}):= \frac{1}{\sum\limits_{\ell\in {^g\!\mathcal{L}}} m_{g,\ell}}\cdot \sum_{\ell\in { \mathcal{L}_g}} \sum_{j=1}^{m_{g,\ell}} S_{g,\ell}^{(j)}
\end{equation}
is the sample mean corresponding to grid $g$ and
\begin{equation}
	\label{eq:varg}
\textsf{Var}_g(\mathcal{D}) = \frac{1}{\sum_{\ell\in \mathcal{L}_g} m_{g,\ell}}\cdot \sum_{\ell\in \mathcal{L}_g} \sum_{j=1}^{m_{g,\ell}} \left(S_{g,\ell}^{(j)}-\mu_g(\mathcal{D})\right)^2
\end{equation}
is the sample variance corresponding to grid $g$. For the purposes of this work, one can equivalently think of $f(\mathcal{D})$ as a length-$2G$ vector, each of whose components is a scalar mean or variance. A central objective in user-level differential privacy is the private release of an estimate of $f$, without compromising too much on the accuracy in estimation. We next recapitulate the definition of user-level differential privacy \cite{continual}{, for the setting where the occupancy array $\mathsf{A}$ is known and fixed}.

\subsection{User-Level Differential Privacy}
\label{sec:dp}
Consider two datasets $\mathcal{D}_1 = \left\{\left({\ell},\left\{x_{g,\ell}: g\in \mathcal{G}\right\}\right):\ \ell\in \mathcal{L}\right\}$ and \  $\mathcal{D}_2 = $ $ \left\{\left(\ell,\left\{\tilde{x}_{g,\ell}: g\in \mathcal{G}\right\}\right):\ \ell\in \mathcal{L}\right\}$ consisting of the same users, {with  $\mathcal{D}_1$ and $\mathcal{D}_2$ having the same occupancy array $\mathsf{A}$}. {Note however, that for a fixed dataset (either $\mathcal{D}_1$ or $\mathcal{D}_2$), we allow different users to contribute different numbers of samples.} We let $\mathsf{D}$ be the universal set of such databases{, with a fixed occupany array $\mathsf{A}$}. We say that $\mathcal{D}_1$ and $\mathcal{D}_2$ are ``user-level neighbours'' {(with common occupancy array $\mathsf{A}$)} if there exists $\ell_0\in [L]$ such that $\left(x_{g,\ell_0}: g\in \mathcal{G}\right)\neq \left(\tilde{x}_{g,\ell_0}: g\in \mathcal{G}\right)$, with $\left(x_{g,\ell}: g\in \mathcal{G}\right)= \left(\tilde{x}_{g,\ell}: g\in \mathcal{G}\right)$, for all $\ell\neq \ell_0$. Clearly, datasets $\mathcal{D}_1$ and $\mathcal{D}_2$ differ in at most $m_{\ell_0}$ samples, where $m_{\ell_0}\leq m^\star$, with $m^\star:= \max_{\ell\in \mathcal{L}} m_\ell$.
\begin{definition}
	For a fixed $\epsilon>0$, a mechanism $M: \mathsf{D}\to \mathbb{R}^d$ is said to be user-level $\epsilon$-DP if for every pair of datasets $\mathcal{D}_1, \mathcal{D}_2$ that are user-level neighbours, and for every measurable subset $Y \subseteq \mathbb{R}^d$, we have that
	\[
	\Pr[M(\mathcal{D}_1) \in Y] \leq e^\epsilon \Pr[M(\mathcal{D}_2) \in Y].
	\]
\end{definition}
Next, we recall the definition of the user-level sensitivity of a function of interest.
\begin{definition}
	Given a function $\theta: \mathsf{D}\to \mathbb{R}^d$, we define its user-level sensitivity $\Delta_\theta$ as
	\[
	\Delta_\theta:= \max_{\mathcal{D}_1,\mathcal{D}_2\ \text{u-l nbrs.}} \left \lVert \theta(\mathcal{D}_1) - \theta(\mathcal{D}_2)\right \rVert_1,
	\]
	where the maximization is over datasets that are user-level neighbours.
\end{definition}


In this paper, we use the terms ``sensitivity'' and ``user-level sensitivity'' interchangeably. The next result is well-known and follows from standard DP results \cite[Prop. 1]{dwork06}\footnote{It is well-known that it is sufficient to focus on noise-adding DP mechanisms. The assumption that our mechanisms are \emph{additive-noise} or \emph{noise-adding} mechanisms is without loss of generality, since it is known that every privacy-preserving mechanism can be thought of as a noise-adding mechanism (see \cite[Footnote 1]{staircase} and \cite{opt}). Moreover, under some regularity conditions, for small $\epsilon$ (or
	equivalently, high privacy requirements), it is known that Laplace distributed noise is asymptotically optimal in terms of the magnitude of error in estimation \cite{staircase,opt}. {While it is possible to replace the Laplace mechanism in our paper with, say, the staircase mechanism \cite{staircase} in order to add less noise for $\epsilon$-DP, we persist with the Laplace mechanism for its simplicity in implementation.}}:

\begin{theorem}
	\label{thm:dp}
	For a function $\theta: \mathsf{D}\to \mathbb{R}^d$, the mechanism $M^{\text{Lap}}: \mathsf{D}\to \mathbb{R}^d$ defined by
	\[
	M^{\text{Lap}}(\mathcal{D}) = \theta(\mathcal{D})+Z,
	\]
	where $Z = (Z_1,\ldots,Z_d)$ is such that $Z_i\sim \text{Lap}(\Delta_\theta/\epsilon)$, is user-level $\epsilon$-DP.
\end{theorem}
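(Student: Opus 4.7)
The plan is to prove Theorem \ref{thm:dp} by explicitly writing down the joint density of the output $M^{\text{Lap}}(\mathcal{D})$ and bounding the pointwise likelihood ratio between the two neighbouring datasets by $e^\epsilon$. Since the $d$ coordinates of $Z$ are independent $\text{Lap}(\Delta_\theta/\epsilon)$ random variables, the density of $M^{\text{Lap}}(\mathcal{D})$ at a point $z \in \mathbb{R}^d$ is
\[
p_{\mathcal{D}}(z) \;=\; \prod_{i=1}^{d} \frac{\epsilon}{2\Delta_\theta}\, \exp\!\left(-\frac{\epsilon\,|z_i - \theta_i(\mathcal{D})|}{\Delta_\theta}\right) \;=\; \left(\frac{\epsilon}{2\Delta_\theta}\right)^{d} \exp\!\left(-\frac{\epsilon\,\lVert z - \theta(\mathcal{D}) \rVert_1}{\Delta_\theta}\right),
\]
using that the sum of absolute values of coordinates is precisely $\lVert z - \theta(\mathcal{D}) \rVert_1$.

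The key step is then to take two user-level neighbours $\mathcal{D}_1, \mathcal{D}_2 \in \mathsf{D}$ and compare $p_{\mathcal{D}_1}(z)$ with $p_{\mathcal{D}_2}(z)$ at a common point $z$. The normalizing constants cancel, so
\[
\frac{p_{\mathcal{D}_1}(z)}{p_{\mathcal{D}_2}(z)} \;=\; \exp\!\left( \frac{\epsilon}{\Delta_\theta}\bigl(\lVert z - \theta(\mathcal{D}_2) \rVert_1 - \lVert z - \theta(\mathcal{D}_1) \rVert_1\bigr)\right).
\]
Applying the reverse triangle inequality in $\ell_1$, I can bound the difference of norms by $\lVert \theta(\mathcal{D}_1) - \theta(\mathcal{D}_2) \rVert_1$, which by the definition of user-level sensitivity is at most $\Delta_\theta$. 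This gives the pointwise bound $p_{\mathcal{D}_1}(z) \leq e^\epsilon\, p_{\mathcal{D}_2}(z)$ for every $z \in \mathbb{R}^d$.

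To conclude, for any measurable $Y \subseteq \mathbb{R}^d$, I would integrate both sides of the pointwise inequality:
\[
\Pr[M^{\text{Lap}}(\mathcal{D}_1) \in Y] \;=\; \int_Y p_{\mathcal{D}_1}(z)\, \mathrm{d}z \;\leq\; e^{\epsilon}\, \int_Y p_{\mathcal{D}_2}(z)\, \mathrm{d}z \;=\; e^{\epsilon}\, \Pr[M^{\text{Lap}}(\mathcal{D}_2) \in Y],
\]
which is exactly the user-level $\epsilon$-DP guarantee. There is no genuine obstacle here: the argument is the classical Laplace mechanism analysis of \cite{dwork06} transported verbatim into the user-level setting, with the only ``adjustment'' being that the sensitivity $\Delta_\theta$ is now taken over pairs of datasets differing in all of user $\ell_0$'s contributions (keeping the occupancy array $\mathsf{A}$ fixed), rather than in a single record. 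Since the proof only uses the value of $\Delta_\theta$ and not how it is defined, the identical algebra carries through. The only place one should be slightly careful is to note that ``user-level neighbours'' as defined in the paper share the occupancy array, so the universe $\mathsf{D}$ is well-defined and the sensitivity $\Delta_\theta$ is finite whenever $\theta$ maps into $\mathbb{R}^d$ with bounded ranges (which holds in our applications because $S_{g,\ell}^{(j)} \in [0,U]$).
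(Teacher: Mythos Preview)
Your proof is correct and is precisely the standard Laplace mechanism argument. The paper does not actually supply its own proof of this theorem; it simply states that the result ``is well-known and follows from standard DP results \cite[Prop.~1]{dwork06}'', so your write-up is exactly the derivation being cited.
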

For mechanisms as above, we also call $\epsilon$ as the ``privacy budget''. Furthermore, by standard results on the tail probabilities of Laplace random variables, we obtain the following bound on the estimation error due to the addition of noise for privacy:
\begin{proposition}
	\label{prop:esterror}
	For a given function $\theta:\mathsf{D}\to \mathbb{R}^d$ and for any dataset $\mathcal{D}_1$, we have that
	\[
	\Pr\left[\left \lVert M^{\text{Lap}}(\mathcal{D}_1) - \theta(\mathcal{D}_1)\right \rVert_1 \geq  \frac{\Delta_\theta \ln(1/\delta)}{\epsilon}\right]\leq \delta,
	\]
	for all $\delta\in (0,1]$.
\end{proposition}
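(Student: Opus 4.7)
The plan is to derive the tail bound directly from the distribution of the additive noise. By construction of $M^{\text{Lap}}$ in Theorem~\ref{thm:dp}, $M^{\text{Lap}}(\mathcal{D}_1) - \theta(\mathcal{D}_1) = Z$, so the event in question is $\{\|Z\|_1 \ge b \ln(1/\delta)\}$ with $b := \Delta_\theta/\epsilon$. From the Laplace p.d.f.\ recalled in Section~\ref{sec:prelim}, a single integration gives $\Pr[|Z_i| \ge t] = e^{-t/b}$ for all $t \ge 0$, so each $|Z_i|$ is exponentially distributed with rate $1/b$, and the $|Z_i|$'s are independent across $i$.

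For the scalar case $d = 1$, substituting $t = b\ln(1/\delta)$ into this exponential tail yields $\Pr[|Z_1| \ge b\ln(1/\delta)] = e^{-\ln(1/\delta)} = \delta$, which matches the claim with equality. For the vector case $d > 1$, the quantity $\|Z\|_1 = \sum_{i=1}^d |Z_i|$ is a sum of $d$ i.i.d.\ exponentials and is therefore Gamma-distributed with shape $d$ and scale $b$. I would obtain the tail bound by a Chernoff-style argument: independence gives the joint moment generating function
\[
\mathbb{E}\bigl[e^{\lambda \|Z\|_1}\bigr] \;=\; \prod_{i=1}^d \mathbb{E}\bigl[e^{\lambda |Z_i|}\bigr] \;=\; (1-\lambda b)^{-d}, \qquad \lambda \in (0, 1/b),
\]
and Markov's inequality then yields $\Pr[\|Z\|_1 \ge t] \le (1-\lambda b)^{-d}\, e^{-\lambda t}$, which I would optimize over $\lambda$ to produce a bound of the stated form at $t = b\ln(1/\delta)$; equivalently, one may integrate the Gamma$(d,b)$ density explicitly from $t$ to $\infty$, using the standard closed form $\Pr[\|Z\|_1 \ge t] = e^{-t/b} \sum_{k=0}^{d-1} (t/b)^k / k!$ and bounding the polynomial factor.

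The main obstacle I anticipate is precisely the vector case: a na\"ive coordinatewise union bound would control $\|Z\|_\infty$ rather than $\|Z\|_1$ and is thus misaligned with the $\ell_1$-norm statement. The MGF computation above is what tensorizes cleanly over independent coordinates, reducing the multivariate tail question to a one-dimensional optimization in $\lambda$; I would then verify that the optimal choice of $\lambda$ and the range of $\delta$ relevant to the paper's downstream use actually recover the claimed exponent $\ln(1/\delta)$ in the $\ell_1$-norm tail bound, with the scalar sub-case serving as a tightness check that the exponent cannot be improved in general.
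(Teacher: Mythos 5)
Your scalar case is exactly right: for $d=1$ the Laplace tail gives $\Pr[|Z_1|\ge b\ln(1/\delta)] = \delta$ with equality, and this one-line computation is essentially all the paper invokes (it cites ``standard results on the tail probabilities of Laplace random variables'' and supplies no further argument). The genuine problem is the vector case, and it is not one you can repair by choosing $\lambda$ more cleverly: the step ``optimize over $\lambda$ to produce a bound of the stated form'' would fail because for $d\ge 2$ the stated inequality is simply false. You already wrote down the exact tail of the Gamma$(d,b)$ law, $\Pr[\|Z\|_1\ge t] = e^{-t/b}\sum_{k=0}^{d-1}(t/b)^k/k!$; substituting $t=b\ln(1/\delta)$ gives $\delta\cdot\sum_{k=0}^{d-1}(\ln(1/\delta))^k/k!$, and the polynomial factor is strictly greater than $1$ for every $\delta\in(0,1)$ and $d\ge 2$ (for instance $d=2$, $\delta=0.1$ gives probability $\approx 0.33 > 0.1$). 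Since any Chernoff bound upper-bounds this exact quantity, no choice of $\lambda$ can bring it below $\delta$, and ``bounding the polynomial factor'' cannot succeed because that factor exceeds $1$. The correct multivariate statements require extra slack: either the $\ell_\infty$ version $\Pr[\|Z\|_\infty\ge b\ln(d/\delta)]\le\delta$ via a coordinatewise union bound (the form that appears in Dwork--Roth, Thm.~3.8), or an $\ell_1$ version whose threshold is on the order of $b\,(d+\ln(1/\delta))$, read off from your Gamma tail.

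So the concrete gap is that your plan promises to recover the claimed exponent in the $\ell_1$ tail for general $d$, which is impossible; the honest conclusion of your own computation is that the proposition holds only for $d=1$ (where, as you note, it is tight) or after weakening the norm or the threshold. In the paper's downstream use the mean and variance are released through two separate scalar Laplace mechanisms, each analyzed with its own sensitivity and budget, so the operative case is $d=1$ and nothing later depends on the vector form; but as a freestanding claim about $\mathbb{R}^d$-valued $\theta$ the statement cannot be proved, and you should flag it rather than try to force the MGF optimization through.
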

In the following subsection, we shall discuss the overall privacy loss that results from the composition of several user-level $\epsilon$-DP mechanisms together.

\subsection{Composition of User-Level DP Mechanisms}
Recall that our chief objective in this work is the (potentially sequential, or adaptive) release of a fixed function (in particular, the sample mean and sample variance) of the records  in each grid, over all grids. The following fundamental theorem from the DP literature \cite[Cor. 3.15]{dworkroth} captures the worst-case privacy loss degradation upon composition of (user-level) DP mechanisms. For each $g\in \mathcal{G}$, let $M_g: \mathsf{D}\to \mathbb{R}^d$ be an $\epsilon_g$-DP algorithm that acts exclusively on those records from grid $g$. Further, let $M = \left(M_g:\ g\in \mathcal{G}\right)$ be the composition of the $G$ mechanisms above.
\begin{theorem}[Basic Composition Theorem]
	\label{thm:comp}
	We have that $M$ is user-level $\sum_{g\in \mathcal{G}} \epsilon_g$-DP.
\end{theorem}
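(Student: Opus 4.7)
The plan is to proceed by induction on $G = |\mathcal{G}|$, mirroring the standard proof of the Basic Composition Theorem \cite[Cor. 3.15]{dworkroth} but kept inside the user-level setting with a fixed occupancy array $\mathsf{A}$. I fix an arbitrary pair of user-level neighboring datasets $\mathcal{D}_1, \mathcal{D}_2$ with common occupancy array $\mathsf{A}$; by definition, they agree except in the contributions of some single user $\ell_0$. The base case $G = 1$ is immediate from the user-level $\epsilon_1$-DP of $M_1$ applied to such a pair.

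For the inductive step, I assume the claim holds for every composition of $G-1$ such mechanisms, with total privacy budget $\sum_{g=1}^{G-1} \epsilon_g$. Write $M_{-G} := (M_1, \ldots, M_{G-1})$, let $\mu_i$ denote the law of $M_{-G}(\mathcal{D}_i)$ for $i \in \{1,2\}$, and for any measurable $Y \subseteq (\mathbb{R}^d)^G$ write its section $Y_{\mathbf{y}_{-G}} := \{y_G \in \mathbb{R}^d : (\mathbf{y}_{-G}, y_G) \in Y\}$. Conditioning on the tuple of earlier outputs then gives
\begin{equation*}
	\Pr[M(\mathcal{D}_1) \in Y] = \int \Pr\bigl[M_G(\mathcal{D}_1) \in Y_{\mathbf{y}_{-G}} \mid M_{-G}(\mathcal{D}_1) = \mathbf{y}_{-G}\bigr] \, d\mu_1(\mathbf{y}_{-G}).
\end{equation*}
Since $M_G$ is $\epsilon_G$-DP for every realization of the prior outputs it might adapt to, the inner conditional probability is bounded above by $e^{\epsilon_G}$ times the analogous quantity under $\mathcal{D}_2$, uniformly in $\mathbf{y}_{-G}$. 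The inductive hypothesis applied to $M_{-G}$ then bounds the resulting integral against $\mu_1$ by $e^{\sum_{g=1}^{G-1} \epsilon_g}$ times its counterpart against $\mu_2$, and multiplying the two factors yields the claimed $e^{\sum_{g=1}^{G} \epsilon_g}\,\Pr[M(\mathcal{D}_2) \in Y]$, closing the induction.

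The main obstacle is purely measure-theoretic: in the adaptive setting, one must argue that the inner conditional probability is a Borel-measurable function of $\mathbf{y}_{-G}$ and that the $\epsilon_G$-DP guarantee of $M_G$ can be applied uniformly inside the integral. Both follow from the standard assumption that each $M_g$ is realized as a Markov kernel satisfying the $\epsilon_g$-DP bound for every measurable output set, pointwise in its auxiliary (adaptive) input. I emphasize that this proof never exploits the fact that $M_g$ acts \emph{exclusively} on records from grid $g$; that extra structure is precisely what enables the sharper bound hinted at in the footnote preceding Theorem \ref{thm:improvcomp}, but is not needed for the multiplicative statement here.
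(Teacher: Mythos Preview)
Your proof is correct and follows the standard inductive argument for basic composition; there is nothing to compare against, because the paper does not give its own proof of Theorem~\ref{thm:comp}. It simply cites it as a known result from \cite[Cor.~3.15]{dworkroth} and moves on. Your write-up is in fact slightly more careful than needed here: you treat the fully adaptive case and flag the measure-theoretic regularity of the conditional kernel, whereas in the paper's setup each $M_g$ is a fixed Laplace-type mechanism acting on grid $g$, so the non-adaptive version (where the joint law factors as a product and the bound follows by multiplying the marginal ratios) would already suffice. Your closing remark that the disjoint-grid structure is unused is accurate and is exactly the slack the paper later exploits in Theorem~\ref{thm:improvcomp}.
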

It is well-known (see, e.g., \cite[Sec. 2.1]{steinkechapter}) that Theorem \ref{thm:comp} is tight, in that there exists a \emph{Laplace mechanism} (of the form in Theorem \ref{thm:dp}) that achieves a privacy loss of $\sum_{g\in \mathcal{G}} \epsilon_g$ upon composition. 
%

Observe from Theorem \ref{thm:comp} that in the case when $\epsilon_g = \epsilon$, for all $g\in \mathcal{G}$, we obtain an overall privacy loss of $G\epsilon$, upon composition. Clearly, when the number of grids $G$ is large, the overall privacy loss is  large, as well. 

We next present a simple improvement of the Basic Composition Theorem above that takes into account the fact that each mechanism $M_g$, $g\in \mathcal{G}$, acts only on the records in the grid $g$. {This improvement is a straightforward extension of the well-known ``parallel composition theorem" (see, e.g., \cite[Thm. 4]{pinq}) to the case when users contribute potentially more than one sample.} Let $\overline{\epsilon}:= \max_{\ell\in \mathcal{L}} \sum_{g\in \mathcal{G}_\ell} \epsilon_g$.

\begin{theorem}
	\label{thm:improvcomp}
	We have that $M$ is user-level $\overline{\epsilon}$-DP.
\end{theorem}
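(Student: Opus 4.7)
The plan is to reduce the claim to the Basic Composition Theorem (Theorem \ref{thm:comp}) by isolating, for each potentially differing user, precisely the subset of grid-level mechanisms whose inputs are actually affected. First I would fix two user-level neighbors $\mathcal{D}_1, \mathcal{D}_2 \in \mathsf{D}$ that agree everywhere except on the contributions of some user $\ell_0 \in \mathcal{L}$, and then observe that for every grid $g \notin \mathcal{G}_{\ell_0}$ we have $m_{g,\ell_0} = 0$, so that the grid-$g$ records of $\mathcal{D}_1$ and $\mathcal{D}_2$ coincide. Hence $M_g(\mathcal{D}_1)$ and $M_g(\mathcal{D}_2)$ are identically distributed for each such $g$, and these coordinates contribute no privacy loss.

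Next, I would apply the Basic Composition Theorem to the restricted tuple $M^{(\ell_0)} := (M_g : g \in \mathcal{G}_{\ell_0})$. For each $g \in \mathcal{G}_{\ell_0}$, the restrictions of $\mathcal{D}_1$ and $\mathcal{D}_2$ to grid $g$ remain user-level neighbors (differing only in user $\ell_0$'s records), so by the $\epsilon_g$-DP of $M_g$ and Theorem \ref{thm:comp} applied to the $|\mathcal{G}_{\ell_0}|$ affected mechanisms, $M^{(\ell_0)}$ is $\left(\sum_{g \in \mathcal{G}_{\ell_0}} \epsilon_g\right)$-DP with respect to changes in user $\ell_0$'s records. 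Combined with the observation that the remaining coordinates preserve their distribution, and that the Laplace noise driving different $M_g$'s is independent (so that one can condition on the values of the unaffected coordinates and apply the bound for $M^{(\ell_0)}$ on the corresponding section of the output event), it follows that for every measurable $Y \subseteq (\mathbb{R}^d)^G$,
\[
\Pr[M(\mathcal{D}_1) \in Y] \leq \exp\!\left(\sum_{g \in \mathcal{G}_{\ell_0}} \epsilon_g\right) \Pr[M(\mathcal{D}_2) \in Y].
\]

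Finally, the exponent is at most $\max_{\ell \in \mathcal{L}} \sum_{g \in \mathcal{G}_\ell} \epsilon_g = \overline{\epsilon}$, irrespective of which user $\ell_0$ happens to be the differing one, so taking the supremum over all user-level-neighbor pairs $(\mathcal{D}_1, \mathcal{D}_2)$ (and hence over the choice of $\ell_0$) yields the user-level $\overline{\epsilon}$-DP guarantee for $M$. The only mildly delicate step is the passage from product events to arbitrary measurable subsets of the product space, but this is routine once the coordinatewise independence of the Laplace noise is in hand, and mirrors the reasoning underlying the parallel composition theorem \cite[Thm. 4]{pinq}; I do not anticipate any genuine obstacle.
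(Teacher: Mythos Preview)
Your proposal is correct and follows essentially the same approach as the paper: both arguments fix a differing user $\ell_0$, observe that grids $g \notin \mathcal{G}_{\ell_0}$ see identical records and hence contribute no privacy loss, and then bound the remaining ratio by $\exp\!\big(\sum_{g \in \mathcal{G}_{\ell_0}} \epsilon_g\big) \le e^{\overline{\epsilon}}$. The only cosmetic difference is that you invoke Theorem~\ref{thm:comp} as a black box on the affected sub-tuple, whereas the paper writes out the conditional-probability factorization of the full ratio directly; your acknowledgment of the product-set-to-general-set passage is, if anything, slightly more careful than the paper's own argument.
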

\begin{proof}
	Consider datasets $\mathcal{D}$ and $\mathcal{D}'$ that differ (exclusively) in the contributions of user $\ell\in \mathcal{L}$. Now, consider any measurable set $T = \left(T_1,\ldots,T_G\right)\subseteq \mathbb{R}^G$. For ease of reading, we let $M^{(g-1)}(\mathcal{D}):= (M_1(\mathcal{D}),\ldots,M_{g-1}(\mathcal{D})))$; likewise, we let $T^{(g-1)}:= (T_1,\ldots,T_{g-1})$.
	\begin{align*}
		&\frac{\Pr[M(\mathcal{D})\in T]}{\Pr[M(\mathcal{D}')\in T]}\\
		&= \frac{\prod_{g\in \mathcal{G}}\Pr[M_g(\mathcal{D})\in T_g\vert M^{(g-1)}(\mathcal{D})\in T^{(g-1)}]}{\prod_{g\in \mathcal{G}}\Pr[M_g(\mathcal{D}')\in T_g\vert M^{(g-1)}(\mathcal{D}')\in T^{(g-1)}]}\leq e^{\sum_{g\in \mathcal{G}_\ell} \epsilon_g},
	\end{align*}
	where the last inequality follows from the DP property of each mechanism $M_g$, $g\in \mathcal{G}$. The result then follows immediately.
\end{proof}
As a simple corollary, from our assumption that $G_1\geq G_2\geq \ldots \geq G_L$, we obtain the following result:
\begin{corollary}
	\label{cor:naive}
	When $\epsilon_g = \epsilon$, for all $g\in \mathcal{G}$, we have that $M$ is $G_1\epsilon$-DP.
\end{corollary}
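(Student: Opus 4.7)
The plan is to observe that this corollary is an immediate specialization of Theorem \ref{thm:improvcomp} under the uniform privacy budget assumption $\epsilon_g = \epsilon$ for all $g \in \mathcal{G}$. First, I would substitute $\epsilon_g = \epsilon$ directly into the definition of $\overline{\epsilon}$, obtaining
\[
\overline{\epsilon} = \max_{\ell\in \mathcal{L}} \sum_{g\in \mathcal{G}_\ell} \epsilon_g = \epsilon \cdot \max_{\ell\in \mathcal{L}} |\mathcal{G}_\ell| = \epsilon \cdot \max_{\ell\in \mathcal{L}} G_\ell.
\]
Then I would invoke the indexing convention declared in the Problem Formulation subsection, namely $G_1 \geq G_2 \geq \ldots \geq G_L$, which immediately identifies $\max_{\ell \in \mathcal{L}} G_\ell = G_1$. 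Substituting this back, $\overline{\epsilon} = G_1 \epsilon$, and applying Theorem \ref{thm:improvcomp} yields that $M$ is user-level $G_1\epsilon$-DP.

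There is essentially no obstacle here: the entire content of the corollary is an arithmetic unpacking of the maximum in $\overline{\epsilon}$ together with the ordering convention on $\{G_\ell\}$. The only minor care to take is to make it explicit that, since $G_\ell$ counts the number of grids to which user $\ell$ contributes, the quantity $\sum_{g\in\mathcal{G}_\ell} \epsilon$ is exactly $\epsilon\, G_\ell$, so that the maximum over $\ell$ is governed by the user with the largest grid participation, which by our convention is user $1$. No further invocation of composition mechanics is needed beyond Theorem \ref{thm:improvcomp}.
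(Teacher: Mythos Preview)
Your proposal is correct and matches the paper's approach exactly: the paper presents this as an immediate corollary of Theorem~\ref{thm:improvcomp} together with the ordering assumption $G_1\geq G_2\geq \ldots \geq G_L$, and does not give any further argument beyond that.
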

In what follows, we shall focus on this simplified setting where the privacy loss $\epsilon_g$ for each grid $g$ is fixed to be $\epsilon>0$. Note that if $G_1$ is large, the privacy loss upon composing the mechanisms corresponding to the different grids is correspondingly large. 

A natural question that arises, hence is: can we improve the worst-case privacy loss (in the sense of Corollary \ref{cor:naive}) in such a manner as to preserve some natural notion of the worst-case error over all grids? In what follows, we shall show that for a specific class of (canonical) mechanisms, a notion of the worst-case error over all grids can be made precise and {exact, analytical expressions for this worst-case error} will then aid in the design of our algorithm that improves the privacy loss degradation by suppressing {or completely removing selected} user contributions.

We end this subsection with a remark. In the setting of \emph{item-level} DP, where each user contributes at most one sample, it follows from Theorem \ref{cor:naive} that the composition of mechanisms that act on \emph{disjoint} subsets of a dataset has the same privacy loss as that of any individual mechanism, i.e., $M$ is $\epsilon$-DP as well. In such a setting, it is not possible to improve on the privacy loss degradation by suppressing contributions of selected users.


The next section describes the mechanisms that will be of use in this paper; we refer the reader to \cite{userlevel, dp_preprint} for more user-level DP mechanisms for releasing sample means and their performance on real-world datasets.




%
%

\section{Mechanisms for Releasing DP Estimates}
\label{sec:pseudo}
In this section and the next, we focus our attention on a single grid $g\in \mathcal{G}$. For notational simplicity, we shall drop the explicit dependence of the notation (via superscripts) in Section \ref{sec:prelim} on $g$; alternatively, it is instructive to consider this setting as a special case of the setting in Section \ref{sec:prelim}, where $|\mathcal{G}| = 1$. In particular, we let $m_{g,\ell} =: m_\ell$, for all $\ell\in \mathcal{L}$, $\mathcal{L}_g =: \mathcal{L}$, $\mu_g =: \mu$, and $\textsf{Var}_g =: \textsf{Var}$. With some abuse of notation, we let $\mathcal{D}$ denote the dataset consisting of records in grid $g$ and let $\mathsf{D}$ denote the universal set of datasets with the distribution $\{m_\ell\}$ of user contributions.

We now describe two mechanisms for releasing user-level differentially private estimates of the sample mean and variance of a single grid. {We shall also explicitly identify analytical expressions for the sensitivities of the estimators used.}
\subsection{\textsc{Baseline}}
\label{sec:baseline}
Given the definitions $\mu$ and $\textsf{Var}$ as in \eqref{eq:meang} and \eqref{eq:varg}, the first mechanism, which we call \textsc{Baseline}, simply adds the right amount of Laplace noise to $\mu$ and $\textsf{Var}$ to ensure user-level $\epsilon$-DP. Formally, the \textsc{Baseline} mechanism $M_\text{b}:\mathsf{D}\to \mathbb{R}^2$ obeys
\[
M_\text{b}(\mathcal{D}) = \begin{bmatrix}
	M_{\mu, \text{b}}(\mathcal{D})\\
	M_{\textsf{Var}, \text{b}}(\mathcal{D})
\end{bmatrix},
\]
where
\[
M_{\mu, \text{b}}(\mathcal{D}) = \mu(\mathcal{D})+\text{Lap}(2\Delta_{\mu}/\epsilon),
\]
and
\[
M_{\textsf{Var},\text{b}}(\mathcal{D}) = \textsf{Var}(\mathcal{D})+\text{Lap}(2\Delta_{\textsf{Var}}/\epsilon).
\]
Note that the privacy budget for the release of each of the sample mean and variance is fixed to $\epsilon/2$, leading to $M_\text{b}$ being $\epsilon$-user-level DP, overall, by Theorem \ref{thm:comp}. {We mention that one can also consider mechanisms $M_{\mu, \text{b}}$ and $M_{\textsf{Var},\text{b}}$ with different privacy budgets $\epsilon_1>0$ and $\epsilon_2>0$ such that $\epsilon_1+\epsilon_2 = \epsilon$; one can then carry out an optimization over $\epsilon_1, \epsilon_2$ to obtain the best (or lowest) worst-case error (see Section \ref{sec:variance}).}
	
	\subsubsection{User-Level sensitivities of $\mu$ and $\textsf{Var}$}
	{Assuming that the privacy budgets of $M_{\mu, \text{b}}$ and $M_{\textsf{Var},\text{b}}$ are each $\epsilon/2$, we have from the definition of user-level sensitivity in Section \ref{sec:prelim} that}
\begin{equation}
\label{eq:deltamean}
\Delta_{\mu} = \frac{U\cdot \max_{\ell\in \mathcal{L}} m_\ell}{\sum\limits_{\ell\in {\mathcal{L}}}{m_\ell}}
= \frac{U\cdot m^\star}{\sum\limits_{\ell\in {\mathcal{L}}} m_\ell}.
\end{equation}
An explicit computation of the user-level sensitivity $\Delta_{\textsf{Var}}$ of \textsf{Var}, however, requires significantly more effort. The next proposition exactly identifies $\Delta_{\textsf{Var}}$.
\begin{proposition}
	\label{prop:varsens}
	We have that $$\Delta_{\textsf{Var}} = \begin{cases}
		\frac{U^2\ m^\star(\sum_\ell  m_\ell - m^\star)}{\left(\sum_\ell m_\ell\right)^2},\ \text{if } \sum_\ell m_\ell> 2m^\star,\\
		\frac{U^2}{4},\ \text{if $\sum_\ell m_\ell\leq 2m^\star$ and $\sum_\ell m_\ell$ is even},\\
		\frac{U^2}{4}\cdot\left(1-\frac{1}{(\sum_\ell m_\ell)^2}\right),\ \text{if $\sum_\ell m_\ell\leq 2m^\star$ and $\sum_\ell m_\ell$ is odd}.
	\end{cases}.$$
\end{proposition}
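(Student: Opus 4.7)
The plan is to directly compute the maximum of $|\textsf{Var}(\mathcal{D}_1) - \textsf{Var}(\mathcal{D}_2)|$ over all user-level neighbouring pairs, reducing it to a one-variable integer optimization. I would start from the identity
\[
\textsf{Var}(\mathcal{D}) = \frac{1}{N}\sum_{\ell\in \mathcal{L}}\sum_{j=1}^{m_\ell}\left(S_{\ell}^{(j)}\right)^2 - \mu(\mathcal{D})^2,
\]
where $N := \sum_\ell m_\ell$, and fix the user $\ell_0$ whose contributions differ, writing $m := m_{\ell_0}$. Letting $A, A_2$ denote the sum and sum of squares of $\ell_0$'s samples in $\mathcal{D}_1$, letting $B, B_2$ be the corresponding quantities in $\mathcal{D}_2$, and letting $R$ denote the sum of the other users' samples (the same in both datasets), a short calculation gives
\[
\textsf{Var}(\mathcal{D}_1) - \textsf{Var}(\mathcal{D}_2) = \frac{A_2 - B_2}{N} - \frac{(A-B)(2R + A + B)}{N^2},
\]
with the sums of squares of the other users' samples cancelling.

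Next, I would carry out a sequence of reductions on the right-hand side. The symmetry $(A, A_2) \leftrightarrow (B, B_2)$ negates the expression, so it suffices to maximize it (rather than its absolute value) under the assumption $A \geq B$. The expression is monotonically decreasing in $B_2$ and in $R$, while $B_2$ is bounded from below by $B^2/m$ (Cauchy--Schwarz) and $R$ from below by $0$; so the maximum is attained at $B_2 = B^2/m$ and $R = 0$. Substituting these in and differentiating in $B$ would then show, using $m \leq N$, that the optimum further satisfies $B = 0$. This leaves the problem of maximizing $A_2/N - A^2/N^2$ subject to $A_2 \leq UA$ (which follows from $S^{(j)} \leq U$) and $A \in [0, mU]$.

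To finish the reduction, I would parametrize a maximizing configuration for $\ell_0$'s samples as $k$ values equal to $U$, one value $r \in [0, U]$, and the remaining $m - k - 1$ values equal to $0$, so that $A = kU + r$ and $A_2 = kU^2 + r^2$. The resulting objective is convex in $r$ for each fixed $k$, so its maximum over $r \in [0, U]$ is attained at $r = 0$ or $r = U$; in either case $A$ becomes an integer multiple of $U$, absorbing $r$ into $k$ or $k + 1$. The problem therefore reduces to maximizing $g(k) := k U^2 (N - k)/N^2$ over integers $k \in \{0, 1, \ldots, m\}$, and one should take $m = m^\star$ since $g$ is increasing on $[0, N/2]$ and this enlarges the feasible set of $k$.

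Finally, I would solve the scalar concave integer program. The continuous maximizer of $g$ is at $k = N/2$. When $N > 2m^\star$, the constraint $k \leq m^\star$ binds and $g$ is maximized at $k = m^\star$, yielding $U^2 m^\star(N - m^\star)/N^2$. When $N \leq 2m^\star$ and $N$ is even, $k = N/2$ is feasible and integer, so $g(N/2) = U^2/4$. When $N \leq 2m^\star$ and $N$ is odd, the best integer choice is $k = (N \pm 1)/2$, yielding $(U^2/4)(1 - 1/N^2)$. The main obstacle I anticipate is the careful justification of the $B = 0$ reduction and the verification that each endpoint of the $r$-interval corresponds to an integer $k$; these steps rely on exploiting the precise interactions between the constraints $A_2 \leq UA$, $B_2 \geq B^2/m$, and $m \leq N$, and must be done cleanly to avoid losing constants in the three-way case split.
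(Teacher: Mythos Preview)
Your approach is correct and reaches the result by a genuinely different route from the paper. The paper proceeds structurally: via a conditional-variance decomposition it shows (Lemma~\ref{lem:app1}) that the minimizing neighbour $\mathcal{D}_2$ sets all of user $k$'s samples equal to the mean $\nu(A^c)$ of the remaining samples, then reduces $\Delta_{\textsf{Var}}$ to $\max \textsf{Var}(\mathcal{D})$ over datasets in which all non-$k$ samples share a common value (Lemma~\ref{lem:app2}), and finishes with the Bhatia--Davis inequality plus a parity argument. Your route is algebraic: expand the difference as $\frac{A_2-B_2}{N}-\frac{(A-B)(A+B+2R)}{N^2}$ and strip off variables by monotonicity and convexity until only the concave integer program $g(k)=U^2k(N-k)/N^2$ remains. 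The intermediate targets coincide --- your $R=0,\ B=B_2=0$ is exactly the paper's $\nu(A^c)=0,\ \tilde{S}_k^{(j)}=0$ --- but you reach them by direct calculus rather than conditional expectations, and you replace Bhatia--Davis by the elementary observation that $A_2/N-A^2/N^2$ is convex in each sample coordinate. Your argument is more self-contained; the paper's is more conceptual and is reused almost verbatim for the clipping-error bound in Theorem~\ref{thm:varworst}.

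One small point to tighten: the swap $(A,A_2)\leftrightarrow(B,B_2)$ gives $\max|f|=\max f$, but it does not by itself let you assume $A\ge B$ at a maximizer of $f$ (swapping a maximizer with $A<B$ produces a \emph{minimizer}, not another maximizer). You need one further symmetry: the reflection $s\mapsto U-s$ of every sample preserves both variances, hence $f$, while sending $A-B\mapsto -(A-B)$; so any maximizer with $A<B$ reflects to one with $A>B$. With that in place, the step you flagged as the main obstacle is clean --- after $R=0$ and $B_2=B^2/m$ one has $\partial f/\partial B=2B(m-N)/(mN^2)\le 0$ from $m\le N$ alone --- and the convexity-in-$r$ endpoint analysis works exactly as you describe.
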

{The proof of Proposition \ref{prop:varsens} follows from a couple of helper lemmas. In what follows, we shall discuss these lemmas, whose proofs in turn are provided in Appendices \ref{app:lem-app-1} and \ref{app:lem-app-2}. The proof of Proposition \ref{prop:varsens} is then concluded in Appendix \ref{app:versens}. Before we do so, we examine some of its consequences. Importantly, we} obtain the following corollary on the sensitivity of the sample variance function in the item-level DP setting where each user contributes exactly one sample, i.e., when $m_\ell = 1$, for all $\ell\in \mathcal{L}$. We mention that this exact sensitivity expression for the item-level DP setting was derived in \cite[Lemma A.2]{dorazio}.
\begin{corollary}
	\label{cor:sensvaritem}
	In the setting of item-level DP, we have that for $L\geq 1$,
	$$\Delta_{\textsf{Var}} =
		\frac{U^2(L-1)}{L^2}.$$
\end{corollary}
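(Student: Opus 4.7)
The plan is to derive the corollary as a direct specialization of Proposition \ref{prop:varsens} under the item-level hypothesis $m_\ell = 1$ for every $\ell \in \mathcal{L}$. With this substitution, $m^\star = \max_{\ell \in \mathcal{L}} m_\ell = 1$ and $\sum_{\ell} m_\ell = L$, so the three cases of Proposition \ref{prop:varsens} are distinguished by whether $L > 2$, $L = 2$, or $L = 1$. The first step is therefore to identify which branch of the piecewise expression applies for each regime of $L$, and the second step is to check that, despite being stated as separate cases in Proposition \ref{prop:varsens}, all three collapse to the single closed form $U^2 (L-1)/L^2$.

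For $L \geq 3$ we fall in the first branch ($\sum_\ell m_\ell > 2 m^\star$), and plugging in $m^\star = 1$ and $\sum_\ell m_\ell = L$ gives $\Delta_{\textsf{Var}} = U^2 \cdot 1 \cdot (L-1)/L^2$, which matches the claim verbatim. For $L = 2$ we are in the second branch ($\sum_\ell m_\ell \leq 2 m^\star$ with $\sum_\ell m_\ell$ even), so Proposition \ref{prop:varsens} returns $U^2/4$, and this equals $U^2(L-1)/L^2$ evaluated at $L = 2$. For $L = 1$ we are in the third branch (odd $\sum_\ell m_\ell$), giving $(U^2/4)(1 - 1/1) = 0$, which coincides with $U^2(L-1)/L^2$ at $L = 1$. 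Thus the only thing to do is assemble these three checks into a short, uniform statement.

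The argument involves no real obstacle; there are no inequalities to prove and no extremal dataset to construct beyond what Proposition \ref{prop:varsens} already provides. The only subtlety worth flagging is making the reader notice that the apparently distinct small-$L$ branches of Proposition \ref{prop:varsens} are in fact consistent with the clean formula $U^2(L-1)/L^2$, which justifies writing the corollary with a single case rather than inheriting the three-case structure of the proposition.
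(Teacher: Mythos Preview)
Your proposal is correct and follows exactly the approach the paper takes: the corollary is stated as an immediate specialization of Proposition~\ref{prop:varsens} with $m_\ell = 1$ for all $\ell$, and the paper gives no separate proof beyond this. Your case-by-case verification that all three branches of the proposition collapse to $U^2(L-1)/L^2$ is precisely the computation implicit in the paper's one-line derivation.
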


On the other hand, the well-known upper bound on the sensitivity of the sample variance in \cite[p. 10]{dwork06} that is now standard for DP applications shows that in the item-level DP setting, $$\Delta_{\textsf{Var}} \leq \frac{8U^2}{L}.$$ Clearly, the exact sensitivity computed in \cite[Lemma A.2]{dorazio} and in Corollary \ref{cor:sensvaritem} is a \emph{strict} improvement over this bound, by a multiplicative factor of more than $8$, for all $L$.

Now, consider the expression in Proposition \ref{prop:varsens} above, for a \emph{fixed} $\sum_\ell m_\ell$. Suppose also that $\sum_\ell m_\ell>2m^\star$. 
Hence, for this range of $m^\star$ values, it is easy to argue that $h(m^\star):= m^\star(\sum_\ell  m_\ell - m^\star)$ is increasing in $m^\star$, implying that for a fixed value of $\sum_\ell m_\ell$, we have that $\Delta_{\textsf{Var}}$ is increasing in $m^\star$, in the regime where $\sum_\ell m_\ell>2m^\star$. {Furthermore, it can be argued that for a fixed positive real number $a$ we have that
\[
\overline{h}(a_1) = \frac{a_1(a-a_1)}{a^2}\leq \frac{1}{4},
\]
 for $0<a_1\leq a$. This then implies that} $\Delta_{\textsf{Var}}\leq U^2/4$, for all values of $\{m_\ell\}$, implying that $\Delta_{\textsf{Var}}$ is non-decreasing, overall, as $m^\star$ increases. In other words, a large value of $m^\star$ leads to a large sensitivity. In our next mechanism called \textsc{Clip}, which is the subject of Section \ref{sec:array-av}, we attempt to ameliorate this issue by clipping the number of contributions of each user in the grid, at the cost of some error in accuracy. 

	We shall now proceed to lay out the component lemmas that help prove Proposition \ref{prop:varsens}. Before we do so, we shall set up some notation. Recall from the definition of user-level sensitivity in Section \ref{sec:prelim} that 
	\[
	\Delta_{\textsf{Var}} = \max_{\mathcal{D} \sim \mathcal{D}'} \left \lvert \textsf{Var}(\mathcal{D}) - \textsf{Var}(\mathcal{D}')\right \rvert,
	\]
	where $\textsf{Var}$ is as in \eqref{eq:varg}, and the notation $\mathcal{D} \sim \mathcal{D}'$ refers to the fact that $\mathcal{D}$ and $\mathcal{D}'$ are user-level neighbours, for $\mathcal{D},\mathcal{D}'\in \mathsf{D}$. Moreover, without loss of generality, for the purpose of evaluating $\Delta_{\textsf{Var}}$, we can assume that $\textsf{Var}(\mathcal{D}')\leq \textsf{Var}(\mathcal{D})$ in the expression for $\Delta_{\textsf{Var}}$. Now, let $$\mathsf{D}_\text{max} = \left\{(\mathcal{D},\mathcal{D}'): (\mathcal{D},\mathcal{D}')\in \argmax_{\mathcal{D} \sim \mathcal{D}'} \left \lvert \textsf{Var}(\mathcal{D}) - \textsf{Var}(\mathcal{D}')\right \rvert\right\}$$ be the collection of pairs of neighbouring datasets that attain the maximum in the definition of $\Delta_{\textsf{Var}}$. In what follows, we shall exactly determine $\Delta_{\textsf{Var}}$ by identifying the structure of \emph{one} pair  $(\mathcal{D}_1, \mathcal{D}_2)\in 
	\mathsf{D}_\text{max}$ of neighbouring datasets.
	
	Suppose that $\mathcal{D}_1, \mathcal{D}_2$ as above differ (exclusively) in the sample values contributed by user $k\in [L]$. Let $\left\{S_\ell^{(j)}\right\}$ denote the samples in dataset $\mathcal{D}_1$ and $\left\{\tilde{S}_\ell^{(j)}\right\}$ denote the samples in dataset $\mathcal{D}_2$. Let $\nu$ and $\tilde{\nu}$ be respectively the sample means of $\left\{S_\ell^{(j)}\right\}$  and $\left\{\tilde{S}_\ell^{(j)}\right\}$. Let ${A}:= \left\{{S}_k^{(j)}:\ j\in [m_k]\right\}$ and $\tilde{A}:= \left\{\tilde{S}_k^{(j)}:\ j\in [m_k]\right\}$ be the samples contributed by user $k$ in $\mathcal{D}_1$ and $\mathcal{D}_2$, respectively. Further, let 
	\[
	{\nu}(A):= \frac{1}{m_k}\cdot \sum_{j=1}^{m_k} {S}_k^{(j)}
	\ \  \text{and}\ \ 
	{\nu}(\tilde{A}):= \frac{1}{m_k}\cdot \sum_{j=1}^{m_k} \tilde{S}_k^{(j)}
	\]
	be the means of the samples in $A$ and $\tilde{A}$, respectively. Similarly, let
	\[
	{\nu}(A^c):= \frac{1}{\sum_{\ell\neq k}m_\ell}\cdot \sum_{\ell\neq k} \sum_{j=1}^{m_\ell} {S}_\ell^{(j)}
	\ \  \text{and}\ \ 
	{\nu}(\tilde{A}^c):= \frac{1}{\sum_{\ell\neq k}m_\ell}\cdot \sum_{\ell\neq k} \sum_{j=1}^{m_\ell} \tilde{S}_\ell^{(j)},
	\]
	where we define $A^c$ to be those samples contributed by the users other than user $k$ in $\mathcal{D}_1$, and similarly, for $\tilde{A}^c$. By the definition of the datasets $\mathcal{D}_1$ and $\mathcal{D}_2$, we have that $A^c = \tilde{A}^c$ and hence $\tilde{\nu}(A^c) = \nu(A^c)$. We then have that the following lemma, whose proof is provided in Appendix \ref{app:lem-app-1}, holds.
	\begin{lemma}
		\label{lem:app1}
		There exists $(\mathcal{D}_1,\mathcal{D}_2)\in \mathsf{D}_\text{max}$ such that
		\[
		{\nu}(\tilde{A}) = {\nu}(A^c).
		\]
		Furthermore, we can choose $\tilde{S}_k^{(1)} = \ldots = \tilde{S}_k^{(m_k)} = {\nu}(\tilde{A})$, in $\mathcal{D}_2$.
	\end{lemma}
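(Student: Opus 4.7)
The plan is to exploit the standard decomposition of the pooled sample variance into within-group and between-group components, with the two groups taken to be $A$ (user $k$'s samples) and $A^c$ (the contributions of the remaining users). Writing $M := \sum_\ell m_\ell$, one has
\[
\textsf{Var}(\mathcal{D}_1) = \frac{m_k}{M}\,\textsf{Var}(A) + \frac{M-m_k}{M}\,\textsf{Var}(A^c) + \frac{m_k(M-m_k)}{M^2}\bigl(\nu(A) - \nu(A^c)\bigr)^2,
\]
and the analogous identity for $\textsf{Var}(\mathcal{D}_2)$ with $A$ replaced by $\tilde{A}$. This follows by a direct calculation noting that the pooled mean equals $\bar{x} = \tfrac{m_k\,\nu(A) + (M-m_k)\,\nu(A^c)}{M}$ and regrouping the squared deviations around $\bar{x}$.

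Since $A^c = \tilde{A}^c$ by hypothesis, the $\textsf{Var}(A^c)$ contribution is identical for $\mathcal{D}_1$ and $\mathcal{D}_2$, so subtracting the two decompositions yields
\[
\textsf{Var}(\mathcal{D}_1) - \textsf{Var}(\mathcal{D}_2) = \frac{m_k}{M}\bigl[\textsf{Var}(A) - \textsf{Var}(\tilde{A})\bigr] + \frac{m_k(M-m_k)}{M^2}\Bigl[\bigl(\nu(A) - \nu(A^c)\bigr)^2 - \bigl(\nu(\tilde{A}) - \nu(A^c)\bigr)^2\Bigr].
\]
Assuming without loss of generality that $\textsf{Var}(\mathcal{D}_1) \geq \textsf{Var}(\mathcal{D}_2)$, maximizing $|\textsf{Var}(\mathcal{D}_1) - \textsf{Var}(\mathcal{D}_2)|$ over user-level neighbours is equivalent to maximizing this right-hand side, which depends on $\tilde{A}$ only through the two non-negative quantities $\textsf{Var}(\tilde{A})$ and $\bigl(\nu(\tilde{A}) - \nu(A^c)\bigr)^2$, both of which enter with negative coefficients and thus should be made as small as possible.

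I would then observe that both quantities can be driven to zero \emph{simultaneously} by choosing $\tilde{S}_k^{(1)} = \cdots = \tilde{S}_k^{(m_k)} = \nu(A^c)$: making all of user $k$'s samples in $\mathcal{D}_2$ equal forces $\textsf{Var}(\tilde{A}) = 0$, and fixing their common value to be $\nu(A^c)$ forces $\nu(\tilde{A}) = \nu(A^c)$, zeroing out the second bracket. This choice is admissible because $\nu(A^c)$, being a convex combination of values in $[0,U]$, itself lies in $[0,U]$, so the per-sample constraint is respected, and the modification does not affect $\mathcal{D}_1$ at all, so the user-level neighbour relation is preserved. Consequently, starting from any pair in $\mathsf{D}_\text{max}$, replacing user $k$'s samples in $\mathcal{D}_2$ by the constant vector $\bigl(\nu(A^c),\ldots,\nu(A^c)\bigr)$ can only weakly increase $\textsf{Var}(\mathcal{D}_1) - \textsf{Var}(\mathcal{D}_2)$, so the resulting pair also lies in $\mathsf{D}_\text{max}$ and satisfies both claims of the lemma. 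I do not anticipate a serious obstacle: the variance decomposition is routine, and once it is in hand, the separable structure of the optimization over $\tilde{A}$ makes the joint reduction to a constant vector essentially immediate.
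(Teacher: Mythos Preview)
Your proposal is correct and takes essentially the same approach as the paper. Both arguments rest on the same within-group/between-group variance decomposition; the paper writes it as conditional expectations and minimizes $\textsf{Var}(\mathcal{D}')$ directly for fixed $A^c$, while you write the pooled-variance identity in closed form and work with the difference, but the content is identical and the conclusion---that setting all of user $k$'s samples in $\mathcal{D}_2$ to $\nu(A^c)$ simultaneously zeros $\textsf{Var}(\tilde{A})$ and the between-group term---is reached the same way.
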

	From the proof of the lemma above, we obtain that there exist datasets $(\mathcal{D}_1$, $\mathcal{D}_2)\in \mathsf{D}_\text{max}$, such that
	\[
	\textsf{Var}({\mathcal{D}_2}) = \E\left[(\tilde{X}-{\nu}(A^c))^2\mid \tilde{X}\in {A}^c\right]\cdot \left(1- \frac{m_k}{\sum_\ell m_\ell}\right).
	\]
	Furthermore, for this choice of $\mathcal{D}_2$, we have $ \tilde{S}_k^{(j)} = {\nu}({A}^c)$, for all $j\in [m_k]$. The next lemma provides an alternative characterization of $\Delta_{\textsf{Var}}$, using our choice of datasets $\mathcal{D}_1$, $\mathcal{D}_2$.
	\begin{lemma}
		\label{lem:app2}
		We have that
		\begin{align*}
			\Delta_{\textsf{Var}} &= \max_{\mathcal{D}:\ {S}_\ell^{(j)} = \nu(A^c), \forall {S}_\ell^{(j)}\in A^c} \textsf{Var}(\mathcal{D})
		\end{align*}
	\end{lemma}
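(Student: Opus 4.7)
The plan is to exploit a law-of-total-variance decomposition that cleanly separates the contribution of user $k$'s block of samples $A$ (of size $m_k$) from the contribution of the remaining samples $A^c$. Treating a uniformly chosen sample of $\mathcal{D}_1$ as a mixture conditioned on whether it lies in $A$ or in $A^c$, I would write
\[
\textsf{Var}(\mathcal{D}_1) = \frac{m_k}{\sum_\ell m_\ell}\textsf{Var}(A) + \frac{\sum_{\ell\neq k} m_\ell}{\sum_\ell m_\ell}\textsf{Var}(A^c) + \frac{m_k \sum_{\ell\neq k}m_\ell}{\left(\sum_\ell m_\ell\right)^2}\bigl(\nu(A) - \nu(A^c)\bigr)^2,
\]
where $\textsf{Var}(A)$ and $\textsf{Var}(A^c)$ denote the empirical variances within the respective blocks. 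Applying the same decomposition to $\mathcal{D}_2$ and invoking Lemma \ref{lem:app1} to substitute $\tilde{S}_k^{(j)} = \nu(A^c)$ for every $j$ (so that $\nu(\tilde{A}) = \nu(A^c)$ and $\textsf{Var}(\tilde{A}) = 0$), I obtain $\textsf{Var}(\mathcal{D}_2) = \tfrac{\sum_{\ell\neq k} m_\ell}{\sum_\ell m_\ell}\textsf{Var}(A^c)$.

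Subtracting the two expressions, the $\textsf{Var}(A^c)$ terms cancel, leaving a difference that depends on the samples in $A^c$ only through their mean $\nu(A^c)$. The key observation is therefore that, among all optimizing pairs $(\mathcal{D}_1,\mathcal{D}_2)\in \mathsf{D}_{\text{max}}$, we may freely replace each $S_\ell^{(j)}\in A^c$ by the constant $\nu(A^c)$ without altering $\textsf{Var}(\mathcal{D}_1)-\textsf{Var}(\mathcal{D}_2)$: this substitution preserves $\nu(A^c)$ (and hence the surviving terms of the difference) while forcing $\textsf{Var}(A^c) = 0$. Thus one direction of the claim (``$\leq$'') follows by restricting the sensitivity maximization to such constrained datasets.

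For any such restricted pair, $\mathcal{D}_2$ now consists entirely of copies of $\nu(A^c)$ and therefore satisfies $\textsf{Var}(\mathcal{D}_2) = 0$, so the difference collapses to $\textsf{Var}(\mathcal{D}_1)$. Taking the maximum over such $\mathcal{D}_1$ yields the stated expression, and the reverse inequality is immediate since every dataset $\mathcal{D}$ feasible in the right-hand side is one half of a valid neighbouring pair (with mate equal to the all-$\nu(A^c)$ dataset). The mildly delicate point I would verify is that this constrained reformulation preserves the full expressive range of the free parameters $(\nu(A),\textsf{Var}(A),\nu(A^c))$: the constant value $\nu(A^c)$ still ranges over $[0,U]$ and user $k$'s samples in $A$ remain arbitrary in $[0,U]^{m_k}$, so no feasible combination of $(\nu(A),\textsf{Var}(A),\nu(A^c))$ is lost, and hence equality (rather than merely ``$\leq$'') is attained.
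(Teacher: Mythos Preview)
Your proposal is correct and follows essentially the same approach as the paper: both arguments use Lemma~\ref{lem:app1} to fix $\mathcal{D}_2$, decompose the variances conditionally on membership in $A$ versus $A^c$, observe that the within-$A^c$ variance term cancels in the difference so that only $\nu(A^c)$ survives, and then recognize the resulting expression as $\textsf{Var}(\mathcal{D})$ over datasets with $A^c$ held constant. Your use of the law-of-total-variance identity up front is a slightly cleaner packaging of what the paper derives piecewise via \eqref{eq:app1}--\eqref{eq:app2} and the manipulations leading to \eqref{eq:app3}, and your explicit check of the reverse inequality (pairing any constrained $\mathcal{D}$ with the all-$\nu(A^c)$ neighbour) is a welcome clarification, but the underlying logic is the same.
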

The proof of Lemma \ref{lem:app2} is provided in Appendix \ref{app:lem-app-2}. Note that the maximization in the expression in Lemma \ref{lem:app2} is essentially over $\nu(A^c)$ and the variables $\left\{S_j^{(\ell)}\right\}\in A$, with the constraint that ${S}_\ell^{(j)} = \nu(A^c)$, for all ${S}_\ell^{(j)}\in A^c$. It is easy to show that for a fixed choice of the variables $\left\{S_j^{(\ell)}\right\}\in A$, the expression in \eqref{eq:app3} is a quadratic function of $\nu(A^c)$, with a non-negative coefficient. Hence, the maximum over $\nu(A^c)$ of the expression in \eqref{eq:app3} is attained at a boundary point, i.e., at either $\nu(A^c) = 0$ or at $\nu(A^c) = U$. This observation then leads to a proof of Proposition \ref{prop:varsens} that is provided in Appendix \ref{app:versens}.

{In the next section, we shall describe another mechanism that constructs natural ``clipped'' estimators of the sample mean and variance, which we shall use in our algorithm that obtains gains in composition privacy loss, for fixed estimation error. We shall then explicitly identify the sensitivities of these estimators.}
\subsection{\textsc{Clip}}
\label{sec:array-av}
We proceed to describe a simple modification of the previous mechanism, which we call \textsc{Clip}, for releasing user-level differentially private estimates of $\mu$ and $\textsf{Var}$, by clipping (or suppressing) selected records. {We shall later use such mechanisms with special structure, where contributions of selected users are \emph{suppressed} or clipped entirely, to obtain improvements in the privacy loss under composition, for fixed estimation error.}
For $\ell\in \mathcal{L}$, we let $\Gamma_\ell\in [0:m_\ell]$ denote the number of contributions of user $\ell$ that have \emph{not} been clipped; without loss of generality, we assume that the set of indices of these samples is $[\Gamma_\ell]$. Further, we assume that $\sum_\ell \Gamma_{\ell}>0$. We use the notation $\Gamma^\star:= \max_{\ell\in \mathcal{L}} \Gamma_\ell$.

Given the dataset $\mathcal{D}$, we set
\begin{equation}
	\label{eq:meanarr}
	\mu_\text{clip}(\mathcal{D}) = \frac{1}{\sum_\ell \Gamma_\ell} \cdot \sum_{\ell=1}^L \sum_{j=1}^{\Gamma_\ell} S_\ell^{(j)}
\end{equation}
to be that estimator of the sample mean that is obtained by retaining only $\Gamma_\ell$ samples, for each user $\ell$.
Next, we set
\begin{equation}
	\label{eq:vararr}
		\textsf{Var}_{\text{clip}}(\mathcal{D}) = \frac{1}{\sum_\ell \Gamma_\ell}\cdot \sum_{\ell=1}^L \sum_{j=1}^{\Gamma_\ell} \left(S_\ell^{(j)}-\mu_\text{clip}(\mathcal{D})\right)^2
\end{equation}
to be an estimator of the sample variance that makes use of the previously computed estimator $\mu_\text{clip}(\mathcal{D})$ of the sample mean.


Our mechanism $M_\text{clip}: \mathsf{D}\to \mathbb{R}^2$ obeys
\begin{equation}
	\label{eq:Mclip}
M_\text{clip}(\mathcal{D}) = \begin{bmatrix}
	M_{\mu, \text{clip}}(\mathcal{D})\\
	M_{\textsf{Var}, \text{clip}}(\mathcal{D})
\end{bmatrix},
\end{equation}
where
\[
M_{\mu, \text{clip}}(\mathcal{D}) = \mu_\text{clip}(\mathcal{D})+\text{Lap}(2\Delta_{\mu_\text{clip}}/\epsilon),
\]
and
\[
M_{\textsf{Var}, \text{clip}}(\mathcal{D}) = \textsf{Var}_{\text{clip}}(\mathcal{D})+\text{Lap}(2\Delta_{\textsf{Var}_{\text{clip}}}/\epsilon).
\]
Here, $\Delta_{\mu_\text{clip}}$ and $\Delta_{\textsf{Var}_{\text{clip}}}$ are respectively the user-level sensitivities of the clipped mean estimator $\mu_\text{clip}$ and the clipped variance estimator $\textsf{Var}_\text{clip}$. As before, we assign a privacy budget of $\epsilon/2$ for each of the mechanisms $M_{\mu, \text{clip}}$ and $M_{\textsf{Var}, \text{clip}}$. Clearly, both these algorithms are $\epsilon/2$-user-level DP, from Theorem \ref{thm:dp}, resulting in the overall mechanism $M_\text{clip}$ being $\epsilon$-user-level DP, from Theorem \ref{thm:comp}.


By arguments similar to those in \cite[Sec. III.C]{dp_preprint}, we have that
\begin{equation}
	\label{eq:muarrsens}
\Delta_{\mu_\text{clip}}= \frac{U\ \Gamma^\star}{\sum_{\ell=1}^L \Gamma_\ell}.
\end{equation}
Furthermore, by analysis entirely analogous to the proof of Proposition \ref{prop:varsens}, we obtain the following lemma:
\begin{lemma}
	\label{lem:vararrsens}
We have that
\[
\Delta_{\textsf{Var}_\text{clip}} = 
\begin{cases}
	\frac{U^2\ \Gamma_\ell^\star(\sum_\ell \Gamma_\ell - \Gamma_\ell^\star)}{\left(\sum_\ell \Gamma_\ell\right)^2},\ \text{if } \sum_\ell \Gamma_{\ell}>2\Gamma^\star,\\
	\frac{U^2}{4},\ \text{if $\sum_\ell \Gamma_\ell\leq 2\Gamma^\star$ and $\sum_\ell \Gamma_\ell$ is even},\\
	\frac{U^2}{4}\cdot\left(1-\frac{1}{(\sum_\ell \Gamma_\ell)^2}\right),\ \text{if $\sum_\ell \Gamma_\ell\leq 2\Gamma^\star$ and $\sum_\ell \Gamma_\ell$ is odd}.
\end{cases}
\]
\end{lemma}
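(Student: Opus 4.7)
The plan is to establish the lemma by a direct reduction to Proposition \ref{prop:varsens}, exploiting the structural observation that $\mu_\text{clip}$ and $\textsf{Var}_\text{clip}$ depend on the samples $\{S_\ell^{(j)}\}$ only through the first $\Gamma_\ell$ entries of each user $\ell$. First I would observe that $\textsf{Var}_\text{clip}(\mathcal{D})$ is algebraically identical to $\textsf{Var}(\widehat{\mathcal{D}})$ evaluated on an auxiliary dataset $\widehat{\mathcal{D}}$ in which user $\ell$ contributes exactly $\Gamma_\ell$ samples (namely $S_\ell^{(1)},\ldots,S_\ell^{(\Gamma_\ell)}$), with the rest simply discarded.

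Next, I would argue that the user-level sensitivity computation lifts accordingly. For two datasets $\mathcal{D} \sim \mathcal{D}'$ that are user-level neighbors and differ only in the contributions of some user $k$, any change in the samples with indices exceeding $\Gamma_k$ is invisible to $\textsf{Var}_\text{clip}$, so without loss of generality we may restrict to changes confined to $[\Gamma_k]$. Conversely, any neighboring pair in the ``clipped'' universe can be extended to a user-level neighboring pair in $\mathsf{D}$ by appending arbitrary (identical) samples in positions $[\Gamma_\ell + 1 : m_\ell]$. This establishes the identity
\[
\Delta_{\textsf{Var}_\text{clip}} \;=\; \max_{\widehat{\mathcal{D}} \sim \widehat{\mathcal{D}}'} \left\lvert \textsf{Var}(\widehat{\mathcal{D}}) - \textsf{Var}(\widehat{\mathcal{D}}')\right\rvert,
\]
where the maximization is over user-level neighbors in the universe of datasets in which user $\ell$ contributes $\Gamma_\ell$ samples. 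Proposition \ref{prop:varsens}, applied to this universe with $m_\ell$ replaced everywhere by $\Gamma_\ell$ and $m^\star$ by $\Gamma^\star$, immediately yields the claimed three-case expression.

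If a self-contained proof is preferred, one simply replays the argument underlying Lemmas \ref{lem:app1} and \ref{lem:app2}: take a worst-case pair $(\mathcal{D}_1,\mathcal{D}_2)$ and reduce to the configuration in which all $\Gamma_k$ effective samples of user $k$ in $\mathcal{D}_2$ equal the mean $\nu(A^c)$ of the effective samples of the remaining users; then maximize $\textsf{Var}(\mathcal{D}_1)$ subject to all effective non-user-$k$ samples being equal to $\nu(A^c)$. The objective is a quadratic in $\nu(A^c)$ with nonnegative leading coefficient, so its maximum on $[0,U]$ is attained at a boundary point, and an elementary case analysis on the parity and magnitude of $\sum_\ell \Gamma_\ell$ relative to $2\Gamma^\star$ reproduces the three cases.

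The main obstacle, which is already absorbed by reducing to Proposition \ref{prop:varsens}, is the boundary regime $\sum_\ell \Gamma_\ell \leq 2\Gamma^\star$, where $\Gamma_k = \Gamma^\star$ can be at least half of the total effective sample count and the constraints $S_\ell^{(j)} \in [0,U]$ bind in a parity-dependent way. Since the argument is structurally identical to the unclipped case, no genuinely new difficulty arises beyond verifying that the effective sample counts $\Gamma_\ell$ behave in the quadratic optimization exactly as the counts $m_\ell$ did in Proposition \ref{prop:varsens}.
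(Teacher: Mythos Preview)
Your proposal is correct and takes essentially the same approach as the paper: the paper states only that Lemma~\ref{lem:vararrsens} follows ``by analysis entirely analogous to the proof of Proposition~\ref{prop:varsens},'' and your reduction via the auxiliary dataset $\widehat{\mathcal{D}}$ with per-user counts $\Gamma_\ell$ makes this analogy precise.
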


In Appendix \ref{app:array-av}, we show that for a special class of clipping strategies considered in \cite{dp_preprint}, the sensitivities $\Delta_{\mu_\text{clip}}$ and $\Delta_{\textsf{Var}_\text{clip}}$ are in fact at most the values of their \textsc{Baseline} counterparts $\Delta_\mu$ and $\Delta_\textsf{Var}$, respectively. {For such special clipping strategies, the mechanisms $M_{\mu, \text{clip}}$ and $M_{\textsf{Var}, \text{clip}}$ are also called as pseudo-user creation-based mechanisms \cite{dp_preprint}.}

In the next section, we focus more closely on the \textsc{Clip} mechanism and explicitly characterize the worst-case errors (over all datasets) due to clipping the contributions of users.

\section{Worst-Case Errors in Estimation of Sample Mean and Variance}
\label{sec:variance}
In this section, we continue to focus on a single grid $g\in \mathcal{G}$. {We then formalize the notion of the \emph{worst-case} error (or worst-case bias) due to clipping incurred, over all datasets, by the \textsc{Clip} mechanism with an arbitrary choice $\Gamma_{\ell}\in [0:m_\ell]$, for $\ell \in \mathcal{L}$. With the aid of this definition, we shall explicitly derive analytical expressions for the worst-case clipping error for the sample mean and variance estimators in Section \ref{sec:array-av}.} The characterizations of worst-case errors will be of use in the design of our algorithm for improving the privacy loss degradation under composition, via the clipping (or suppression) of user contributions in selected grids. We now make the notion of the worst-case clipping error formal.

Consider the functions $\mu, \textsf{Var}$ that stand for the true sample mean and variance, and the functions $\mu_{\text{clip}}, \textsf{Var}_{\text{clip}}$ that stand for the sample mean and variance of the clipped samples, for some fixed values $\Gamma_{\ell}\in [0:m_\ell]$, where $\ell\in \mathcal{L}$. We now define
\begin{equation*}
	E_{\mu}(\mathcal{D}):= \lvert \mu(\mathcal{D}) - \mu_{\text{clip}}(\mathcal{D})\rvert
\end{equation*}
as the clipping error (or bias) for the mean on dataset $\mathcal{D}$, and
\begin{equation*}
	E_{\mu}:= \max_{\mathcal{D}\in \mathsf{D}} E_{\mu}(\mathcal{D})
\end{equation*}
as the worst-case clipping error for the mean. Likewise, we define
\begin{equation*}
	E_{\textsf{Var}}(\mathcal{D}):= \lvert \textsf{Var}(\mathcal{D}) - \textsf{Var}_{\text{clip}}(\mathcal{D})\rvert
\end{equation*}
as the clipping error for the variance on dataset $\mathcal{D}$, and
\begin{equation*}
	E_{\textsf{Var}}:= \max_{\mathcal{D}\in \mathsf{D}} E_{\textsf{Var}}(\mathcal{D})
\end{equation*}
as the worst-case clipping error for the variance.  The theorem below, which follows from \cite[Lemma V. 1]{dp_preprint} then holds.
\begin{theorem}
	\label{thm:meanworst}
	We have that 
	\[
	E_\mu = U\cdot \left(1 - \frac{\sum_\ell \Gamma_\ell}{\sum_\ell m_\ell}\right).
	\]
\end{theorem}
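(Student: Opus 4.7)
The plan is to express the clipping bias $\mu(\mathcal{D})-\mu_{\text{clip}}(\mathcal{D})$ as a linear function of two scalar quantities, namely the sum of the retained samples $T_K := \sum_{\ell}\sum_{j=1}^{\Gamma_\ell} S_\ell^{(j)}$ and the sum of the discarded samples $T_R := \sum_{\ell}\sum_{j=\Gamma_\ell+1}^{m_\ell} S_\ell^{(j)}$, and then optimize these two scalars over their feasible set. Writing $M := \sum_\ell m_\ell$ and $K := \sum_\ell \Gamma_\ell$, a direct computation gives
\[
\mu(\mathcal{D}) - \mu_{\text{clip}}(\mathcal{D}) \;=\; \frac{T_K + T_R}{M} - \frac{T_K}{K} \;=\; \frac{K\,T_R - (M-K)\,T_K}{M\,K}.
\]
Hence $E_\mu(\mathcal{D})$ depends on $\mathcal{D}$ only through the pair $(T_K,T_R)$.

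Next I would identify the feasible range of $(T_K,T_R)$. Since each sample lies in $[0,U]$ and the two sums involve disjoint collections of $K$ and $M-K$ samples respectively, any pair in the rectangle $[0,UK]\times[0,U(M-K)]$ is realizable by an appropriate choice of dataset $\mathcal{D}\in \mathsf{D}$, and no other values are. The numerator $K\,T_R - (M-K)\,T_K$ is a linear function of $(T_K,T_R)$ on this rectangle, so its maximum absolute value is attained at a vertex. Inspecting the four vertices shows that the maximum is attained either at $(T_K,T_R)=(0,U(M-K))$ or at $(T_K,T_R)=(UK,0)$, and at both of these the absolute value of the numerator equals $UK(M-K)$.

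Dividing by $MK$ then yields
\[
E_\mu \;=\; \frac{UK(M-K)}{MK} \;=\; U\left(1-\frac{K}{M}\right) \;=\; U\left(1-\frac{\sum_\ell \Gamma_\ell}{\sum_\ell m_\ell}\right),
\]
which is the claimed expression. Finally, I would briefly note that the two extremal datasets (all kept samples equal to $0$ and all discarded samples equal to $U$, or vice versa) are valid elements of $\mathsf{D}$ since the occupancy array is preserved; this certifies achievability of the bound. No step here is a real obstacle: the only mild subtlety is checking that the rectangular feasible region is attained by honest datasets, which follows because $T_K$ and $T_R$ involve disjoint index sets and can be set independently.
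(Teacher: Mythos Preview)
Your argument is correct and complete: expressing the bias as a linear function of the two independent aggregate sums $(T_K,T_R)$ over the rectangle $[0,UK]\times[0,U(M-K)]$ and checking the vertices is the natural route, and your verification of achievability via the extremal datasets is sound. The paper itself does not give a self-contained proof of this theorem but instead defers to \cite[Lemma~V.1]{dp_preprint}, noting only that the argument there (originally stated for the special clipping $\Gamma_\ell=\min\{m,m_\ell\}$) extends verbatim to arbitrary $\Gamma_\ell\in[0:m_\ell]$; your write-up thus supplies precisely the direct proof the paper omits.
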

{While  \cite{dp_preprint} contained a proof of Theorem \ref{thm:meanworst} for the special case when $\Gamma_{\ell} = \min\{m,m_\ell\}$, for $\ell\in \mathcal{L}$ and for some fixed $m\in [m_\star,m^\star]$, we mention that such a statement holds for general values $\Gamma_{\ell}\in [0: m_\ell]$ as well -- the proof of Theorem \ref{thm:meanworst} hence follows directly from the proof of Lemma V. 1 in \cite{dp_preprint}.} Next, we characterize exactly the worst-case clipping error (or worst-case bias) for the variance. 
\begin{theorem}
	\label{thm:varworst}
	We have that $E_{\textsf{Var}} = 0$\ $\text{if\ \ $\Gamma_{\ell} = m_\ell$, for all $\ell\in \mathcal{L}$}$. Furthermore, if $\sum_\ell \Gamma_{\ell}< \sum_\ell m_\ell$, we have
	\[
	E_{\textsf{Var}} = 
	\begin{cases}
		\frac{U^2\cdot \sum_\ell \Gamma_\ell\cdot \sum_{\ell'} (m_{\ell'} - \Gamma_{\ell'})}{\left(\sum_\ell m_\ell\right)^2},\ \text{if } \sum_\ell m_\ell > 2\sum_\ell \Gamma_\ell,\\
		\frac{U^2}{4},\ \text{if $\sum_\ell m_\ell \leq 2\sum_\ell \Gamma_\ell$ and $\sum_\ell m_\ell$ is even},\\
		\frac{U^2}{4}\cdot\left(1-\frac{1}{(\sum_\ell m_\ell)^2}\right),\ \text{if\ $\sum_\ell m_\ell \leq 2\sum_\ell \Gamma_\ell$ and $\sum_\ell m_\ell$ is odd}.
	\end{cases}	
	\]
\end{theorem}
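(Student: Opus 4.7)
The plan is to reduce the worst-case bias to a small combinatorial optimization, in direct analogy with the proof strategy for Proposition~\ref{prop:varsens}. First, if $\Gamma_\ell = m_\ell$ for every $\ell \in \mathcal{L}$, then $\mu_{\text{clip}}(\mathcal{D}) = \mu(\mathcal{D})$ and $\textsf{Var}_{\text{clip}}(\mathcal{D}) = \textsf{Var}(\mathcal{D})$ identically, so $E_{\textsf{Var}} = 0$. Assume henceforth that $K := \sum_\ell \Gamma_\ell < N := \sum_\ell m_\ell$. Because the sample values $S_\ell^{(j)}$ lie unconstrained in $[0,U]$ and can be assigned freely across both users and indices, the supremum of $E_{\textsf{Var}}$ depends on $\{\Gamma_\ell\}$ and $\{m_\ell\}$ only through the aggregates $K$ and $N$. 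I would therefore re-index the $N$ samples flatly as $S_1,\ldots,S_N$, with a retained subset $A$ of size $K$ and a clipped subset $B$ of size $N-K$.

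Next, I would apply the within/between-group variance decomposition to write the signed bias in closed form:
\begin{equation*}
	\textsf{Var}(\mathcal{D}) - \textsf{Var}_{\text{clip}}(\mathcal{D})
	= \frac{N-K}{N}\left[\textsf{Var}_B(\mathcal{D}) - \textsf{Var}_{\text{clip}}(\mathcal{D}) + \frac{K}{N}\bigl(\mu_{\text{clip}}(\mathcal{D}) - \mu_B(\mathcal{D})\bigr)^{2}\right],
\end{equation*}
where $\mu_B$ and $\textsf{Var}_B$ denote the sample mean and variance of the clipped records. This expresses the signed bias in terms of the first two moments of $A$ and $B$ alone and reduces the task to maximizing the right-hand side, in absolute value, over dataset values in $[0,U]$.

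The third step is a boundary-reduction argument, in the spirit of Lemma~\ref{lem:app2}: with all other coordinates held fixed, the right-hand side above is quadratic in any individual $S_\ell^{(j)}$ with a sign-definite leading coefficient, so both its maximum and its minimum over $[0,U]$ are attained at $S_\ell^{(j)} \in \{0,U\}$. After this reduction, let $a \in [0:K]$ and $b \in [0:N-K]$ be the numbers of zero-valued samples in $A$ and in $B$ respectively, with the remaining samples equal to $U$. A short calculation then gives
\begin{equation*}
	\textsf{Var}_{\text{clip}}(\mathcal{D}) = \frac{U^{2}\,a(K-a)}{K^{2}},
	\qquad
	\textsf{Var}(\mathcal{D}) = \frac{U^{2}\,(a+b)(N-a-b)}{N^{2}}.
\end{equation*}

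It remains to maximize the absolute difference of these two quantities over integers $a \in [0:K]$, $b \in [0:N-K]$. On the branch where $\textsf{Var}(\mathcal{D}) \ge \textsf{Var}_{\text{clip}}(\mathcal{D})$, the optimum forces $\textsf{Var}_{\text{clip}}(\mathcal{D}) = 0$ by taking $a \in \{0,K\}$ and then maximizes the scalar $(a+b)(N-a-b)$ over the admissible range of $b$; the unconstrained integer maximizer sits at $a+b = \lfloor N/2 \rfloor$ with value $N^{2}/4$ when $N$ is even and $(N^{2}-1)/4$ when $N$ is odd, and its feasibility with respect to the constraint $b \le N-K$ is precisely what drives the case split on $N$ versus $2K$ and on the parity of $N$, yielding the three formulas in the statement. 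The opposite branch is handled analogously and is shown, case by case, to be dominated by the first. I expect the main obstacle to be the boundary-reduction step, which must handle both signs of the bias uniformly and requires a careful coordinate-wise convexity argument; the subsequent combinatorial optimization is essentially the same one encountered in the proof of Proposition~\ref{prop:varsens}.
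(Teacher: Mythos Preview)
Your decomposition and the final $(a,b)$ combinatorics are correct and mirror the paper's endgame. The gap is the boundary reduction. Your claim that a sign-definite leading coefficient forces \emph{both} the maximum and the minimum of a scalar quadratic to the endpoints of $[0,U]$ is false: a concave quadratic has its minimum at an endpoint but its maximum possibly interior. Here, for a retained sample $S_i\in A$ with $K\ge 2$, the coefficient of $S_i^{2}$ in $\textsf{Var}-\textsf{Var}_{\text{clip}}$ equals $\tfrac{N-1}{N^{2}}-\tfrac{K-1}{K^{2}}<0$, so the signed bias is concave in that coordinate and coordinate-wise maximization does \emph{not} push $S_i$ to $\{0,U\}$. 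Likewise, the negative branch $\textsf{Var}_{\text{clip}}-\textsf{Var}$ is not maximized at a vertex of $[0,U]^{N}$ at all: with $K$ even, taking half of $A$ at $0$, half at $U$, and all of $B$ at $U/2$ gives $\tfrac{(N-K)U^{2}}{4N}$, which strictly exceeds the best corner value $\tfrac{(N-K)^{2}U^{2}}{4N^{2}}$. So that branch cannot be ``handled analogously'' by the same corner enumeration.

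The paper avoids both pitfalls by arguing in two steps rather than coordinate by coordinate. First (Lemma~\ref{lem:app3}), it bounds the negative branch crudely by $\max_{\mathcal D}\bigl(\textsf{Var}_{\text{clip}}-\textsf{Var}\bigr)\le \tfrac{U^{2}(N-K)}{4N}$, read directly off your decomposition, and checks this is dominated by the explicit positive-branch value $\tfrac{U^{2}K(N-K)}{N^{2}}$ in the relevant regime. Second (Lemma~\ref{lem:app4}), for the positive branch it does \emph{not} optimize one $A$-coordinate at a time; instead it freezes $\mu_{\text{clip}}$ and notes that, at any prescribed mean, $\textsf{Var}-\textsf{Var}_{\text{clip}}$ is largest when $\textsf{Var}_{\text{clip}}=0$, i.e.\ when every retained sample equals $\mu_{\text{clip}}$. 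The residual one-parameter problem in $\mu_{\text{clip}}$ is then genuinely convex and lands at $\{0,U\}$, after which the $B$-coordinates (where your convexity claim \emph{is} correct) go to $\{0,U\}$ as you say. Replacing your coordinate-wise step by this ``collapse $A$ to its mean first'' argument closes the gap; the rest of your proposal then goes through unchanged.
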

The proof of Theorem \ref{thm:varworst}, which, interestingly, relies on arguments made in the proof of Proposition \ref{prop:varsens}, is provided in Appendix \ref{app:varworst}. {An exploration of a unified derivation of sensitivities and worst-case errors for other statistics is of broad interest and can be explored in future work.}
\section{An Error Metric and an Algorithm for Controlling Privacy Loss}
\label{sec:alg}
In this section, we return to our original problem of releasing the sample means and variances of different grids, possibly sequentially. We present our algorithm that seeks to control the privacy loss of a certain user-level DP mechanism for jointly releasing the sample mean and variance of all grids in the city, by suppressing user contributions. As we shall see, the individual mechanisms for each grid simply add a suitable amount of Laplace noise that is tailored to the sensitivity of the functions in the grid \emph{post} clipping. Our algorithm hence crucially relies on the analyses of the sensitivity and the worst-case clipping error of the \textsc{Clip} mechanism in Sections \ref{sec:array-av} and \ref{sec:variance}.

\subsection{An Error Metric for Worst-Case Performance}

We shall first formally define a notion of the worst-case error of any mechanism $M = (M_g:\ g\in \mathcal{G})$, over all datasets, \emph{and over all grids}. Our algorithm will then follow naturally from these definitions.

Formally, consider a mechanism $M_g^\theta: \mathsf{D}\to \mathbb{R}^d$, for $g\in \mathcal{G}$, for the user-level differentially private release of a statistic $\theta_g: \mathsf{D}\to \mathbb{R}^d$ of the records in grid $g$. Suppose that $M_g^\theta$ obeys 
\begin{equation}
	\label{eq:Mbar}
	M_g^\theta(\mathcal{D}) = \overline{\theta}_g(\mathcal{D})+\overline{Z},
\end{equation}
for some estimate $\overline{\theta}_g$ of $\theta_g$, such that the user-level sensitivity of $\overline{\theta}_g$ is $\Delta_{\overline{\theta}_g}$. Recall that the assumption that $M_g^\theta$ is a noise-adding mechanism is without loss of generality. Also, in \eqref{eq:Mbar}, we have that $\overline{Z}$ is a length-$d$ vector with $\overline{Z}_i\sim \text{Lap}\left(\Delta_{\left(\overline{\theta}_g\right)_i}/\epsilon\right)$, for each coordinate $i\in [d]$. Note that we work with the class of mechanisms that add Laplace noise tailored to the sensitivities of each grid, individually, since an explicit computation of the user-level sensitivity of the vector $f$ in \eqref{eq:f} (across all grids) is quite hard, thereby implying the necessity of loose bounds on the amount of noise added, when this notion of user-level sensitivity is used.

Now, consider the mechanism $M_\theta$ that consists of the composition of the mechanisms $M_g^\theta$, over $g\in \mathcal{G}$, i.e., $M^\theta = \left(M_g^\theta:\ g\in \mathcal{G}\right)$. In many settings of interest, a natural error metric for such a composition of mechanisms acting on different grids is the \emph{largest}  \emph{worst-case} estimation error among all the grids. 

Now, given a mechanism $M_g^\theta$ as in \eqref{eq:Mbar}, we define its \emph{worst-case} estimation error as
\begin{equation}
	\label{eq:eg}
E_g:= \sum_{i\in [d]} \max_{\mathcal{D}\in \mathsf{D}} \Big \lvert \left(\theta_g\right)_i(\mathcal{D}) - \left(\overline{\theta}_g\right)_i(\mathcal{D}) \Big\rvert + \E[\lVert\overline{Z}\rVert].
\end{equation}
Finally, we define the error metric $E$ of the mechanism $M_\theta$ to be the \emph{largest} worst-case estimation error among all the grids, i.e., $$E := \max_{g\in \mathcal{G}} E_g.$$

We now describe our algorithm for reducing the privacy loss under composition, which makes use of a specialization of the definitions in this section to the case when the mechanisms $M_g^\theta$ are one of $M_\text{b} = M_g^\text{b}$ (corresponding to \textsc{Baseline}) or $M_\text{clip} = M_g^\text{clip}$ (corresponding to \textsc{Clip}).

\subsection{An Algorithm for Suppressing User Contributions}

The algorithm discussed in this section results in a simple improvement of Theorem \ref{thm:comp} that takes into account the structure of the queries. We mention that query-dependent composition results are also known for, say, histogram queries (see \cite[Prop. 2.8]{vadhan}). Consider the \textsc{Baseline} mechanisms $M_g^\text{b} = \left[M_{g,\mu}^\text{b},  M_{g,\textsf{Var}}^\text{b}\right]^{\top}$, as defined in Section \ref{sec:baseline}, for estimating the statistics $\mu_g$ and $\textsf{Var}_g$, for each grid $g$ of a given dataset, with $\mathsf{M}_\text{b} = \left(M_g^\text{b}:\ g\in \mathcal{G}\right)$. Observe that initially, for any grid $g$, we have
\begin{align}
	E_g &= \E\left[\lvert\text{Lap}(2\Delta_{\mu_g}/\epsilon)\rvert\right] +  \E\left[\lvert\text{Lap}(2\Delta_{\textsf{Var}_g}/\epsilon)\rvert\right] \notag\\
	&= \frac{2U\ m_g^\star}{\epsilon\cdot\sum_{\ell\in\, \mathcal{L}_g} m_{g,\ell}}+\frac{2U^2\ m_g^\star\left(\sum_{\ell\in\, \mathcal{L}_g}m_{g,\ell} - m_g^\star\right)}{\epsilon\cdot \left(\sum_{\ell\in\, \mathcal{L}_g} m_{g,\ell}\right)^2}, \label{eq:alg1}
\end{align}
where the last equality follows from \eqref{eq:deltamean} and Proposition \ref{prop:varsens}. As defined earlier, we have $E := \max_{g\in \mathcal{G}} E_g.$ From Corollary \ref{cor:naive}, we notice that in order to improve the privacy loss upon composition, we must seek to reduce $G_1$, or the {largest} number of grids that any user ``occupies''. Our aim is to accomplish this reduction in such a manner as to not increase the worst-case error $E$. \footnote{We mention that our algorithm can be executed with \emph{any} bound $E$ on the worst-case error of each grid and not just $E = \max_{g\in \mathcal{G}} E_g.$} 
\subsubsection{Description of the iterative procedure}
Our algorithm proceeds in stages, at each stage suppressing \emph{all} the contributions of those users that occupy the largest number of grids, in selected grids that these users occupy. Clearly, since the objective is to not increase $E$, for each such user, we suppress his/her contributions in that grid which has the smallest overall (that is the sum of errors due to bias and due to the noise added for privacy; see \eqref{eq:eg}) error \emph{post suppression}. We emphasize that our algorithm, being iterative in nature, is not necessarily optimal in that it does not necessarily return the lowest possible privacy loss degradation factor for a fixed worst-case error $E$. { We reiterate that the hyperparameters that are the errors computed at each stage of the algorithm are independent of the samples present (indeed, they depend only on the occupancy array $\mathsf{A}$ and the publicly known parameter $U$), and hence their calculation does not need additional privacy budget, unlike those considered in \cite{hyper1,hyper2,hyper3,hyper4}.} Note also that while the worst-case error (over all grids) $E$ is fixed at the start of the algorithm and is maintained as an invariant throughout its execution, the individual errors corresponding to each grid could potentially increase due to the suppression of user contributions. We let $E_g^{(0)} := E_g$, for each grid $g\in \mathcal{G}$. 

For each step $t\geq 1$ in our algorithm, we pick the user(s) that occupy the largest number of grids. Define 
\[
\mathsf{L}^{(t)}:= \{\ell\in \mathcal{L}:\ G_\ell \geq G_j,\ \forall j\in \mathcal{L}\}
\]
as the set of users in the first step of our algorithm that occupy the largest number of grids. The superscript `$(t)$' denotes the fact that the algorithm is in stage $t$ of its execution. Recall from our assumption that $G_1\geq G_j$, for any user $j\in \mathcal{L}$, and hence, in stage $1$, we have user $1\in \mathsf{L}^{(1)}$. We sort the users in $\mathsf{L}^{(t)}$
in increasing order of their indices, as $ \ell_1<\ell_2<\ldots<\ell_{|\mathsf{L}^{(t)}|}$.

Now, for each user $\ell \in \mathsf{L}^{(t)}$, starting from user $\ell_1$, we calculate the worst-case error that could result in each grid he/she occupies by potentially suppressing his/her contributions entirely. More precisely, for each grid $g\in \mathcal{G}_\ell$, we set $m_{g,\ell} = 0$, and recompute the values of $\Delta_{\mu_g}$ and $\Delta_{\textsf{Var}_g}$. In particular, following the definitions in Section \ref{sec:array-av}, we note that after suppression in grid $g$, we have $\Gamma_{g,\ell} = 0$ and $\Gamma_{g,\ell'} = m_\ell$, for $\ell'\neq \ell$, with $\Gamma^\star_g =  \max\limits_{\ell'\in\, \mathcal{L}_g:\ \ell'\neq\ell} m_{g,\ell'}$. Thus,  \eqref{eq:deltamean} and Proposition \ref{prop:varsens}, can be used to compute the sensitivities of the new sample mean and sample variance in grid $g$, which we denote as $\Delta_{g,\mu_\text{clip}}(\ell)$ and $\Delta_{g,\textsf{Var}_\text{clip}}(\ell)$, respectively.

Moreover, such a suppression of the contributions of user $\ell\in \mathsf{L}^{(1)}$ in grid $g$ introduces some \emph{worst-case} clipping errors in the computation of $\mu$ and $\textsf{Var}$, which we call $E_{g,\mu}(\ell)$ and $E_{g,\textsf{Var}}(\ell)$, respectively. The exact magnitude of these clipping errors incurred can be computed using Theorems \ref{thm:meanworst} and \ref{thm:varworst}, using the same values of $\Gamma_{g,\ell'}$ and $\Gamma^\star_g$ as described above, for $\ell'\in \mathcal{L}_g$. Finally, following \eqref{eq:eg}, we compute the overall worst-case error in grid $g$, post the suppression of the contributions of user $\ell$ as
\begin{equation}
	\label{eq:egl}
E_g(\ell) = E_{g,\mu}(\ell)+E_{g,\textsf{Var}}(\ell)+ \frac{2\ \Delta_{g,\mu_\text{clip}}(\ell)}{\epsilon}+\frac{2\ \Delta_{g,\textsf{Var}_\text{clip}}(\ell)}{\epsilon}.
\end{equation}

After computing the worst-case errors $E_g(\ell)$ that could result in each grid $g\in \mathcal{G}_\ell$ due to the potential suppression of the contributions of user $\ell$ in grid $g$, we identify one grid 
\begin{equation}
	\label{eq:algl}
\mathsf{g}(\ell)\in { \argmax}_{g\in \mathcal{G}_\ell} E_g(\ell)
\end{equation}
and the corresponding error value $E_{\mathsf{g}(\ell)}(\ell)$. In the event that $E_{\mathsf{g}(\ell)}(\ell)\leq E$, where $E$ is the original worst-case error, we proceed with clipping (or suppressing) {all} the contributions of user $\ell$ in grid $\mathsf{g}(\ell)$. In particular, we update $\mathcal{L}_{\mathsf{g}(\ell)} \gets \mathcal{L}_{\mathsf{g}(\ell)}\setminus \{\ell\}$ and $\mathcal{G}_\ell \gets \mathcal{G}_\ell\setminus \{{\mathsf{g}(\ell)}\}$. We recompute $G_\ell:= \lvert \mathcal{G}_\ell\rvert$ and
the above procedure, starting from \eqref{eq:algl}, is then repeated for all users $\ell\in \mathsf{L}^{(t)}$.

Else, if $E_{\mathsf{g}(\ell)}(\ell)> E$, we reset $ \Gamma_{{\mathsf{g}(\ell)},\ell}$ to its original value at the start of the iteration and we halt the execution of the algorithm. We then return the value $K:=\max_{\ell\in \mathcal{L}} G_\ell$ as the final privacy loss degradation factor. Pseudo-code for the \textsc{Clip-User} procedure is shown as Algorithm \ref{alg:multi-hat}. Note that, by design, the algorithm \textsc{Clip-User} maintains the worst-case error across grids as $E$, at every stage of its execution.

   \begin{algorithm}
   	\caption{Suppressing user contributions}
   	\begin{algorithmic}[1]
   		\Procedure {\textsc{Clip-User}}{$\mathcal{D}$}
   		\State For each $g\in \mathcal{G}$, compute $E_g$ as in \eqref{eq:alg1}.
   		\State Compute $E = \max_{g\in \mathcal{G}} E_g$.
   		\State Set \texttt{Halt} $\gets$ \texttt{No} and $t\gets 1$.
   		\While {\texttt{Halt} $=$ \texttt{No}}
   		\State Compute $\mathsf{L}^{(t)} =  \{\ell\in \mathcal{L}:\ G_\ell \geq G_j,\ \forall j\in \mathcal{L}\}$.
   		\For{$\ell\in \mathsf{L}^{(t)}$}
   		\For{$g\in \mathcal{G}_\ell$}
   			\State Set $\Gamma_{g,\ell} = 0$.
   			\State Compute error $E_g(\ell)$ as in \eqref{eq:egl}.
   		\EndFor
   			\State Pick $\mathsf{g}(\ell)\in \argmin_{g\in \mathcal{G}_\ell} E_g(\ell)$.
   		\If{$E_{{\mathsf{g}(\ell)}}(\ell)> E$}
   		\State Set \texttt{Halt} $=$ \texttt{Yes}
   		\State Reset $\Gamma_{g,\ell}$ to $m_{g,\ell}$, for all $g\in \mathcal{G}_\ell$.
   		\State \textbf{break}
   		\Else
   		\State Restore $\Gamma_{g,\ell}$ to ${m_{g,\ell}}$, for all $g\in \mathcal{G}_\ell\setminus \{\mathsf{g}(\ell)\}$.
   		\State Update $\mathcal{G}_\ell\gets \mathcal{G}_\ell\setminus \{\mathsf{g}(\ell)\}$ and $\mathcal{L}_g\gets \mathcal{L}_g\setminus \{\ell\}$.
   		\State Compute $G_\ell$, for all $\ell \in \mathcal{L}$.
   		\EndIf
   		\EndFor
   		\If{\texttt{Halt $=$ \texttt{Yes}}}
   		\textbf{break}
   		\Else
   		\State Set $t\gets t+1$.
   		\EndIf
   		\EndWhile
   		\State Return $K \gets \max_{\ell\in \mathcal{L}} G_\ell$.
   		\EndProcedure
   	\end{algorithmic}
   	\label{alg:multi-hat}
   \end{algorithm}
   
%

\subsubsection{DP release of statistics post execution of Algorithm \ref{alg:multi-hat}}
   Given the distribution $\{{m_{g,\ell}}\}$ of user contributions post the execution of \textsc{Clip-User}, we release user-level differentially private estimates of the sample means $\mu_g(\mathcal{D})$ and sample variances $\textsf{Var}_g(\mathcal{D})$, for $g\in \mathcal{G}$, by using a version of the \textsc{Clip} mechanism for each grid, as discussed in Section \ref{sec:array-av}. More precisely, for each grid $g$, we compute the values $\{\Gamma_{g,\ell}\}$ of user contributions post the suppression of user contributions in Algorithm \ref{alg:multi-hat}, and release $M_\text{clip, post}(\mathcal{D}) = M_\text{clip}(\mathcal{D})$ as in \eqref{eq:Mclip}. The following proposition then holds, similar to Corollary \ref{cor:naive}.
   
   \begin{proposition}
   		\label{prop:alg}
   	When $\epsilon_g = \epsilon$, for all $g\in \mathcal{G}$, we have that $M_\text{clip, post}$ is $K\epsilon$-DP, with a maximum worst-case error $E$ over all grids.
   \end{proposition}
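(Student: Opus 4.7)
The plan is to establish the two assertions of the proposition separately: the user-level $K\epsilon$-DP guarantee of $M_\text{clip, post}$, and the claim that the worst-case error across grids is bounded by $E$. Both will follow from tracking the state produced by Algorithm \ref{alg:multi-hat} at termination and combining it with the sensitivity analysis of Section \ref{sec:array-av} and the refined composition theorem, Theorem \ref{thm:improvcomp}.

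For the privacy claim, I would first observe that for each grid $g \in \mathcal{G}$, the mechanism $M_g^\text{clip}$ defined in \eqref{eq:Mclip} is the concatenation of two Laplace mechanisms calibrated to the sensitivities $\Delta_{\mu_\text{clip}}$ (via \eqref{eq:muarrsens}) and $\Delta_{\textsf{Var}_\text{clip}}$ (via Lemma \ref{lem:vararrsens}), evaluated at the post-algorithm contribution counts $\{\Gamma_{g,\ell}\}$, each operating with privacy budget $\epsilon/2$. By Theorem \ref{thm:dp} and Theorem \ref{thm:comp} (applied within a single grid), each $M_g^\text{clip}$ is user-level $\epsilon$-DP. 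Next, by the termination condition of \textsc{Clip-User}, every user $\ell$ contributes to at most $G_\ell$ grids, with $K = \max_{\ell\in\mathcal{L}} G_\ell$. Applying Theorem \ref{thm:improvcomp} to the composite mechanism $M_\text{clip, post} = (M_g^\text{clip}:\ g\in \mathcal{G})$ with $\epsilon_g = \epsilon$ then immediately yields the user-level $K\epsilon$-DP property, since $\overline{\epsilon} = \max_{\ell\in\mathcal{L}} \sum_{g\in \mathcal{G}_\ell} \epsilon = K\epsilon$.

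For the error claim, I would prove by induction on the stages $t$ of Algorithm \ref{alg:multi-hat} that the invariant $\max_{g\in \mathcal{G}} E_g \leq E$ is preserved at every stage, where $E_g$ is computed via \eqref{eq:eg} using the current $\{\Gamma_{g,\ell}\}$. The base case follows from the definition $E := \max_{g\in \mathcal{G}} E_g^{(0)}$ in line 3 of the pseudocode. For the inductive step, note that the only $E_g$ that can change within a stage is for the grid $\mathsf{g}(\ell)$ in which a user $\ell$'s contributions are actually suppressed; errors in all other grids are untouched because they do not involve $\ell$'s samples. By the conditional check in Algorithm \ref{alg:multi-hat}, such a suppression is executed only when $E_{\mathsf{g}(\ell)}(\ell) \leq E$, and the quantity $E_{\mathsf{g}(\ell)}(\ell)$ is exactly the worst-case error of grid $\mathsf{g}(\ell)$ post suppression by \eqref{eq:egl}, together with Theorems \ref{thm:meanworst} and \ref{thm:varworst} applied to the updated counts; if the check fails, the \texttt{Reset} line restores the counts to their pre-iteration values, and the algorithm halts with the invariant intact.

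The conceptual content is therefore routine once the earlier sensitivity and bias lemmas are in hand; the only step requiring care will be the bookkeeping, namely verifying that the post-suppression quantities $\Gamma_{g,\ell'}$ used implicitly in $E_g(\ell)$ via \eqref{eq:muarrsens} and Lemma \ref{lem:vararrsens} are consistent with the updates performed by the \texttt{Update}/\texttt{Restore} lines of Algorithm \ref{alg:multi-hat}, so that the error formula \eqref{eq:egl} genuinely evaluates the post-suppression error. Combining the invariant at termination with the fact that $E_g$ upper bounds the true worst-case error of $M_g^\text{clip}$ (the clipping bias terms being deterministic worst-case bounds and $\E[\lVert\overline{Z}\rVert]$ capturing the expected noise magnitude) completes the proof.
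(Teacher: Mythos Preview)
Your proposal is correct and follows the same approach that the paper indicates (the paper does not give an explicit proof, merely noting that the proposition ``holds, similar to Corollary \ref{cor:naive}'' and remarking that ``by design, the algorithm \textsc{Clip-User} maintains the worst-case error across grids as $E$, at every stage of its execution''). Your argument simply spells out these two observations: the privacy claim via Theorem \ref{thm:improvcomp} applied to the post-suppression occupancy sets, and the error claim via the invariant enforced by the conditional check in Algorithm \ref{alg:multi-hat}.
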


{Furthermore, post the execution of the \textsc{Clip-User} procedure, it may be desirable to reduce the worst-case estimation error across grids further, by choosing a certain strategy for clipping user contributions. We outline such a technique in the following subsection.}
   
   \subsubsection{{Improving worst-case error post  execution of Algorithm \ref{alg:multi-hat}}}
   \label{sec:tkde}
   Now that we have (potentially) reduced the expected privacy loss degradation via the execution of \textsc{Clip-User}, while maintaining the worst-case error across grids as $E$, we discuss a simple {pseudo-user creation-based} strategy, drawing on \cite{dp_preprint}, which seeks to reduce this worst-case error across grids. Let $\{\Gamma_{g,\ell}: g\in \mathcal{G}, \ell \in \mathcal{L}_g\}$ denote the distribution of user contributions across grids, for a fixed instantiation of user contributions \emph{post suppression} via \textsc{Clip-User}. Here, $\mathcal{L}_g$ denotes the set of users with non-zero contributions in grid $g$, \emph{post} the execution of \textsc{Clip-User}. 
   
   In an attempt to reduce the worst-case error across grids further, we clip the contributions of \emph{all} users in a grid $g$ to some value $m\in [\Gamma_{g,\star}:\Gamma_g^\star]$, where $\Gamma_{g,\star}:= \min_{\ell \in \mathcal{L}_g} \Gamma_{g,\ell} $ and $\Gamma^\star_g:= \max_{\ell \in \mathcal{L}_g} \Gamma_{g,\ell} $. More precisely, for any fixed grid $g$, we pick the first $\overline{\Gamma}_{g,\ell}$ contributions of each user $\ell\in \mathcal{L}_g$, where $ \overline{\Gamma}_{g,\ell}:= \min\{\Gamma_{g,\ell}, m\}$, for some $m\in [\Gamma_{g,\star}:\Gamma_g^\star]$. This corresponds to using a \emph{pseudo-user} creation-based clipping strategy \cite{dp_preprint}, as mentioned in Section \ref{sec:array-av}.
   
   We then compute the sensitivities $\overline{\Delta}_{g,\mu_\text{clip}}$ and $\overline{\Delta}_{g,\textsf{Var}_\text{clip}}$ of the resultant clipped estimators of the sample mean and variance, respectively, using \eqref{eq:muarrsens} and Lemma \ref{lem:vararrsens} and the above values of $\{\overline{\Gamma}_{g,\ell}: \ell \in \mathcal{L}_g\}$. We also compute the clipping errors (or bias) introduced, which we call $\overline{E}_{g, {\mu}}$ and $\overline{E}_{g, \textsf{Var}}$, using Theorems \ref{thm:meanworst} and \ref{thm:varworst}, with $\{ \overline{\Gamma}_{g,\ell}: \ell \in \mathcal{L}_g\}$ corresponding to the clipped user contributions and $\{{m_{g,\ell}}\}$ corresponding to the original user contributions. Here, note that we use $ \overline{\Gamma}_{g,\ell} = 0$ for those users $\ell \in \mathcal{L}$ with ${m_{g,\ell}} >0$ and $\Gamma_{g,\ell} = 0$. We then set
   \[
   \overline{E}_g(m):= \overline{E}_{g, {\mu}}+\overline{E}_{g, \textsf{Var}}+\frac{2\, \overline{\Delta}_{g,\mu_\text{clip}}}{\epsilon}+\frac{2\, \overline{\Delta}_{g,\textsf{Var}_\text{clip}}}{\epsilon}
   \]
   as the overall error post pseudo-user creation-based clipping in grid $g$, corresponding to a fixed value of $m$. Note that the errors involving the sensitivity terms correspond to a mechanism that adds Laplace noise to each of the clipped mean and variance functions, tuned to the sensitivities $\overline{\Delta}_{g,\mu_\text{clip}}$ and $\overline{\Delta}_{g,\textsf{Var}_\text{clip}}$, respectively, with privacy loss parameter set to be $\epsilon/2$. {As mentioned in Section \ref{sec:baseline}, the choice of each privacy loss parameter being $\epsilon/2$ can potentially be further optimized by practitioners, taking into account the values of $\overline{\Delta}_{g,\mu_\text{clip}}$ and $\overline{\Delta}_{g,\textsf{Var}_\text{clip}}$.} We then compute
   \begin{equation}
   	\label{eq:tkde1}
   \overline{E}_g:= \min_{m\in [\Gamma_{g,\star}\,:\,\Gamma_g^\star]} \overline{E}_g(m),
   \end{equation}
   and repeat these computations for each grid $g\in \mathcal{G}$. Finally, we set 
   \[
   \overline{E} = \overline{E}_\epsilon:= \max_{g\in \mathcal{G}} \overline{E}_g
   \]
   to be the new worst-case error across all grids.
   
  {In the next section, we empirically evaluate the performance of Algorithm \ref{alg:multi-hat}, via the gains in privacy loss under composition, and the pseudo-user creation-based procedure described in this subsection, on real-world and synthetic datasets.}

\section{Numerical Results}
\label{sec:results}
In this section, we test the performance of \textsc{Clip-User} on {real-world and} synthetically generated datasets, via the privacy loss degradation $K = K_\epsilon$ obtained at the end of its execution. {We also provide experimental results on potential improvements in the worst-case error obtained at the end of the execution of \textsc{Clip-User}, using the pseudo-user creation-based procedure described in Section \ref{sec:tkde}.} We first describe our experimental setup and then numerically demonstrate the improvements obtained in the privacy loss degradation factor by running \textsc{Clip-User} on these synthetic datasets.
\subsection{Experimental Setup}
\label{sec:experiment}
Since this work concentrates on \emph{worst-case} errors in estimation, it suffices to specify a dataset $\mathcal{D}$ by simply the collection $\{m_{g,\ell}:\ell \in \mathcal{L},\ g\in \mathcal{G}\}$ of user contributions across grids. {We shall first present results obtained from the execution of Algorithm \ref{alg:multi-hat} and the procedure in Section \ref{sec:tkde} on real-world Intelligent Traffic Management System (ITMS) data from an Indian city. Next, we shall generate a synthetic dataset, the distributions of the user contributions in which are picked to model datasets with a single ``heavy-hitter'' user, who contributes the largest number of samples in each grid.}

\subsubsection{Real-world ITMS dataset}
{The ITMS dataset that we consider contains records of the data provided by IoT sensors deployed in an Indian city containing, among other information, the license plates of buses, the location at which the data was recorded, a timestamp, and the instantaneous speed of the bus. {The upper bound $U$ on the speeds of the buses is taken as $65$ (in km/hr), based on the maximum possible reading of the sensors installed on public buses in certain Indian cities.} For the purpose of analysis, we divide the total area in the city of interest into hexagon-shaped grids, using Uber's Hexagonal Hierarchical Spatial Indexing System (H3) \cite{h3}. Furthermore, we quantize the timestamps present in the data records into 1 hour timeslots. We focus on the 9 AM--10 AM timeslot on a fixed day and pick those data records that pertain to the $50$ hexagons (or grids) that contain the largest total number of user contributions. We hence seek to privately release the sample means and variances of speeds of the buses in the chosen grids. Here, we have that $G=50$ and the number of users who contribute at least one sample to some grid among those chosen is $L = 223$. Furthermore, the largest number of grids that any user ``occupies'' is $G_1 = 11$.}

\subsubsection{Synthetic dataset generation}

To this end, we work with the following distribution on the values $\{m_{g,\ell}\}$, which we believe is a reasonable, although much-simplified, model of real-world traffic datasets. We fix a number of grids $G=12$ and a number of users $L = 2^{12}-1 = 4095$. {The upper bound $U$ on the speed values is taken to be $65$, as before.}
\begin{enumerate}[label = \roman*.]
	\item \textbf{User Occupancies}: We index the users $\ell \in \mathcal{L}$ from $1$ to $L$. Any user $\ell\in [2^j:2^{j+1}-1]$ occupies (or, has non-zero contributions in) exactly $G-j = 12-j$ grids, where $j\in [0:G-1]$. It is clear that in this setting, we have $G_1\geq G_2\geq \ldots \geq G_L$.
	
	Now, consider any user $\ell$ that occupies $k$ grids. We identify these $k$ grids among the $G$ overall grids by sampling a subset of $\mathcal{G}$ of cardinality $k$, uniformly at random.
	
	\item \textbf{Number of contributions}: For a user $\ell$ that occupies grids $g_1,\ldots, g_k$, for $k$ fixed as above, we sample the number of his/her contributions in grid $g_i$, $i\in [k]$ as $m_{g_i,\ell} \sim \text{Geo}(q)$, where $\text{Geo}(q)$ denotes the geometric distribution with parameter $q\in [0,1]$. In particular,
	\[
	\Pr[m_{g_i,\ell} = m] = q\cdot (1-q)^{m-1},\ m\in \{1,2,\ldots\}.
	\]
	\item \textbf{Scaling the maximum contributions}: For each grid $g\in \mathcal{G}$, we identify a single user $\ell \in \argmax_{\ell'\in \mathcal{L}_g} m_{g,\ell'}$ and scale his/her number of contributions as ${m_{g,\ell}} \gets (1+\gamma){m_{g,\ell}}$, for a fixed $\gamma>0$.
\end{enumerate}
We mention that Step 3 above is carried out to model most real-world datasets where there exists one {``heavy-hitter''} user who contributes more samples than any other user, in each grid. Furthermore, note that the actual speed samples $\{S_{g,\ell}\}$ contributed by users across grids could be arbitrary, but these values do not matter in our analysis, since we work with the worst-case estimation errors.

\subsubsection{Estimating Expected Privacy Loss Degradation}

For the real-world ITMS dataset, we simply execute the \textsc{Clip-User} algorithm and numerically compute the privacy loss degradation factor $K_\epsilon$, for each $\epsilon\in [0.1,2]$. We then set $P_\epsilon = K_\epsilon\epsilon$ and run the pseudo-user creation-based algorithm described in Section \ref{sec:tkde}, post the execution of \textsc{Clip-User}. We let $\overline{E}_\epsilon$ to be the worst-case error across grids after the execution of the pseudo-user creation-based algorithm.

{Since the $\{m_{g,\ell}\}$ values in the synthetic datasets are chosen randomly, we compute Monte Carlo estimates of the privacy loss under composition after the execution of \textsc{Clip-User} and the worst-case error post the execution of the pseudo-user creation-based algorithm.} More precisely, for a fixed $\gamma, q$, we draw $10$ collections of (random) $\{m_{g,\ell}\}$ values. On each such collection of values, representing a dataset $\mathcal{D}$, we execute \textsc{Clip-User} and compute the privacy loss degradation factor $K_\epsilon$ for $\epsilon \in [0.1,1]$. We mention that in our implementation of \textsc{Clip-User}, we refrain from clipping user contributions in that grid $\mathsf{g} = \argmin_{g\in \mathcal{G}} E_g$, for $E_g$ as in \eqref{eq:alg1}. As an estimate of the expected privacy loss degradation for the given $\gamma, q$ parameters, we compute the Monte-Carlo average
\[
\widehat{P}_\epsilon:= \frac{1}{10}\sum_{i=1}^{10}K^{(i)}_\epsilon\epsilon,
\]
where the index $i\in [10]$ denotes a sample collection of $\{m_{g,\ell}\}$ values as above, with $K^{(i)}_\epsilon$ denoting the privacy loss degradation returned by \textsc{Clip-User} for these values.

As before, for our simulations, for a fixed $\gamma, q$, we draw $10$ collections of (random) $\{m_{g,\ell}\}$ values. On each such collection of values, we execute \textsc{Clip-User} and the pseudo-user creation-based clipping strategy above for $\epsilon \in [0.1,1]$. As an estimate of the expected worst-case error across grids \emph{post} the execution of \textsc{Clip-User}, for the given $\gamma, q$ parameters, we compute the Monte-Carlo average
\[
\widehat{\overline{E}}_\epsilon:= \frac{1}{10}\sum_{i=1}^{10}\overline{E}^{(i)}_\epsilon,
\]
where the index $i\in [10]$ denotes a sample collection of $\{m_{g,\ell}\}$ values as above, with $\overline{E}^{(i)}_\epsilon$ denoting the worst-case error across grids for these values.

\subsection{Simulations}
Given the experimental setup described in the previous section, we now provide simulations that demonstrate the performance of \textsc{Clip-User} and the pseudo-user creation-based clipping strategy with regard to estimates of the expected privacy loss degradation and the expected worst-case error across grids. 

{For the real-world ITMS dataset, Figure \ref{fig:itms1} shows a plot of $P_\epsilon$ against $\epsilon$, as $\epsilon$ ranges from $0.1$ to $2$. We compare this plot with a plot of the original cumulative privacy loss $G_1\epsilon = 11\epsilon$ that we have prior to the exectution of \textsc{Clip-User}. The $\epsilon$-axis is shown on a log-scale, here. From the plots, we see a clear improvement in the privacy loss under composition, for most values of $\epsilon$ considered. Figure \ref{fig:itms2} plots  the worst-case error  ${\overline{E}}_\epsilon$ obtained after the execution of the pseudo-user creation-based mechanism (which in turn is run after the execution of \textsc{Clip-User}) against the original worst-case error $E = E_\epsilon$ prior to the execution of \textsc{Clip-User}. We observe that there is little to no improvement in the worst-case error, across all grids. One reason for this phenomenon could be the fact that the grid contributing to the original worst-case error $E_\epsilon$ contains very few contributions from any user; an execution of the pseudo-user creation-based algorithm may hence not afford much improvement in the worst-case error, since the optimization over $m\in [\Gamma_{g,\star}:\Gamma_g^\star]$ in \eqref{eq:tkde1} may be over very few values.}

\begin{figure}
	\centering
	\includegraphics[width = \linewidth]{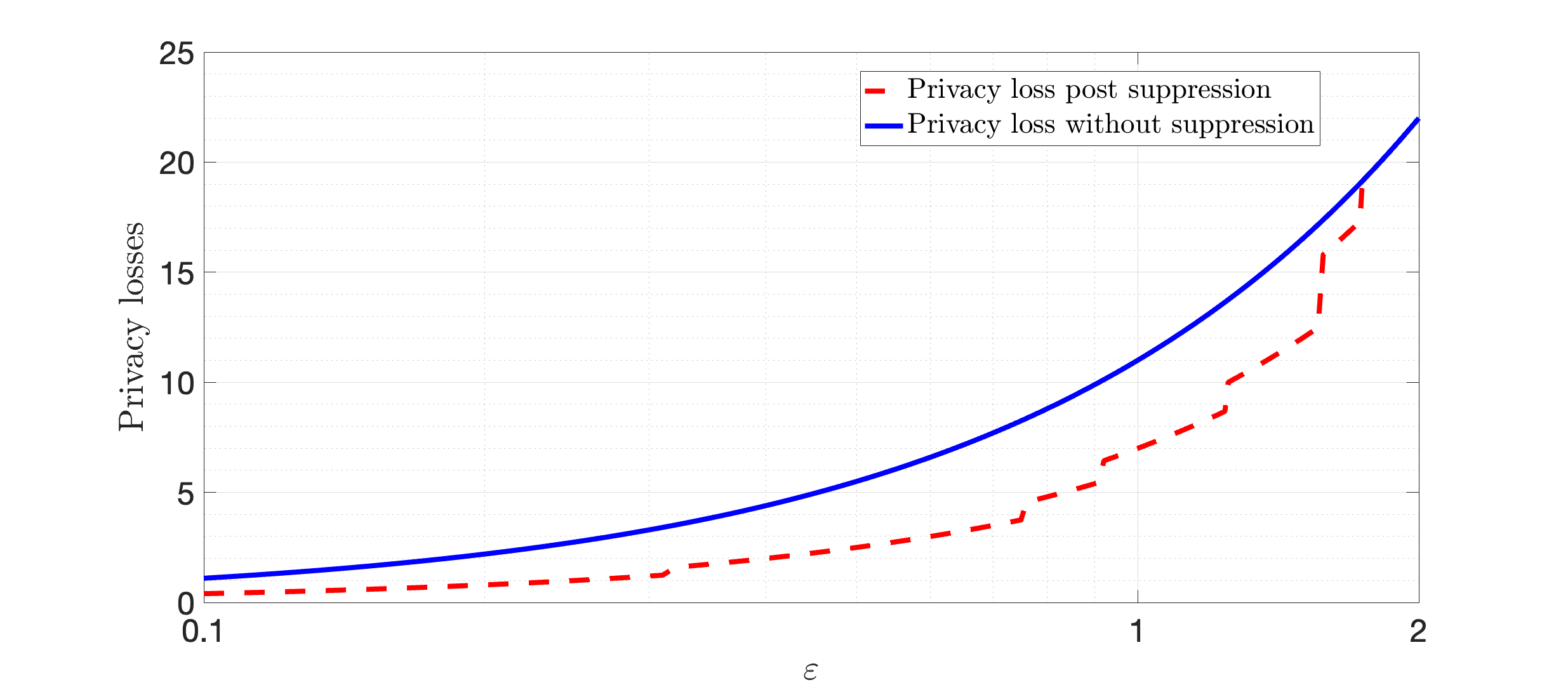}
	\caption{Plot of privacy loss under composition ${P}_\epsilon$ after execution of \textsc{Clip-User} on the real-world ITMS dataset, against the original privacy loss $G_1\epsilon = 11\epsilon$.}
	\label{fig:itms1}
\end{figure}

\begin{figure}
	\centering
	\includegraphics[width = \linewidth]{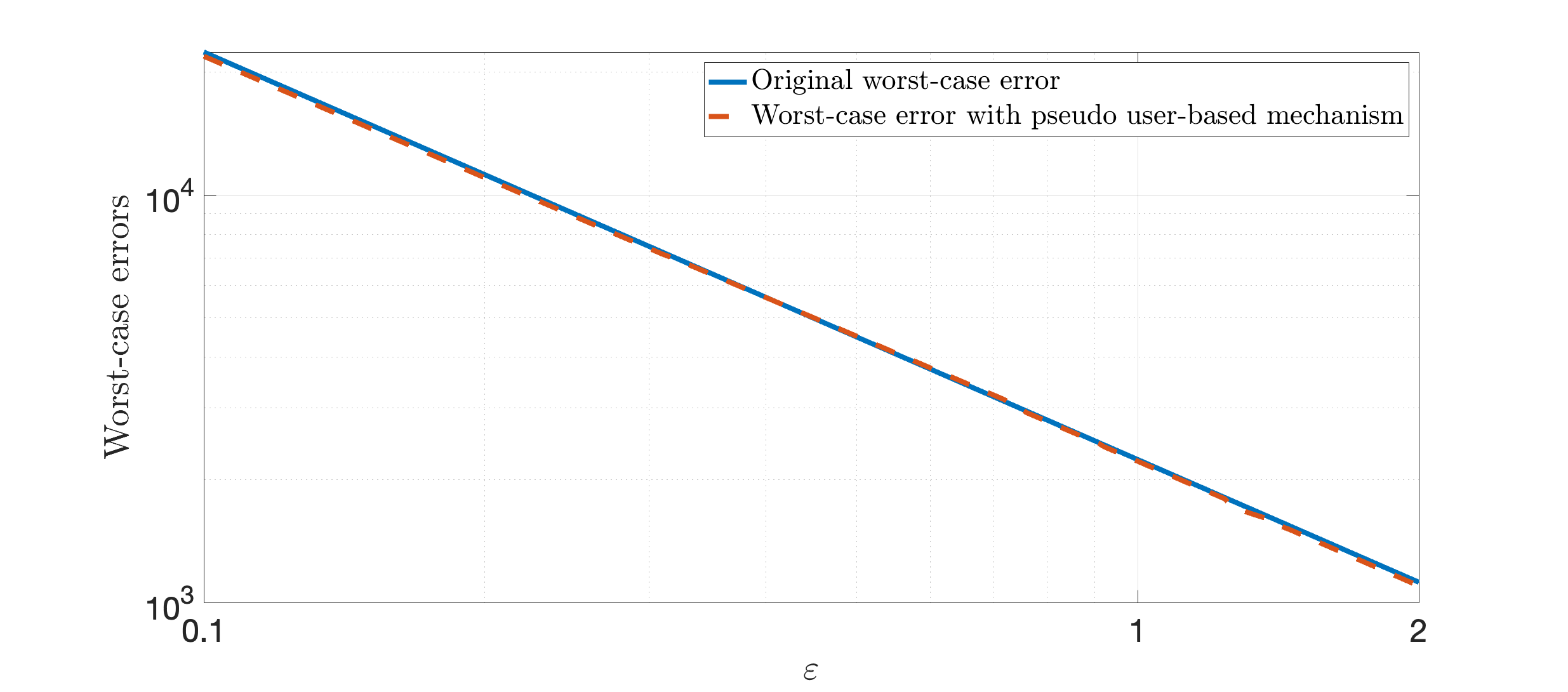}
	\caption{Plot of worst-case error $\overline{E}_\epsilon$ after execution of \textsc{Clip-User} and the implementation of the pseudo-user creation-based clipping strategy on the real-world ITMS dataset, against the original worst-case error $E_\epsilon$.}
	\label{fig:itms2}
\end{figure}

For the synthetic datasets, Figures \ref{fig:K3}--\ref{fig:K9} show plots of the variation of the estimate $\widehat{P}_\epsilon$ of the expected privacy loss  against the original privacy loss  $G\epsilon = G_1\epsilon$ prior to the execution of \textsc{Clip-User}. The $\epsilon$-axis is shown on a log-scale, here. From the plots, it is clear that for a fixed $q\in [0,1]$, increasing $\gamma$ improves the privacy loss degradation. Intuitively, a large value of $\gamma$ leads to a large sensitivity of the unclipped mean and variance (and therefore a large worst-case error $E$); therefore, it is reasonable to expect many stages of \textsc{Clip-User} to execute before the algorithm halts, in this case.

\begin{figure}
	\centering
	\includegraphics[width = \linewidth]{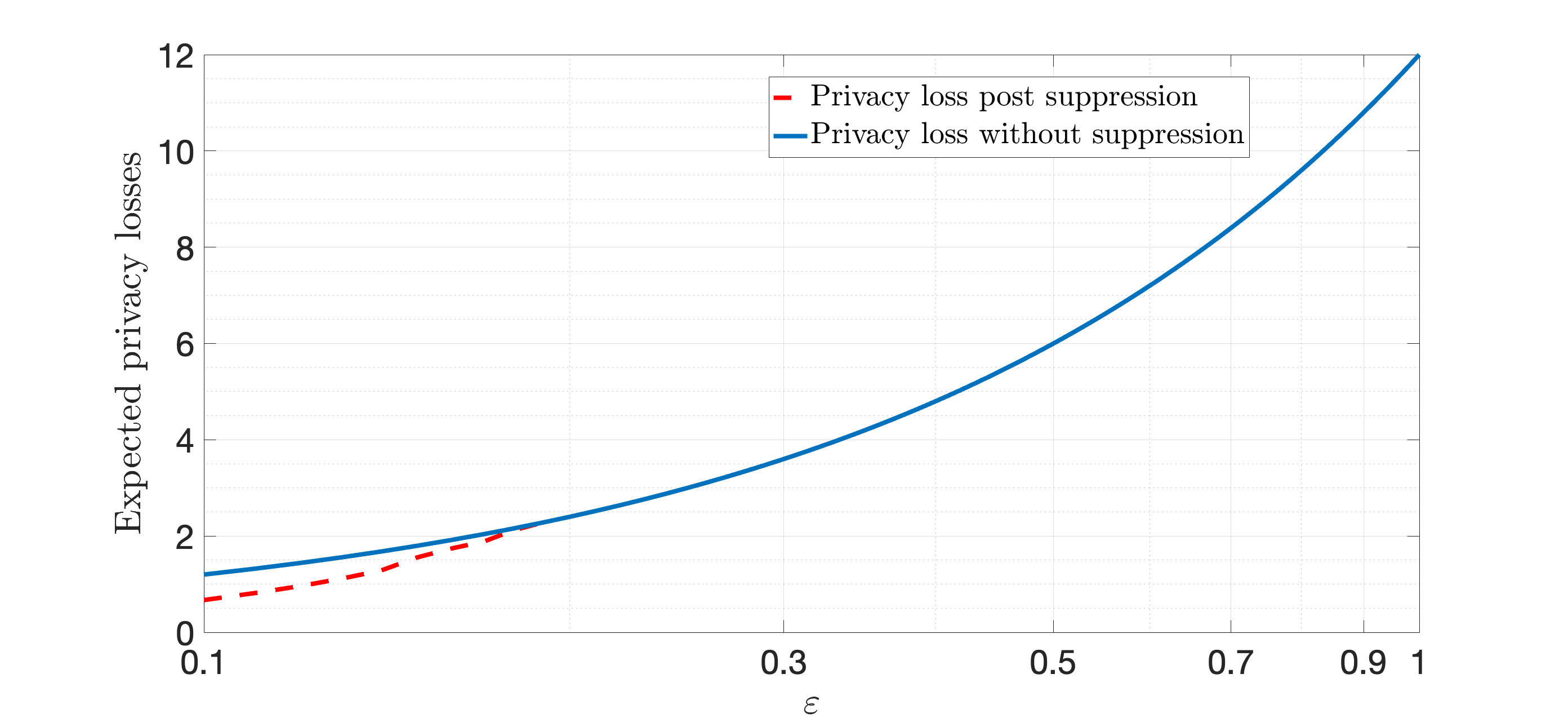}
	\caption{Plot of estimate $\widehat{P}_\epsilon$ of expected privacy loss, after execution of \textsc{Clip-User} on a synthetic dataset,  against the original privacy loss  $G\epsilon$. Here, $\gamma = 3$ and $q = 0.01$.}
	\label{fig:K3}
\end{figure}

\begin{figure}
	\centering
	\includegraphics[width = \linewidth]{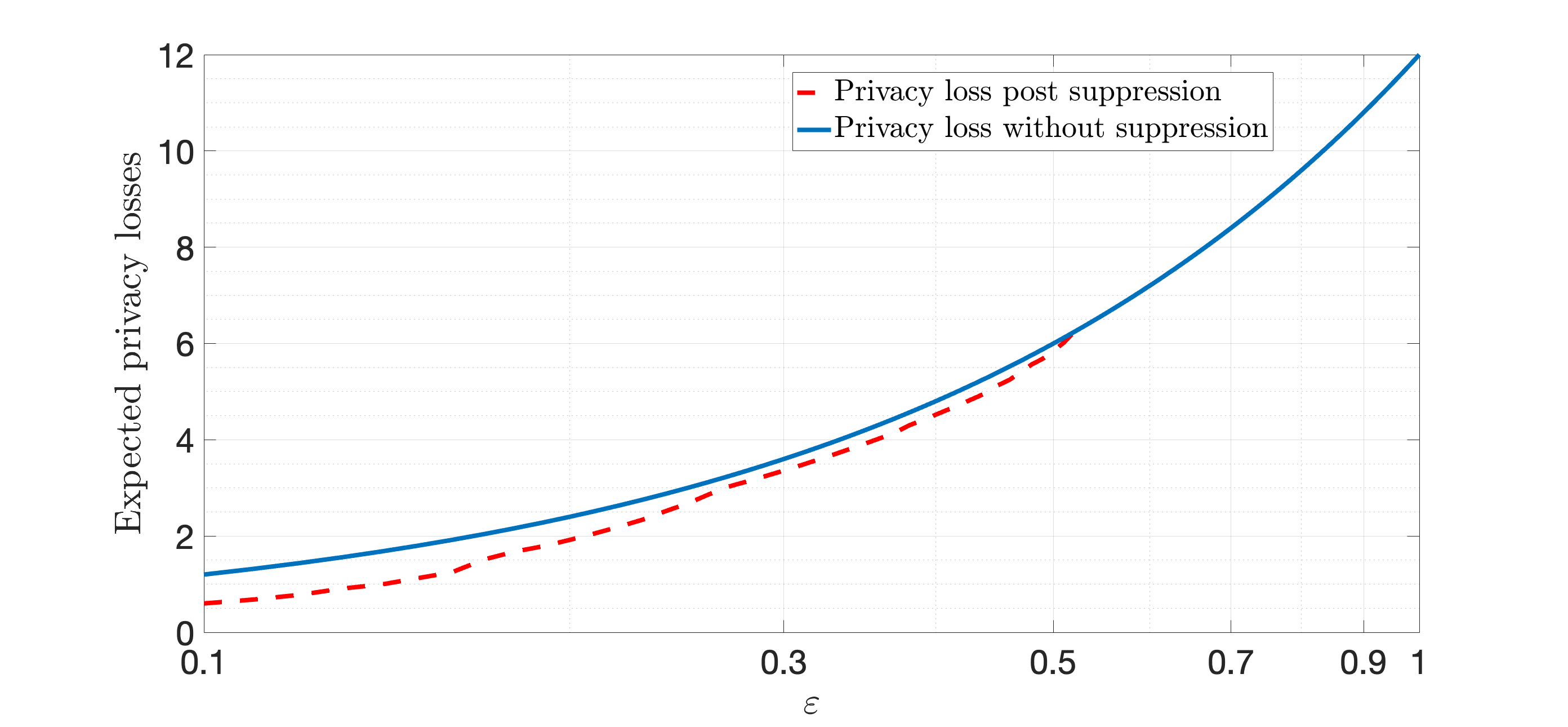}
	\caption{Plot of estimate $\widehat{P}_\epsilon$ of expected privacy loss, after execution of \textsc{Clip-User} on a synthetic dataset, against the original privacy loss  $G\epsilon$. Here, $\gamma = 6$ and $q = 0.01$.}
	\label{fig:K6}
\end{figure}

\begin{figure}
	\centering
	\includegraphics[width = \linewidth]{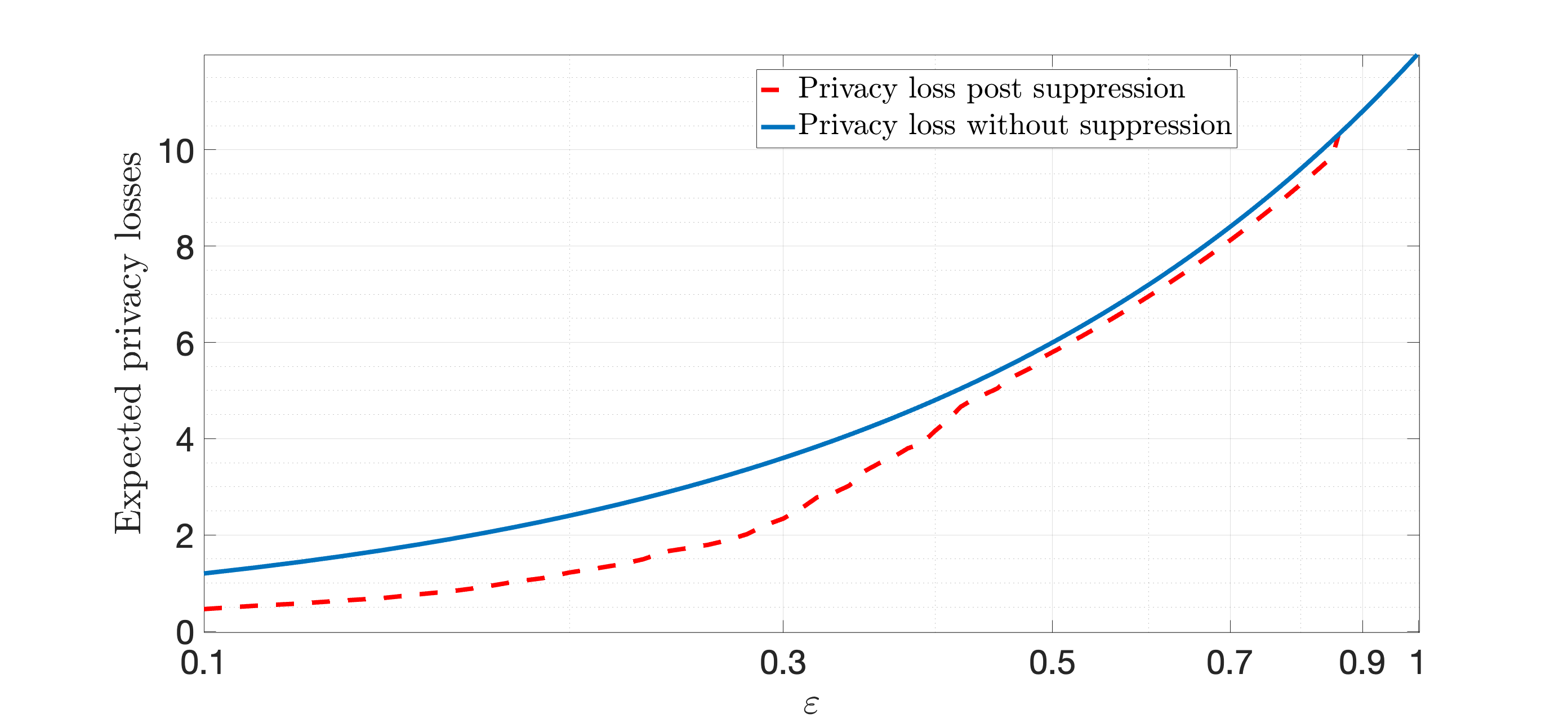}
	\caption{Plot of estimate $\widehat{P}_\epsilon$ of expected privacy loss, after execution of \textsc{Clip-User} on a synthetic dataset, against the original privacy loss  $G\epsilon$.  Here, $\gamma = 9$ and $q = 0.01$.}
	\label{fig:K9}
\end{figure}

Figures \ref{fig:err3}--\ref{fig:err9} for the synthetic datasets show plots of the variation of the estimate of the expected worst-case error across grids $\widehat{\overline{E}}_\epsilon$ against the original worst-case error $E = E_\epsilon$ prior to the execution of \textsc{Clip-User}. Both the $\epsilon$- and the error-axes are shown on a log-scale. Again, it is clear that for a fixed $q\in [0,1]$, increasing $\gamma$ leads to a larger difference between the original and the new worst-case errors, following similar intuition {as earlier}. 

\begin{figure}
	\centering
	\includegraphics[width = \linewidth]{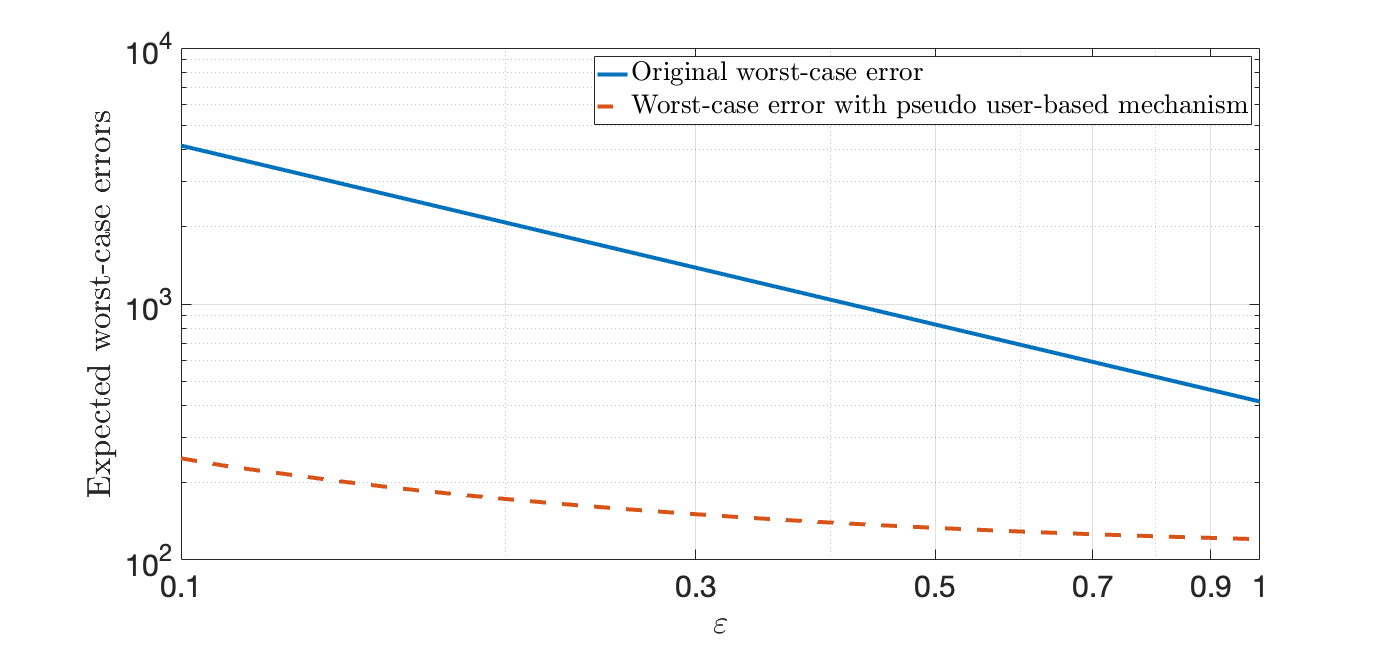}
	\caption{Plot of estimate $\widehat{\overline{E}}_\epsilon$ of the worst-case error across grids, after execution of \textsc{Clip-User} and the implementation of the pseudo-user creation-based clipping strategy on a synthetic dataset, against the original worst-case error $E_\epsilon$. Here, $\gamma = 3$ and $q = 0.01$.}
	\label{fig:err3}
\end{figure}

\begin{figure}
	\centering
	\includegraphics[width = \linewidth]{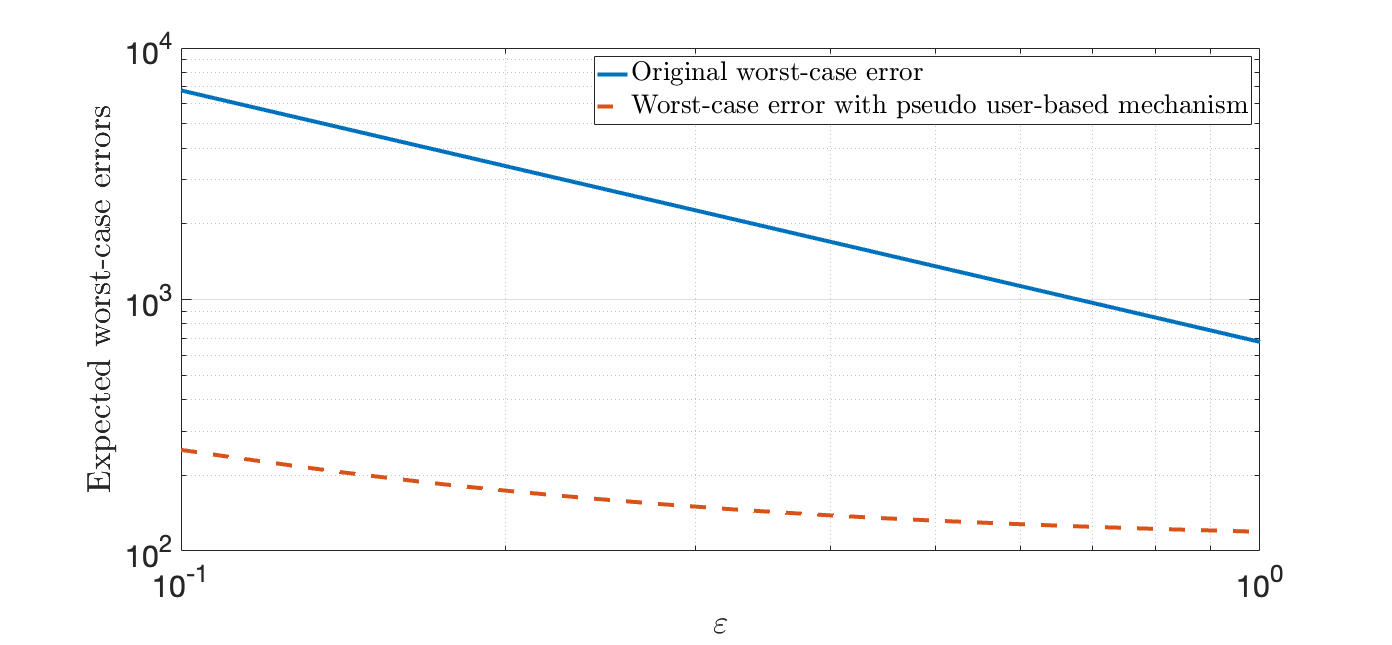}
	\caption{Plot of estimate $\widehat{\overline{E}}_\epsilon$ of the worst-case error across grids, after execution of \textsc{Clip-User} and the implementation of the pseudo-user creation-based clipping strategy on a synthetic dataset, against the original worst-case error $E_\epsilon$. Here, $\gamma = 6$ and $q = 0.01$.}
	\label{fig:err6}
\end{figure}

\begin{figure}
	\centering
	\includegraphics[width = \linewidth]{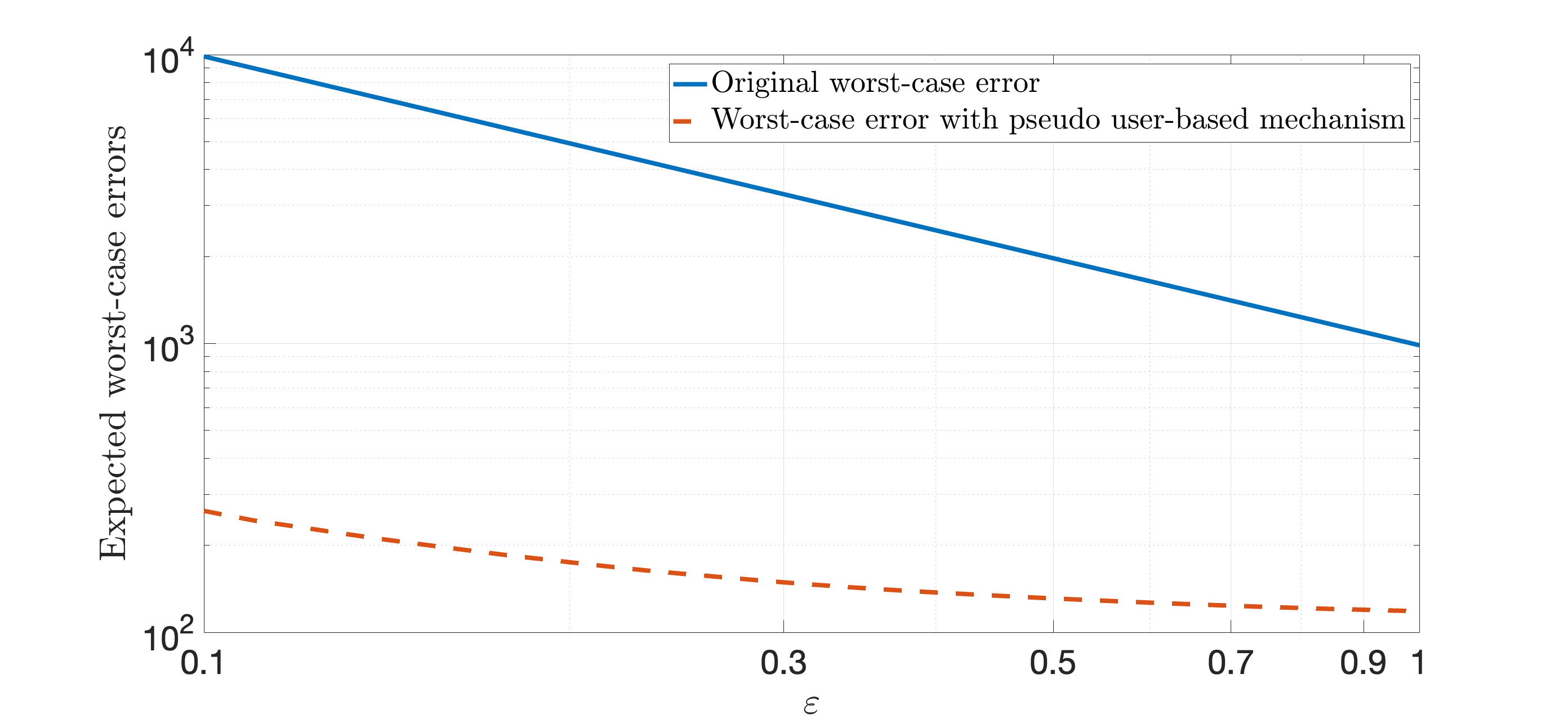}
	\caption{Plot of estimate $\widehat{\overline{E}}_\epsilon$ of the worst-case error across grids, after execution of \textsc{Clip-User} and the implementation of the pseudo-user creation-based clipping strategy on a synthetic dataset, against the original worst-case error $E_\epsilon$. Here, $\gamma = 9$ and $q = 0.01$.}
	\label{fig:err9}
\end{figure}

\section{Conclusion}
\label{sec:conclusion}
In this paper, we {considered the worst-case error incurred in the differentially private release of sample means and variances of disjoint subsets of a dataset, in the setting where each user can contribute more than one sample to the dataset, with different users potentially contributing different numbers of samples}. We proposed an algorithm for improving the privacy loss degradation under the composition of user-level (pure) differentially private mechanisms that act on disjoint subsets of a dataset, in such a manner as to maintain the \emph{worst-case} error in estimation over all such subsets. The basic idea behind our algorithm was the suppression of user contributions in selected subsets to improve the privacy loss degradation, while not increasing the worst-case estimation error. 
Key components of the design of our algorithm were our explicit, analytical computations of the sensitivity of the sample variance function and the worst-case bias errors in estimation of the variance arising from clipping selected contributions of users. We also presented a simple extension of a pseudo-user creation-based algorithm, drawing ideas from \cite{dp_preprint}, for reducing the worst-case error across subsets, when the number of users contributing to any subset is fixed. Finally, we evaluated the performance of our algorithms numerically on real-world and synthetically generated datasets, showing discernible improvements in the privacy loss under composition for fixed worst-case estimation error and in the worst-case error across grids, for fixed numbers of user contributions.

{An interesting line of future research would be to extend the analysis here to the setting of (approximate) $(\epsilon,\delta)$-DP; crucially, such analysis will involve making use of advanced composition theorems (see, e.g., \cite[Sec. 3.5.2]{dworkroth}, \cite{gaussiandp}) for privacy accounting, instead of the basic composition theorem. Further, our analysis of the worst-case (bias) errors in clipping-based user-level differentially private estimators can be extended to other statistics of interest}, thereby leading to natural algorithms (along the lines of \textsc{Clip-User}) for improving privacy loss under composition, for fixed worst-case error. Another research direction would be quantifying the tradeoffs between privacy loss and estimation error in learning-based or combinatorial optimization-based differentially private inference tasks on geospatial datasets.

\bibliographystyle{IEEEtran}
{\footnotesize
	\bibliography{references.bib}}

\appendices
\section{Proof of Lemma \ref{lem:app1}}
\label{app:lem-app-1}
In this section, we prove Lemma \ref{lem:app1}.
\begin{proof}
	First, we write 
	\begin{align*}
		\Delta_{\textsf{Var}} &=  \max_{\mathcal{D} \sim \mathcal{D}'} \left( \textsf{Var}(\mathcal{D}) - \textsf{Var}(\mathcal{D}')\right)\\
		&= \max_{\mathcal{D}} \left(\textsf{Var}(\mathcal{D}) - \min_{\mathcal{D}'\sim \mathcal{D}}\textsf{Var}(\mathcal{D}')\right).	
	\end{align*}
	Now, for a fixed dataset $\mathcal{D}$, consider $\textsf{Var}(\mathcal{D}')$, for $\mathcal{D}'\sim \mathcal{D}$. Let $\tilde{X}\sim \text{Unif}(\tilde{A}\cup A^c)$ denote a uniformly distributed random variable that takes values in the set $\left\{\tilde{S}_\ell^{(j)}\right\}$. Then,
	\begin{align*}
		\textsf{Var}(\mathcal{D}')&= \E\left[(\tilde{X}-\tilde{\nu})^2\right]\\
		&= \E\left[(\tilde{X}-\tilde{\nu})^2\mid \tilde{X}\in \tilde{A}\right] \Pr[\tilde{X}\in \tilde{A}]\ +  \E\left[(\tilde{X}-\tilde{\nu})^2\mid \tilde{X}\in {A}^c\right]  \Pr[\tilde{X}\in A^c]\\
		&\stackrel{(a)}{=} \E\left[(\tilde{X}-\tilde{\nu})^2\mid \tilde{X}\in \tilde{A}\right]\cdot \left(\frac{m_k}{\sum_\ell m_\ell}\right)\ +  \E\left[(\tilde{X}-\tilde{\nu})^2\mid \tilde{X}\in {A}^c\right]\cdot \left(1- \frac{m_k}{\sum_\ell m_\ell}\right).
	\end{align*}
	Now, consider the term $\E\left[(\tilde{X}-\tilde{\nu})^2\mid \tilde{X}\in \tilde{A}\right]$ above. We can write
	\begin{align*}
		&\E\left[(\tilde{X}-\tilde{\nu})^2\mid \tilde{X}\in \tilde{A}\right]\\ &= \E\left[(\tilde{X}-{\nu}(\tilde{A})+{\nu}(\tilde{A})-\tilde{\nu})^2\mid \tilde{X}\in \tilde{A}\right]\\
		&= \E\left[(\tilde{X}-{\nu}(\tilde{A}))^2\mid \tilde{X}\in \tilde{A}\right]+(\tilde{\nu}-{\nu}(\tilde{A}))^2\ +  2\ ({\nu}(\tilde{A})-\tilde{\nu})\cdot \E\left[(\tilde{X}-{\nu}(\tilde{A}))\mid \tilde{X}\in \tilde{A}\right].
	\end{align*}
	Clearly, since $\tilde{X}\sim \text{Unif}(\mathcal{D}')$, we have that conditioned on the event $\{\tilde{X}\in \tilde{A}\}$, we have that $\tilde{X}$ is uniform on $\tilde{A}$. Therefore, we obtain that $\E\left[(\tilde{X}-{\nu}(\tilde{A}))\mid \tilde{X}\in \tilde{A}\right] = 0$, implying that  
	\begin{equation}
		\label{eq:app1}
		\E\left[(\tilde{X}-\tilde{\nu})^2\mid \tilde{X}\in \tilde{A}\right] = \E\left[(\tilde{X}-{\nu}(\tilde{A}))^2\mid \tilde{X}\in \tilde{A}\right] + (\tilde{\nu}-{\nu}(\tilde{A}))^2.
	\end{equation}
	By similar arguments, we obtain that 
	\begin{equation}
		\label{eq:app2}
		\E\left[(\tilde{X}-\tilde{\nu})^2\mid \tilde{X}\in {A}^c\right] = \E\left[(\tilde{X}-{\nu}(A^c))^2\mid \tilde{X}\in {A}^c\right]+(\tilde{\nu}-{\nu}({A}^c))^2.
	\end{equation}
	Substituting \eqref{eq:app1} and \eqref{eq:app2} into equality (a) above, we get that
	\begin{align*}
		\textsf{Var}(\mathcal{D}')&=  \left(\E\left[(\tilde{X}-{\nu}(\tilde{A}))^2\mid \tilde{X}\in \tilde{A}\right] + (\tilde{\nu}-{\nu}(\tilde{A}))^2\right)\cdot \left(\frac{m_k}{\sum_\ell m_\ell}\right)\ + \\
		&\ \ \ \ \  \ \ \ \ \ \  \  \left(\E\left[(\tilde{X}-{\nu}(A^c))^2\mid \tilde{X}\in {A}^c\right]+(\tilde{\nu}-{\nu}({A}^c))^2\right)\cdot \left(1- \frac{m_k}{\sum_\ell m_\ell}\right).
	\end{align*}
	
	Now, observe that all the terms in equality (a) above are non-negative, and hence $\textsf{Var}(\mathcal{D}')$ is minimized by setting $\tilde{\nu} = {\nu}(\tilde{A}) = {\nu}({A}^c) = \tilde{S}_k^{(j)}$, for all $j\in [m_k]$. 
\end{proof}

\section{Proof of Lemma \ref{lem:app2}}
\label{app:lem-app-2}
In this section, we shall prove Lemma \ref{lem:app2}.

\begin{proof}
	Recall that
	\[
	\Delta_{\textsf{Var}} = \textsf{Var}(\mathcal{D}_1) - \textsf{Var}(\mathcal{D}_2),
	\]
	with $\mathcal{D}_2$ chosen as in the discussion preceding this lemma. Thus, for random variables $\tilde{X}\sim \text{Unif}(\tilde{A}\cup A^c)$ and $X\sim \text{Unif}(A\cup A^c)$, we have
	\begin{align*}
		\Delta_{\textsf{Var}}
		&= \max_{\mathcal{D}} \left[ \textsf{Var}(\mathcal{D}) - \E\left[(\tilde{X}-{\nu}(A^c))^2\mid \tilde{X}\in {A}^c\right]\cdot \left(1- \frac{m_k}{\sum_\ell m_\ell}\right)\right]\\
		&=  \max_{\mathcal{D}} \left[\E\left[(X-\nu)^2\right] - \E\left[(\tilde{X}-{\nu}(A^c))^2\mid \tilde{X}\in {A}^c\right]\cdot \left(1- \frac{m_k}{\sum_\ell m_\ell}\right)\right]\\
		&\stackrel{(b)}{=} \max_{\mathcal{D}} \left[\E\left[(X-\nu)^2\right] - \E\left[({X}-{\nu}(A^c))^2\mid {X}\in {A}^c\right]\cdot \left(1- \frac{m_k}{\sum_\ell m_\ell}\right)\right],
	\end{align*}
	where equality (b) follows from the fact that the distribution of $\tilde{X}$ conditioned on the event $\{\tilde{X}\in A^c\}$ is identical to that of $X$ conditioned on the event $\{X\in A^c\}$. Hence,
	\begin{align*}
		&\Delta_{\textsf{Var}}= \max_{\mathcal{D}} \Bigg[\E\left[(X-\nu)^2\mid X\in A\right]\Pr[X\in A] + \E\left[(X-\nu)^2\mid X\in A^c\right]\Pr[X\in A^c] \\
		& \ \ \ \ \  \ \ \ \ \ \ \ \ \ \ \  \ \ \ \ \ \ \  \ \ \ \ \ \ \ \ \ \ \  \ \ \ \ \ \ \  \ \ \ \ \ \ \ \ \ \ \  \ \ - \E\left[(\tilde{X}-{\nu}(A^c))^2\mid \tilde{X}\in {A}^c\right]\cdot \left(1- \frac{m_k}{\sum_\ell m_\ell}\right)\Bigg].
	\end{align*}
	Now, observe that by arguments as in the proof of Lemma \ref{lem:app1}, 
	\begin{align*}
		\E\left[(X-\nu)^2\mid X\in A^c\right]
		&= \E\left[({X}-{\nu}(A^c))^2\mid {X}\in {A}^c\right]+ (\nu-\nu(A^c))^2.
	\end{align*}
	Now, since $\Pr[X\in A^c] = \left(1- \frac{m_k}{\sum_\ell m_\ell}\right)$, we have by arguments made earlier, that 
	\begin{align}
		\Delta_{\textsf{Var}}
		&= \max_{\mathcal{D}} \bigg[\E\left[(X-\nu)^2\mid X\in A\right]\Pr[X\in A]+  (\nu-\nu(A^c))^2 \Pr[X\in A^c]\bigg] \label{eq:app3}\\
		&= \max_{\mathcal{D}:\ {S}_\ell^{(j)} = \nu(A^c), \forall {S}_\ell^{(j)}\in A^c} \textsf{Var}(\mathcal{D}), \notag
	\end{align}
	thereby proving the lemma.
\end{proof}
\section{Proof of Proposition \ref{prop:varsens}}
\label{app:versens}
In this section, we shall prove Proposition \ref{prop:varsens}. 


\begin{proof}[Proof of Proposition \ref{prop:varsens}]
	Recall from Lemma \ref{lem:app2} that
	\[
	\Delta_{\textsf{Var}} = \max_{\mathcal{D}:\ {S}_\ell^{(j)} = \nu(A^c), \forall {S}_\ell^{(j)}\in A^c} \textsf{Var}(\mathcal{D}).
	\]
	From the discussion following Lemma \ref{lem:app2}, consider the case when the maximum over $\nu(A^c)$ above is attained at $\nu(A^c) = 0$. The proof for the case when $\nu(A^c) = U$ follows along similar lines, and is hence omitted. In this case,
	\[
	\Delta_{\textsf{Var}} = \max_{{S}_\ell^{(j)} \in A:\ {S}_\ell^{(j)} = 0, \forall {S}_\ell^{(j)}\in A^c} \textsf{Var}(\mathcal{D}).
	\]
	In this setting, $\nu \leq \frac{Um_k}{\sum_\ell m_\ell}$. Two possible situations arise: (i) when $\sum_\ell m_\ell>2m^\star$, and (ii) when $\sum_\ell m_\ell\leq 2m^\star$. Consider the first situation. In this case, observe that $\nu \leq \frac{Um^\star}{\sum_\ell m_\ell}< U/2$. Further, from the Bhatia-Davis inequality \cite{davisbhatia}, we have $\textsf{Var}(\mathcal{D})\leq \nu(U-\nu) =: b(\nu)$. Hence, for the range of $\nu$ values of interest, we have that $b(\nu)$ is strictly increasing in $\nu$. Hence,
	\begin{align*}
		\Delta_{\textsf{Var}}&\leq \max_{\mathcal{D}:\ {S}_\ell^{(j)} = 0, \forall {S}_\ell^{(j)}\in A^c}\nu(U-\nu)\leq \frac{U^2\ m^\star(\sum_\ell  m_\ell - m^\star)}{\left(\sum_\ell m_\ell\right)^2},
	\end{align*}
with the inequalities above being achieved with equality when $S_k^{(j)} = U$, for all $j\in [m_k]$, and when $m_k = m^\star$. 
Next, consider the situation when {$\sum_\ell m_\ell\leq 2m^\star$}, and suppose that $\sum_\ell m_\ell$ is even. In this case, we have that $|A|\geq |A^c|$. For this setting, first note that 
\[
\Delta_{\textsf{Var}}\leq \max_{\mathcal{D}\in \mathsf{D}} \textsf{Var}(\mathcal{D}) = \max \textsf{Var}(W) = \frac{U^2}{4},
\]
for $W\sim \text{Unif}(A\cup A^c)$. To see why the above bound holds, note that for any bounded random variable $Y\in [0,U]$, we have that $$\textsf{Var}(Y) = \textsf{Var}(Y - U/2) \leq U^2/4.$$ Furthermore, equality above is attained when all samples in $A^c$ take the value $0$ (which is in line with the case of interest where $\nu(A^c) = 0$) and $\frac{|A|-|A^c|}{2}$ samples in $A$ take the value $0$ and the remaining samples take the value $U$. This then results in exactly $\frac{\sum_\ell m_\ell}{2}$ samples being $0$ and an equal number of samples being $U$, resulting $\Delta_{\textsf{Var}} = U^2/4$.

Next, consider the case when $\sum_\ell m_\ell$ is odd. In this setting, it is not possible to ensure that equal number of samples (from $A\cup A^c$) are at $0$ and $U$, thereby implying that the true value of $\textsf{Var}(\mathcal{D})$, with $S_j^{(\ell)} = 0$, for all $S_j^{(\ell)} \in A^c$, in this case is smaller than $U^2/4$. We claim that in the case when the total number, $\sum_\ell m_\ell$, of samples is odd, the variance of a bounded random variable $Y\in [0,U]$ that takes values in $\left\{S_j^{(\ell)}\right\}$ obeys
\begin{equation}
	\label{eq:varodd}
\textsf{Var}(Y) \leq \frac{U^2}{4}\cdot \left(1-\frac{1}{(\sum_\ell m_\ell)^2}\right);
\end{equation}
furthermore, this bound is achieved when $\left \lceil \frac{\sum_\ell m_\ell}{2}\right \rceil$ samples take the value $0$ and $\left \lfloor \frac{\sum_\ell m_\ell}{2}\right \rfloor$ samples take the value $U$. Modulo this claim, observe that in the case where $|A|\geq |A^c|$, the upper bound in \eqref{eq:varodd} is achievable when $S_j^{(\ell)} = 0$, for all $S_j^{(\ell)} \in A^c$, by placing $\left \lceil \frac{|A|-|A^c|}{2}\right \rceil$ samples from $A$ at the value $0$ and the remaining samples at $U$.

We now prove the above claim. To this end, we first show that any sample distribution $\left\{S_j^{(\ell)}\right\}$ that maximizes the variance above must be such that $S_j^{(\ell)}\in \{0,U\}$, for all $\ell,j$. For ease of reading, we let the samples $\left\{S_j^{(\ell)}\right\}$ be written as the collection $\{x_1,\ldots,x_n\}$, where $n = \sum_\ell m_\ell$. Now, we write
\begin{equation}
	\label{eq:variter}
\max_\mathcal{D} \textsf{Var}(Y) = \max_{x_1}\max_{x_2}\ldots\max_{x_n} \textsf{Var}(Y).
\end{equation}
Note that for fixed values of $x_1,x_2,\ldots,x_{n-1}$, the variance above is maximized when $x_n\in \{0,U\}$. To see why, let $\nu_{\sim n}$ denote the sample mean of the samples $x_1,\ldots,x_{n-1}$ and let $Y_{\sim n}$ denote the random variable that is uniformly distributed over the samples $x_1,x_2,\ldots,x_{n-1}$. By arguments as earlier, note that
\begin{align*}
	\textsf{Var}(Y)&= \left(\frac{n-1}{n}\right)\cdot \left(\E\left[(Y-\nu_{\sim n})^2\mid Y\neq x_n\right]+ (\nu-\nu_{\sim n})^2\right)+\frac{1}{n} \left(\nu-x_n\right)^2\\
	&=	\left(\frac{n-1}{n}\right)\cdot \textsf{Var}(Y_{\sim n})+ \frac{n-1}{n^2}\cdot (x_n-\nu_{\sim n})^2.
\end{align*}
Clearly, the above expression is maximized, for fixed $x_1,x_2,\ldots,x_{n-1}$, by $x_n \in \{0,U\}$, depending on the value of $\nu_{\sim n}$. This argument can then be repeated iteratively over all $x_1,\ldots,x_n$, using \eqref{eq:variter}. 

Now, since all the samples in the collection $\{x_1,\ldots,x_n\}$ take a value of either $0$ or $U$, all that remains is a maximization of $\textsf{Var}(Y)$, given this constraint. Let $k$ denote the number of samples taking the value $0$ and let $n-k$ be the number of samples taking the value $U$. In this case, $\nu = \frac{(n-k)U}{n}$. Then,
\begin{align*}
	\frac{\textsf{Var}(Y)}{U^2} &= \frac{k}{n}\cdot \left(\frac{n-k}{n}\right)^2+\frac{n-k}{n}\cdot \left(\frac{k}{n}\right)^2\\
	&= \frac{k(n-k)}{n^2}.
\end{align*}
Clearly, when $n$ is odd, the above expression is maximized when $\left \lceil \frac{n}{2}\right \rceil$ values are $0$ and the remaining $\left \lfloor \frac{n}{2}\right \rfloor$ values are $U$, proving our earlier claim.
\end{proof}
\section{On the Sensitivities Under a Special Clipping Strategy}
\label{app:array-av}
In this section, we consider a special class of clipping strategies obtained by setting $\Gamma_\ell = \min\{m,m_\ell\}$, for some fixed $m\in [m_\star:m^\star]$. Clearly, here, we have $\Gamma^\star = m$ and $\Gamma_{\star}:= \min_{\ell\in \mathcal{L}} \Gamma_{\ell} = m_\star$. Such a clipping strategy arises naturally in the design of user-level differentially private mechanisms based on the creation of pseudo-users \cite{tyagi, dp_preprint}. We show that for choices of $m$ of interest, the sensitivities of the clipped sample mean and variance are at most those of their unclipped counterparts. In particular, for the sample mean, the following lemma was shown in \cite{dp_preprint}:


\begin{lemma}[Lemma III.1 in \cite{dp_preprint}]
	For any $m\leq m^\star$, we have that $\Delta_\mu \geq {\Delta}_{\mu_\text{clip}}$.
\end{lemma}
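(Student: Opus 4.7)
My plan is to reduce the claim to an elementary algebraic inequality about the user-contribution counts, then verify it by splitting users according to whether their contributions exceed the clipping threshold $m$.

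First, I would write down the two sensitivities explicitly. From \eqref{eq:deltamean}, $\Delta_\mu = U m^\star / \sum_\ell m_\ell$. For the clipped estimator with $\Gamma_\ell = \min\{m, m_\ell\}$, equation \eqref{eq:muarrsens} gives $\Delta_{\mu_\text{clip}} = U \Gamma^\star / \sum_\ell \Gamma_\ell$, and since $m \le m^\star$ we have $\Gamma^\star = \max_\ell \min\{m, m_\ell\} = m$. So the desired inequality $\Delta_\mu \ge \Delta_{\mu_\text{clip}}$ is equivalent to
\[
m^\star \cdot \sum_{\ell \in \mathcal{L}} \min\{m, m_\ell\} \;\ge\; m \cdot \sum_{\ell \in \mathcal{L}} m_\ell.
\]

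Next, I would partition $\mathcal{L}$ into $S_1 := \{\ell : m_\ell \le m\}$ and $S_2 := \{\ell : m_\ell > m\}$. On $S_1$ the minimum is $m_\ell$, and on $S_2$ it is $m$. Rearranging the above inequality after this split, the difference (LHS $-$ RHS) becomes
\[
(m^\star - m)\sum_{\ell \in S_1} m_\ell \;+\; m\Bigl(m^\star |S_2| - \sum_{\ell \in S_2} m_\ell\Bigr).
\]
The first term is non-negative because $m \le m^\star$. The second term is non-negative because every $m_\ell$ with $\ell \in S_2$ satisfies $m_\ell \le m^\star$, so $\sum_{\ell \in S_2} m_\ell \le m^\star |S_2|$. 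Hence the difference is non-negative and the inequality holds.

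There is no real obstacle here; the only subtlety is correctly identifying $\Gamma^\star = m$ (which uses the hypothesis $m \le m^\star$, and would fail otherwise since $\Gamma^\star = \min\{m, m^\star\}$ in general). The remainder is a two-line case split on $\{m_\ell \le m\}$ versus $\{m_\ell > m\}$, exploiting the elementary bound $m_\ell \le m^\star$ uniformly in $\ell$.
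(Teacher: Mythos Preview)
Your argument is correct. The identification $\Gamma^\star = m$ is justified exactly as you say (some $\ell$ has $m_\ell = m^\star \ge m$, forcing the maximum of $\min\{m,m_\ell\}$ to equal $m$), and the algebra after the $S_1/S_2$ split checks out line by line.

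As for comparison: the paper does not actually prove this lemma; it is stated as Lemma~III.1 of \cite{dp_preprint} and simply quoted. So there is no in-paper proof to compare your approach against. Your self-contained elementary argument is a clean way to establish the claim directly.
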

We now proceed to state and prove an analogous lemma that compares the sensitivities of $\textsf{Var}_\text{clip}$ and $\textsf{Var}$. Before we proceed, observe that it is natural to restrict attention to those values of $m\in [m_\star:m^\star]$ that minimize the sensitivity of the clipped variance in \ref{lem:vararrsens}. We first show that there exists a minimizer $m\in [m_\star:m^\star]$ that takes its value in the set $\{m_\ell\}_{\ell\in \mathcal{L}}$. Let $\Delta_{\textsf{Var}_{\text{clip}}}(m) = \Delta_{\textsf{Var}_{\text{clip}}}$, for a fixed $m$. To achieve this objective, we need the following helper lemma. For ease of exposition, we assume that $m_\star = m_1\leq m_2\leq\ldots\leq m_L = m^\star$. We also assume throughout that $L\geq 2$.

\begin{lemma}
	\label{lem:sensconcave}
	$\Delta_{\textsf{Var}_{\text{clip}}}(m)$ is concave in $m$, for $m\in [m_t,m_{t+1}]$, for any $t\in [L-3]$, when $L\geq 3$.
\end{lemma}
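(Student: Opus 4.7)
The plan is to write $\Delta_{\textsf{Var}_{\text{clip}}}(m)$ as an explicit rational function of $m$ on the interval $[m_t, m_{t+1}]$ and then verify concavity by direct computation of a second derivative.

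First, I would unpack the clipping strategy on the interval. Since $m_1 \leq m_2 \leq \cdots \leq m_L$ and $m \in [m_t, m_{t+1}]$, the definition $\Gamma_\ell = \min\{m, m_\ell\}$ gives $\Gamma_\ell = m_\ell$ for $\ell \leq t$ and $\Gamma_\ell = m$ for $\ell \geq t+1$. Setting $S := \sum_{\ell=1}^{t} m_\ell$ and $N := L - t$, this yields
\[
\sum_{\ell=1}^{L} \Gamma_\ell = S + Nm, \qquad \Gamma^\star = m.
\]
For $t \in [L-3]$, we have $N \geq 3$, so
\[
\sum_{\ell=1}^L \Gamma_\ell - 2\Gamma^\star = S + (N-2)m \geq S + m > 0,
\]
which places us squarely in the first case of Lemma~\ref{lem:vararrsens}. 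Hence, on the interval of interest,
\[
\Delta_{\textsf{Var}_{\text{clip}}}(m) = U^2 \cdot \frac{m\bigl(S + (N-1)m\bigr)}{(S + Nm)^2}.
\]

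Next, I would differentiate this expression twice in $m$. A direct calculation (pulling out common factors of $(S+Nm)$ at each step) gives
\[
\frac{d}{dm} \Delta_{\textsf{Var}_{\text{clip}}}(m) = \frac{U^2 \, S \bigl[(N-2)m + S\bigr]}{(S+Nm)^3},
\]
and then
\[
\frac{d^2}{dm^2} \Delta_{\textsf{Var}_{\text{clip}}}(m) = \frac{-2 U^2 \, S \bigl[(N+1)S + N(N-2)m\bigr]}{(S+Nm)^4}.
\]
Since $N \geq 3$ on the range $t \in [L-3]$, the bracketed quantity $(N+1)S + N(N-2)m$ is strictly positive (both $S > 0$ and $m > 0$, and the coefficient of $m$ is $N(N-2) \geq 3$). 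Combined with $S > 0$ and $(S+Nm)^4 > 0$, this shows $\Delta_{\textsf{Var}_{\text{clip}}}''(m) < 0$ throughout $(m_t, m_{t+1})$, establishing (strict) concavity on that interval.

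The only step requiring care is the algebraic simplification leading from the quotient rule to the compact form of $\Delta_{\textsf{Var}_{\text{clip}}}'(m)$; once the numerator is shown to collapse to $S[(N-2)m + S]$, the second derivative is straightforward and the sign is manifest from $N \geq 3$. I expect no conceptual obstacle beyond this bookkeeping, and indeed the argument goes through whenever $N \geq 2$, so the hypothesis $t \in [L-3]$ leaves comfortable margin.
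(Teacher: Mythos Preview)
Your proposal is correct and follows essentially the same route as the paper: identify that $t\le L-3$ forces $\sum_\ell \Gamma_\ell > 2\Gamma^\star$, hence the first branch of Lemma~\ref{lem:vararrsens} applies, and then verify concavity by second-derivative computation. The only cosmetic difference is that the paper splits $\alpha_1(m)=\frac{m}{c_1+c_2m}-\frac{m^2}{(c_1+c_2m)^2}$ and checks each piece separately, whereas you differentiate the full expression at once; your direct computation is arguably cleaner and, as you note, shows concavity already for $N\ge 2$, while the paper's split needs $c_1+(c_2-3)m\ge 0$ (i.e.\ essentially $N\ge 3$) for the second piece.
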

\begin{proof}
	Fix an integer $t\in [L-1]$, for $L\geq 3$. 
	Let $\alpha_1(m):= \frac{ \Gamma_\ell^\star(\sum_\ell \Gamma_\ell - \Gamma_\ell^\star)}{\left(\sum_\ell \Gamma_\ell\right)^2}$, $\alpha_2(m):= 
	\frac{1}{4}$, and $\alpha_3(m):=
	\frac{1}{4}\cdot\left(1-\frac{1}{(\sum_\ell \Gamma_\ell)^2}\right)$.
	
Now, consider the setting where $t\leq L-3$. In this case, observe that 
	\begin{align*}\sum_\ell \Gamma_\ell &= \sum_\ell \min\{m,m_\ell\}\\
		&= \sum_{\ell=1}^t m_t + (L-t)m> 2m,
	\end{align*}
by our choice of $t$. This implies that for such values of $t$, we have $\Delta_{\textsf{Var}_{\text{clip}}}(m) = U^2\cdot\alpha_1(m)$, for all $m\in [m_t, m_{t+1}]$. Now, observe that we can write
\begin{align*}
{\alpha}_1(m) &= \frac{m(c_1+m(c_2-1))}{(c_1+c_2m)^2}\\
&= \frac{m}{c_1+c_2m} - \frac{m^2}{(c_1+c_2m)^2}=: a_1(m)+b_1(m),
\end{align*}
for constants $c_1,c_2>0$ such that $c_1+c_2m = \sum_\ell \Gamma_{\ell}$. By direct computation, it is possible to show that 
\[
\frac{\d^2 a_1}{\d m^2} = \frac{-2c_1c_2}{(c_1+c_2m)^3}<0
\]
and
\[
\frac{\d^2 b_1}{\d m^2} = -2c_1\cdot \left(\frac{c_1+m(c_2-3)}{(c_1+c_2m)^4}\right)\leq 0,
\]
since $c_1+c_2m = \sum_\ell \Gamma_\ell \geq 3m$, by our choice of $t$. Hence, for this case, we obtain that $\Delta_{\textsf{Var}_{\text{clip}}}(m)$ is concave in $m$.
\end{proof}
We are now ready to show that there exists a minimizer of the sensitivity $\Delta_{\textsf{Var}_{\text{clip}}}$ that takes its value in $\{m_\ell\}_{\ell \in \mathcal{L}}$.
\begin{lemma}
	There exists
	$
	m\in \argmin_{m\in [m_\star,m^\star]} \Delta_{\textsf{Var}_{\text{clip}}}(m),
	$
	such that $m\in \{m_\ell\}_{\ell\in \mathcal{L}}$.
\end{lemma}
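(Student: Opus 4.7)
The plan is to combine the concavity established in Lemma \ref{lem:sensconcave} with a case-by-case analysis of the three branches in Lemma \ref{lem:vararrsens} to argue that on each subinterval $[m_t, m_{t+1}]$ the minimum of $\Delta_{\textsf{Var}_{\text{clip}}}(m)$ is attained at an endpoint, which already lies in $\{m_\ell\}_{\ell \in \mathcal{L}}$. Taking the minimum over all such subintervals then yields a global minimizer in $\{m_\ell\}_{\ell \in \mathcal{L}}$.

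First, I would handle the subintervals $[m_t, m_{t+1}]$ for $t \in [L-3]$. On each such subinterval, $\sum_\ell \Gamma_\ell = \sum_{\ell=1}^t m_\ell + (L-t) m > 2m$ (the inequality used in the proof of Lemma \ref{lem:sensconcave}), so $\Delta_{\textsf{Var}_{\text{clip}}}(m) = U^2 \alpha_1(m)$ everywhere on this subinterval. By Lemma \ref{lem:sensconcave}, $\alpha_1$ is concave on $[m_t, m_{t+1}]$, and a concave function on a closed interval attains its minimum at one of the endpoints $m_t$ or $m_{t+1}$, both of which are elements of $\{m_\ell\}_{\ell \in \mathcal{L}}$.

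Next, I would treat the remaining intervals $[m_{L-2}, m_{L-1}]$ and $[m_{L-1}, m^\star]$ separately. On each of these intervals I would determine which of the three branches in Lemma \ref{lem:vararrsens} applies, based on whether $\sum_\ell \Gamma_\ell = \sum_{\ell : m_\ell \le m} m_\ell + |\{\ell : m_\ell > m\}|\, m$ exceeds $2m$ or not. Where the first branch holds, the analysis used in the proof of Lemma \ref{lem:sensconcave} can be re-run verbatim to yield concavity, so the minimum on that subinterval is again at an endpoint in $\{m_\ell\}$. Where the second branch holds, $\Delta_{\textsf{Var}_{\text{clip}}}(m) = U^2/4$ is constant, so every point, and in particular the endpoints in $\{m_\ell\}$, minimizes. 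Where the third branch holds, $\Delta_{\textsf{Var}_{\text{clip}}}(m) = (U^2/4)(1 - 1/(\sum_\ell \Gamma_\ell)^2)$; since $\sum_\ell \Gamma_\ell$ is nondecreasing in $m$, this expression is nondecreasing in $m$, and the minimum on the subinterval is attained at its left endpoint, which is in $\{m_\ell\}$. Finally, at transition points of the piecewise definition, continuity of $\Delta_{\textsf{Var}_{\text{clip}}}(\cdot)$ (which can be checked by direct substitution at the boundary $\sum_\ell \Gamma_\ell = 2m$) ensures that the branch-wise endpoint analyses glue together properly.

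The main obstacle will be the bookkeeping in the last two subintervals, where one must verify branch transitions in Lemma \ref{lem:vararrsens} and confirm that the minimum does not slip into the interior of a subinterval via the interaction between branches. The core of the argument, however, is simply: on each linear piece of $\min\{m, m_\ell\}$-structure the function is either concave (and thus minimized at an endpoint), constant, or monotone nondecreasing, so in each case a minimizer in $\{m_\ell\}_{\ell \in \mathcal{L}}$ exists, and taking the minimum over all pieces yields the desired conclusion.
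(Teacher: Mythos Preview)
Your overall plan mirrors the paper's: analyze $\Delta_{\textsf{Var}_{\text{clip}}}(m)$ on each subinterval $[m_t,m_{t+1}]$ and show the minimum lands at an endpoint. The case $t\in[L-3]$ is fine and identical to the paper. However, two genuine gaps remain in the last two subintervals.

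First, the claim that the proof of Lemma~\ref{lem:sensconcave} can be ``re-run verbatim'' on $[m_{L-2},m_{L-1}]$ and $[m_{L-1},m^\star]$ is not correct as stated. That proof bounds $\tfrac{d^2 b_1}{dm^2}\le 0$ using $c_1+c_2 m=\sum_\ell\Gamma_\ell\ge 3m$, and this is precisely what fails once $t>L-3$ (here $c_2=L-t\le 2$). Concavity of $\alpha_1$ does in fact persist on those intervals wherever the first branch applies, but establishing it requires summing $a_1''+b_1''$ rather than bounding the pieces separately, or arguing monotonicity directly (which is what the paper does for $t=L-2$ and $t=L-1$).

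Second, and more substantively, your treatment of the branch transition on $[m_{L-1},m_L]$ is incomplete. When the branch switches at an interior point $\overline{m}$ (where $\sum_\ell\Gamma_\ell=2m$), the relevant pieces are $[m_{L-1},\overline{m}]$ (first branch) and $[\overline{m},m_L]$ (second or third). Concavity on the first piece only tells you its minimum is at $m_{L-1}$ \emph{or} at $\overline{m}$; nondecreasingness on the second piece places its minimum at $\overline{m}$. Continuity glues the values, but does not by itself exclude $\overline{m}$---which is generically not in $\{m_\ell\}$---as the overall minimizer on $[m_{L-1},m_L]$. The paper closes this gap by showing $\alpha_1$ is \emph{increasing} on the first-branch portion (so its minimum is at $m_{L-1}$), together with an explicit comparison $\alpha_3(\overline{m})\ge\alpha_1(m_{L-1})$ in the odd case. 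An alternative fix is to note that the value at $\overline{m}$ equals $U^2/4$, the global maximum of $\Delta_{\textsf{Var}_{\text{clip}}}$, so it can never be a minimizer; but some such argument is needed beyond continuity.
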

\begin{proof}

Suppose that $m\in [m_t,m_{t+1}]$, for some $t\in [L-1]$. We now argue that the value of $\Delta_{\textsf{Var}_{\text{clip}}}(m)$ cannot increase by setting $m$ to $\argmin_{m\in \{m_t,m_{t+1}\}} \Delta_{\textsf{Var}_{\text{clip}}}(m)$. Indeed, note that if $t\in [L-3]$, by the concavity of $\Delta_{\textsf{Var}_{\text{clip}}}(m)$ from Lemma \ref{lem:sensconcave}, we obtain that a minimizer of $\Delta_{\textsf{Var}_{\text{clip}}}(m)$, for $m\in [m_t,m_{t+1}]$, occurs at a boundary point. 

Now, consider the case when $t = L-2$. In this case, observe that $\sum \Gamma_{\ell} - 2m = \sum_{\ell = 1}^{L-2} m_\ell >0$, if $L>2$, and equals $0$, if $L\leq 2$. Consider the first case when $L>2$. In this setting, we have $\Delta_{\textsf{Var}_{\text{clip}}}(m) = \alpha_1(m)$, for all $m\in [m_{L-2},m_{L-1}]$. It is possible, by direct calculations, to show that when $m\in  [m_{L-2},m_{L-1}]$, we have
\begin{align*}
	\frac{\d \alpha_1}{\d m} = \frac{-2mc_1}{(c_1+c_2m)^3},
\end{align*}
for some constants $c_1,c_2>0$, thereby implying that $\alpha_1$ is decreasing as a function of $m$, in this interval. Therefore, a minimizer of $\Delta_{\textsf{Var}_{\text{clip}}}$ occurs at a boundary point.

Next, consider the case when $L= 2$. In this setting, we have that $\Delta_{\textsf{Var}_{\text{clip}}}(m)$ equals either $\alpha_2(m)$ or $\alpha_3(m)$, for $m\in [m_{L-2},m_{L-1}]$, when $\sum_\ell \Gamma_\ell$ is even or odd, respectively. Since $\alpha_2(m)$ is a constant and $\alpha_3(m)$ can be seen to be increasing in $m$ in this interval, we obtain once again that a minimizer of $\Delta_{\textsf{Var}_{\text{clip}}}$ occurs at a boundary point.


Now, consider the case when $t = L-1$. Observe that in this case, $\sum_\ell \Gamma_{\ell} - 2m = \sum_{\ell=1}^{L-1} m_\ell - m$ is decreasing as $m$ increases from $m_{L-1}$ to $m_L$. Hence, one of three possible cases can occur, each of which is dealt with in turn, below.
\begin{enumerate}
	\item $\sum_\ell \Gamma_{\ell}\leq 2m$, for all $m\in [m_{L-1},m_L]$: Clearly, in this case, we have that $\Delta_{\textsf{Var}_{\text{clip}}}$ equals either $\alpha_2$ or $\alpha_3$, for $m\in [m_{L-2},m_{L-1}]$, when $\sum_\ell \Gamma_\ell$ is even or odd, respectively. Since $\alpha_2$ is a constant and $\alpha_3(m)$ is increasing with $m$ in the interval of interest, we obtain that a minimizer of $\Delta_{\textsf{Var}_{\text{clip}}}$ occurs at a boundary point.
	\item $\sum_\ell \Gamma_{\ell}> 2m$, for all $m\in [m_{L-1},m_L]$: Here, $\Delta_{\textsf{Var}_{\text{clip}}} = \alpha_1$. Furthermore, we have that 
	\[
	\frac{\d \alpha_1}{\d m} = \frac{\sum_{\ell=1}^{L-1} m_\ell}{(m+\sum_{\ell=1}^{L-1} m_\ell)^2}>0,
	\]
	implying that $\Delta_{\textsf{Var}_{\text{clip}}}$ is increasing in the interval of interest, hence showing that its minimizer occurs at a boundary point.
	\item $\sum_\ell \Gamma_{\ell}> 2m$, for $m\in [m_{L-1},\overline{m}]$ and $\sum_\ell \Gamma_{\ell}\leq 2m$, for $m\in (\overline{m},m_L]$, for some $\overline{m}\in [m_{L-1},m_L]$: Observe first that in this setting, we have that when $m = \overline{m}$,
	\[
	\sum_\ell \Gamma_\ell = \overline{m}+\sum_{\ell=1}^{L-1} m_\ell = 2\overline{m},
	\]
	by the definition of $\overline{m}$. In other words, we have $\overline{m} = \sum_{\ell=1}^{L-1}m_\ell$. Furthermore, for $m\in [m_{L-1},\overline{m}]$, we have $\Delta_{\textsf{Var}_{\text{clip}}}(m) = \alpha_1(m)$, while for $m\in (\overline{m},m_L]$, we have $\Delta_{\textsf{Var}_{\text{clip}}}(m)$ equals $\alpha_2(m)$ or $\alpha_3(m)$, respectively, depending on whether $\sum_\ell \Gamma_\ell$ is even or odd. In the case when $\sum_\ell \Gamma_\ell$ is even, it can be verified that $\Delta_{\textsf{Var}_{\text{clip}}}(\overline{m}) = \alpha_2(m) = 1/4$. Thus, using the fact that $\alpha_1(m)$ is increasing in $m$, we obtain that a minimizer of $\Delta_{\textsf{Var}_{\text{clip}}}$ occurs at a boundary point, when $\sum_\ell \Gamma_\ell$ is even.
	
	Next, when $\sum_\ell \Gamma_\ell$ is odd, we have that $\alpha_3(m)$ is increasing in $m$ for the interval of interest; we thus need only verify if for $L\geq 2$, we have 
	\[
	\alpha_3(\overline{m})\geq \alpha_1(m_{L-1}).
	\]
	Indeed, if the above inequality holds, we have that $\Delta_{\textsf{Var}_{\text{clip}}}(\overline{m})$ is minimized at $m = m_{L-1}$, for $m\in [m_{L-1},m_L]$. We can verify that the above inequality indeed holds, by a simple direct computation.
\end{enumerate}
Hence, overall, we have that a minimizer of $\Delta_{\textsf{Var}_{\text{clip}}}({m})$, for $m\in [m_t,m_{t+1}]$ occurs at a boundary point, for all $t\in [L-1]$.
\end{proof}
Now that we have established that it suffices to focus on $m\in \{m_\ell\}_{\ell\in \mathcal{L}}$, we show that $\Delta_{\textsf{Var}_{\text{clip}}}(m)$, for such values of $m$, is smaller than $\Delta_{\textsf{Var}}$.
\begin{lemma}
	When $\sum_\ell m_\ell$ is even, for $m\in \{m_\ell\}_{\ell\in \mathcal{L}}$, we have that $\Delta_{\textsf{Var}}\geq \Delta_{\textsf{Var}_{\text{clip}}}$.
\end{lemma}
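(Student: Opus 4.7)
The plan is to split the analysis into two cases according to whether $\sum_\ell m_\ell \leq 2m^\star$ or $\sum_\ell m_\ell > 2m^\star$, as in the trichotomy for $\Delta_{\textsf{Var}}$ in Proposition~\ref{prop:varsens}.

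In the first case, the even-parity hypothesis forces $\Delta_{\textsf{Var}} = U^2/4$. Since for a bounded random variable $Y\in[0,U]$ one always has $\textsf{Var}(Y)\leq U^2/4$ (this is the argument already used in Appendix~\ref{app:versens}), one immediately gets $\Delta_{\textsf{Var}_{\textrm{clip}}}\leq U^2/4 = \Delta_{\textsf{Var}}$, regardless of the parity of $\sum_\ell \Gamma_\ell$ or the value of $m$.

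In the second case, $\Delta_{\textsf{Var}} = U^2\,m^\star(M-m^\star)/M^2$, where we abbreviate $M := \sum_\ell m_\ell$ and $M' := \sum_\ell \Gamma_\ell = \sum_\ell \min\{m,m_\ell\}$. The first sub-step is to show that, under the hypothesis $m\in\{m_\ell\}_{\ell\in\mathcal{L}}$, the clipped sensitivity lies in the \emph{first} branch of Lemma~\ref{lem:vararrsens}, i.e., $M' > 2\Gamma^\star = 2m$. With the $m_\ell$'s sorted as in Appendix~\ref{app:array-av}, if $m = m_j$ then $M' = (L-j+1)m + \sum_{\ell<j} m_\ell$; one checks that the only way to get $M'\le 2m$ would be $L\le 2$ together with $j = L-1$, and this is easily ruled out by the assumption $M > 2m^\star$ combined with sortedness. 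Hence $\Delta_{\textsf{Var}_{\textrm{clip}}} = U^2\,m(M'-m)/M'^2$.

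The second sub-step, which I expect to be the main obstacle, is the comparison
\[
\frac{m^\star(M-m^\star)}{M^2} \;\geq\; \frac{m(M'-m)}{M'^2}.
\]
Since both $m^\star/M$ and $m/M'$ lie in $(0,1/2)$ (the former by $M>2m^\star$, the latter by Subcase~2a just established), and the map $x\mapsto x(1-x)$ is strictly increasing on $[0,1/2]$, it suffices to show $m/M' \leq m^\star/M$. I would prove this by a monotonicity argument: viewing $m \mapsto m/M'(m)$ as a function on $[m_\star,m^\star]$, one notes that $M'(m)$ is piecewise linear, and on each linear piece $M'(m) = C + km$ with $C = \sum_{\ell:\,m_\ell<m_j} m_\ell \geq 0$ and $k = L - j + 1$, so a direct computation gives $\frac{d}{dm}(m/M') = C/M'^2 \geq 0$. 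Continuity of $m/M'$ across breakpoints then yields global non-decreasingness, and evaluating at $m = m^\star$ (where $M' = M$) gives the bound $m/M' \leq m^\star/M$. Plugging back completes the comparison and hence the lemma.
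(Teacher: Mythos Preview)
Your proof is correct and takes a genuinely cleaner route than the paper's.

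The paper organizes the argument around a four-case analysis according to the relative ordering of $2m$, $\sum_\ell \Gamma_\ell$, $\sum_\ell m_\ell$, and $2m^\star$. In the main case ($2m < \sum_\ell \Gamma_\ell$ together with $\sum_\ell m_\ell > 2m^\star$), the paper introduces the auxiliary function $\eta(t) = U^2 m^\star(t-m^\star)/t^2$, argues it is decreasing, bounds $\Delta_{\textsf{Var}}$ below by an intermediate quantity $\alpha$ evaluated at a surrogate point $c > \sum_\ell m_\ell$, and then verifies $\alpha \geq \Delta_{\textsf{Var}_{\text{clip}}}$ by an explicit (unshown) computation of the difference. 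A separate fourth case is devoted to showing that a particular ordering cannot occur.

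Your two-case split on whether $\sum_\ell m_\ell \leq 2m^\star$ is more economical. The first case is immediate from the universal bound $\textsf{Var}\leq U^2/4$. In the second case, your sub-step~1 absorbs the paper's fourth case (the impossible ordering) as a one-line check, and your sub-step~2 replaces the paper's $\eta$--$\alpha$ computation by the transparent observation that both sensitivities have the form $U^2\,x(1-x)$ with $x\in(0,1/2)$, so the comparison reduces to the single inequality $m/M' \leq m^\star/M$. Proving this via the monotonicity of $m\mapsto m/M'(m)$ on $[m_\star,m^\star]$ (piecewise, with $M'(m) = C + km$, $C\geq 0$) is short and entirely elementary. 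As a side benefit, your monotonicity step in fact works for all $m\in[m_\star,m^\star]$, not only $m\in\{m_\ell\}$, whereas the paper's algebraic computation is indexed explicitly by $m = m_{j+1}$.

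One minor point: in sub-step~1 you should also note that $j=L$ (i.e., $m=m^\star$) gives $M'-2m = M - 2m^\star > 0$ directly from the Case~2 hypothesis; you implicitly rely on this when asserting that the only obstruction is $L\leq 2$ with $j=L-1$. This is a one-line check and does not affect the validity of the argument.
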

\begin{proof}
	The proof proceeds by a case-by-case analysis. First, observe that if $m = m^\star$, we have that $\Gamma_\ell = m_\ell$, for all $\ell \in \mathcal{L}$, implying that $\Delta_\textsf{Var} = \Delta_{\textsf{Var}_{\text{clip}}}$. Hence, in what follows, we restrict attention to the case when $m\in [m_\star:m^\star-1]$. Four possible scenarios arise:
	\begin{enumerate}
		\item $\sum_\ell m_\ell\leq 2m$: In this case, note that
		\[
		\sum_\ell \Gamma_{\ell}< \sum_\ell m_\ell\leq 2m<2m^\star.
		\]
		Hence, we have that $\Delta_\textsf{Var} = U^2/4$, with  $\Delta_{\textsf{Var}_{\text{clip}}} = U^2/4$, if $\sum_\ell \Gamma_{\ell}$ is even, and $\Delta_{\textsf{Var}_{\text{clip}}} = \frac{U^2}{4}\cdot \left(1-\frac{1}{(\sum_\ell \Gamma_{\ell})^2}\right)$, if $\sum_\ell \Gamma_{\ell}$ is odd. Clearly, the statement of the lemma is true in this case.
		\item $\sum_\ell \Gamma_{\ell}\leq 2m\leq \sum_\ell m_\ell \leq 2m^\star$: Here too, $\Delta_\textsf{Var} = U^2/4$, with  $\Delta_{\textsf{Var}_{\text{clip}}} = U^2/4$, if $\sum_\ell \Gamma_{\ell}$ is even, and $\Delta_{\textsf{Var}_{\text{clip}}} = \frac{U^2}{4}\cdot \left(1-\frac{1}{(\sum_\ell \Gamma_{\ell})^2}\right)$, if $\sum_\ell \Gamma_{\ell}$ is odd. The lemma thus holds in this case as well.
		\item $2m<\sum_\ell \Gamma_{\ell}$: In this case, observe that
		\[
		2m<\sum_\ell \Gamma_{\ell} < \sum_\ell m_\ell.
		\]
		Hence, we have that $\Delta_{\textsf{Var}_{\text{clip}}} = \frac{U^2\ m(\sum_\ell \Gamma_\ell - m)}{\left(\sum_\ell \Gamma_{\ell}\right)^2}$. First, consider the case where $\sum_\ell m_\ell>2m^\star$.
		 
		 Again, without loss of generality, assume that $m_1\geq m_2\geq \ldots \geq m_L$. Let us define $\eta(t):= \frac{U^2\ m^\star(t-m^\star)}{t^2}$, where $t\in (0,\infty)$. It is easy to show that $\frac{\d \eta}{\d t}< 0$, implying that $\eta(t)$ is decreasing in its argument $t$. Furthermore, if $m = m_1 = m^\star$, it is easy to see that $\Delta_{\textsf{Var}_{\text{clip}}} = \Delta_{\textsf{Var}}$. Now, suppose that $m = m_{j+1}$, for some $j\in [L-1]$, such that $m<m^\star = m_1$. Then, since $\sum_\ell m_\ell < (j+1)m_{1}+\sum_{\ell = j+2}^{L} m_\ell =: c$, we have by the above analysis of the function $\eta$ that
		 \[
		 \Delta_{\textsf{Var}}> \frac{U^2m^\star((j+1)m_1+c-m^\star)}{((j+1)m_1+c)^2} =: \alpha.
		 \]
		 We next show that $\alpha\geq \Delta_{\textsf{Var}_{\text{clip}}}(m)$. To this end, observe that 
		 \[
		 \Delta_{\textsf{Var}_{\text{clip}}} = \frac{U^2m((j+1)m+c-m)}{((j+1)m+c)^2}.
		 \]
		 The result follows by explicitly computing $\alpha - \Delta_{\textsf{Var}_{\text{clip}}}$ and arguing that this difference is non-negative, so long as $j\geq 2$, and hence, in particular, when $m<m^\star$. Hence, when $\frac{m_\star}{m^\star} >\frac{2}{L}$ and $m\in \{m_\ell\}$, we have that $\Delta_{\textsf{Var}_{\text{clip}}}< \Delta_{\textsf{Var}}$. For $\sum_\ell m_\ell \leq 2m^\star$, we have that $\Delta_{\textsf{Var}} = U^2/4\geq \Delta_{\textsf{Var}_{\text{clip}}}$, by a direct calculation.
		 \item $\sum_\ell \Gamma_{\ell}\leq 2m< 2m^\star\leq \sum_\ell m_\ell$: We claim that such a situation cannot arise, for the given choice of $\Gamma_{\ell}$, $\ell\in \mathcal{L}$. Indeed, observe that for $m\neq m^\star$, for $\sum_\ell \Gamma_{\ell} = \sum_\ell \min\{m,m_\ell\} \leq 2m$ to hold, we must have that for some $\ell_0\in \mathcal{L}$, the inequality $m_{\ell_0}>m$ holds, while $\sum_{\ell\neq \ell_0} m_{\ell}\leq m$. This then implies that
		 \begin{align*}
		 \sum_\ell m_\ell &= m_{\ell_0} + \sum_{\ell\neq \ell_0} m_\ell \leq m\\
		 &< m+m^\star < 2m^\star.
		 \end{align*}
	 	However, by assumption, we have that $2m^\star\leq \sum_\ell m_\ell$, leading to a contradiction.
	 	
	\end{enumerate}

Putting together all the cases concludes the proof of the lemma.
%
\end{proof}

%

\section{Proof of Theorem \ref{thm:varworst}}
\label{app:varworst}
In this section, we shall prove Theorem \ref{thm:varworst}. Recall that we intend computing
\[
	E_{\textsf{Var}}:= \max_{\mathcal{D}\in \mathsf{D}}\lvert \textsf{Var}(\mathcal{D}) - \textsf{Var}_{\text{clip}}(\mathcal{D})\rvert,
\]
for fixed $\Gamma_\ell\in [0:m_\ell]$, for $\ell\in \mathcal{L}$, with the assumption that $\sum_{\ell}\Gamma_{\ell}>0$. For the case when $\Gamma_{\ell} = m_\ell$, for all $\ell\in \mathcal{L}$, it is clear that $\textsf{Var}(\mathcal{D}) = \textsf{Var}_{\text{clip}}(\mathcal{D})$ and hence that $E_{\textsf{Var}} = 0$. Hence, in what follows, we assume that there exists at least one user $\ell\in \mathcal{L}$ with $\Gamma_{\ell}< m_\ell$. Let $A:= \left\{S_\ell^{(j)}:\ \ell\in \mathcal{L},\ j\in [\Gamma_\ell]\right\}$, and define $$A^c:= \left\{S_\ell^{(j)}:\ \ell\in \mathcal{L},\ j\in [\Gamma_\ell+1:m_\ell]\right\}.$$

Now, two cases can possibly arise: (i) when {$|A|\geq |A^c|$}, and (ii) when {$|A|< |A^c|$}. Consider first case (i). Similar to the arguments made in the proof of Proposition \ref{prop:varsens}, when $\sum_\ell m_\ell$ is even, we have that
\[
E_{\textsf{Var}}\leq \max_{\mathcal{D}\in \mathsf{D}} \textsf{Var}(\mathcal{D}) = \max \textsf{Var}(X) = \frac{U^2}{4},
\]
for $X\sim \text{Unif}(A\cup A^c)$. Furthermore, equality above is attained when all samples in $A$ take the value $0$, and $\frac{|A^c|-|A|}{2}$ samples in $A^c$ take the value $0$ and the remaining samples take the value $U$. This then results in exactly $\frac{\sum_\ell m_\ell}{2}$ samples being $0$ and an equal number of samples being $U$, resulting in a variance of $U^2/4$. Further, when $\sum_\ell m_\ell$ is odd, we have that
\[
E_{\textsf{Var}}\leq \max_{\mathcal{D}\in \mathsf{D}} \textsf{Var}(\mathcal{D}) = \max \textsf{Var}(X) = \frac{U^2}{4}\cdot \left(1-\frac{1}{(\sum_\ell m_\ell)^2}\right),
\]
with equality achieved when exactly $\left \lceil \frac{\sum_\ell m_\ell}{2}\right \rceil$ samples are $0$ and $\left \lfloor \frac{\sum_\ell m_\ell}{2}\right \rfloor$ samples are $U$. This then gives rise to an exact characterization of $E_{\textsf{Var}}$ when $\sum_\ell \Gamma_\ell\leq \frac{\sum_\ell m_\ell}{2}$.

The setting of case (ii), when $|A|> |A^c|$, requires more work. However, the proof in this case is quite similar to the proof of Proposition \ref{prop:varsens}. As in Appendix \ref{app:versens}, we define
\[
{\mu}(A):= \frac{1}{\sum_\ell \Gamma_\ell}\cdot \sum_{\ell\in \mathcal{L}}\sum_{j=1}^{\Gamma_\ell} {S}_\ell^{(j)}
\]
\text{and}
\[
{\mu}({A}^c):= \frac{1}{\sum_\ell (m_\ell-\Gamma_\ell)}\cdot \sum_{\ell\in \mathcal{L}}\sum_{j=\Gamma_\ell+1}^{m_\ell} {S}_\ell^{(j)}
\]
as the sample means of the samples in the sets $A$ and $A^c$, respectively. Further, let $\mu = \mu(\mathcal{D})$. The following lemma then holds.
\begin{lemma}
	\label{lem:app3}
	When $|A|>|A^c|$, we have that
	\[
	E_\textsf{Var} = \max_{\mathcal{D}\in \mathsf{D}} \left(\textsf{Var}(\mathcal{D}) - \textsf{Var}_\text{clip}(\mathcal{D})\right).
	\]
\end{lemma}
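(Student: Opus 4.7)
The plan is to establish the lemma by showing that, when $|A| > |A^c|$, the maximum of $\textsf{Var}(\mathcal{D}) - \textsf{Var}_\text{clip}(\mathcal{D})$ (the positive direction) strictly exceeds the maximum of $\textsf{Var}_\text{clip}(\mathcal{D}) - \textsf{Var}(\mathcal{D})$ (the negative direction), which allows one to drop the absolute value in the definition of $E_\textsf{Var}$. The main tool is the law of total variance applied to $X \sim \text{Unif}(A \cup A^c)$, conditioning on whether $X \in A$ or $X \in A^c$. Writing $n = \sum_\ell m_\ell = |A| + |A^c|$, and letting $\mu(A)$, $\mu(A^c)$ denote the sample means of $A$, $A^c$ respectively, this gives the identity
\[
\textsf{Var}(\mathcal{D}) = \frac{|A|}{n}\textsf{Var}(A) + \frac{|A^c|}{n}\textsf{Var}(A^c) + \frac{|A||A^c|}{n^2}\bigl(\mu(A) - \mu(A^c)\bigr)^2.
\]
Since $\textsf{Var}_\text{clip}(\mathcal{D}) = \textsf{Var}(A)$, subtracting yields a clean formula for $\textsf{Var}(\mathcal{D}) - \textsf{Var}_\text{clip}(\mathcal{D})$ that separates the negative contribution $-\frac{|A^c|}{n}\textsf{Var}(A)$ from the non-negative contributions of $\textsf{Var}(A^c)$ and the squared mean gap.

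Next, I would upper-bound the negative direction. Rearranging the identity gives
\[
\textsf{Var}_\text{clip}(\mathcal{D}) - \textsf{Var}(\mathcal{D}) = \frac{|A^c|}{n}\textsf{Var}(A) - \frac{|A^c|}{n}\textsf{Var}(A^c) - \frac{|A||A^c|}{n^2}\bigl(\mu(A) - \mu(A^c)\bigr)^2,
\]
and since the last two terms are non-positive while $\textsf{Var}(A) \leq U^2/4$ (as samples lie in $[0,U]$, by arguments analogous to those in the proof of Proposition~\ref{prop:varsens}), one obtains $\max_\mathcal{D} \bigl(\textsf{Var}_\text{clip}(\mathcal{D}) - \textsf{Var}(\mathcal{D})\bigr) \leq \frac{|A^c| U^2}{4n}$. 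For the positive direction, I would exhibit the explicit dataset in which every $S_\ell^{(j)}$ with $j \in [\Gamma_\ell]$ equals $0$ and every $S_\ell^{(j)}$ with $j \in [\Gamma_\ell+1 : m_\ell]$ equals $U$. Then $\textsf{Var}(A) = \textsf{Var}(A^c) = 0$, $\mu(A) = 0$, and $\mu(A^c) = U$, so the decomposition gives $\textsf{Var}(\mathcal{D}) - \textsf{Var}_\text{clip}(\mathcal{D}) = \frac{|A||A^c| U^2}{n^2}$, hence $\max_\mathcal{D} \bigl(\textsf{Var}(\mathcal{D}) - \textsf{Var}_\text{clip}(\mathcal{D})\bigr) \geq \frac{|A||A^c| U^2}{n^2}$.

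To conclude, the assumption $|A| > |A^c|$ implies $|A| > n/2$, so $\frac{|A|}{n} > \frac{1}{2} > \frac{1}{4}$, whence $\frac{|A||A^c|U^2}{n^2} > \frac{|A^c|U^2}{4n}$. Combining with the two bounds above,
\[
\max_{\mathcal{D}}\bigl(\textsf{Var}(\mathcal{D}) - \textsf{Var}_\text{clip}(\mathcal{D})\bigr) \;>\; \max_{\mathcal{D}}\bigl(\textsf{Var}_\text{clip}(\mathcal{D}) - \textsf{Var}(\mathcal{D})\bigr),
\]
and since the maximum of an absolute value is the larger of the one-sided maxima, this gives the lemma. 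The main obstacle I anticipate is correctly justifying the variance decomposition in the discrete/unweighted-sample setting and verifying that the bound $\textsf{Var}(A) \leq U^2/4$ is genuinely the right quantity to use (rather than the sharper odd-$|A|$ bound from Proposition~\ref{prop:varsens}), though that sharper bound would only strengthen the strict inequality and would not change the conclusion.
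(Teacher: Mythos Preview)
Your proposal is correct and follows essentially the same approach as the paper: the paper also derives the decomposition of $\textsf{Var}(\mathcal{D}) - \textsf{Var}_\text{clip}(\mathcal{D})$ by conditioning on $\{X\in A\}$ versus $\{X\in A^c\}$, upper-bounds the negative direction by $\frac{U^2|A^c|}{4\sum_\ell m_\ell}$ via $\textsf{Var}(A)\leq U^2/4$, lower-bounds the positive direction using the same explicit dataset (all samples in $A$ at $0$, all in $A^c$ at $U$) to get $\frac{U^2|A||A^c|}{(\sum_\ell m_\ell)^2}$, and compares the two using $|A|>|A^c|$. Your use of the law of total variance packages the conditioning step a bit more cleanly, but the argument is the same.
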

\begin{proof}
\begin{align}
	\textsf{Var}(\mathcal{D})&= \E_{X\sim \text{Unif}(A\cup A^c)}\left[(X-\mu)^2\right] \notag\\
	&= \E\left[(X-\mu)^2\mid X\in A\right]\cdot \P(A)+\E\left[(X-\mu)^2\mid X\in A^c\right]\cdot P(A^c) \notag\\
	&= (\mu-\mu(A))^2\cdot P(A)+\E\left[(X-\mu(A))^2\mid X\in A\right]\cdot P(A)+\notag \\
	&\ \ \ \ \ \ \ \ \ \ \ (\mu-\mu(A^c))^2\cdot P(A^c)+\E\left[(X-\mu(A^c))^2\mid X\in A^c\right]\cdot P(A^c), \label{eq:app4}
\end{align}
where we abbreviate $\Pr[X\in T]$ as $P(T)$, for some set $T\subseteq A\cup A^c$. The last equality holds for reasons similar to those in \eqref{eq:app1} and \eqref{eq:app2}.

Next, note that 
\begin{align}
	\textsf{Var}_\text{clip}(\mathcal{D})&= \E_{X'\sim \text{Unif}(A)}\left[(X'-\mu(A))^2\right]\notag\\
	&= \E\left[(X-\mu(A))^2\mid X\in A\right]P(A)+  \E\left[(X-\mu(A))^2\mid X\in A\right]P(A^c). \label{eq:app5}
\end{align}

Putting together \eqref{eq:app4} and \eqref{eq:app5} and noting that, conditioned on the event $\{X\in A\}$, we have that $X$ is uniform on the values in the set $A$, we get that
\begin{align}
	\lvert \textsf{Var}(\mathcal{D}) &- \textsf{Var}_{\text{clip}}(\mathcal{D})\rvert = \notag\\
	& \bigg \vert (\mu-\mu(A))^2 P(A) + (\mu-\mu(A^c))^2 P(A^c)\ + \notag\\
	&\E\left[(X -\mu(A^c))^2\mid X\in A^c\right] P(A^c) - E\left[(X-\mu(A))^2\mid X\in A\right]P(A^c)\bigg \vert. \label{eq:app6}
\end{align} 
Now, consider a dataset $\overline{\mathcal{D}}$ such that the samples in $A$ take the value $0$ and the samples in $A^c$ take the value $U$. Clearly, we have that
\begin{align}
	E_\textsf{Var} &\geq E_\textsf{Var}(\overline{\mathcal{D}}) = \frac{U^2\cdot |A|\cdot |A^c|}{\left(\sum_\ell m_\ell\right)^2}. \label{eq:app7}
\end{align}
Furthermore, observe that 
\begin{align}
	E_\textsf{Var} 
	&= \max\left\{\max_\mathcal{D} \left(\textsf{Var}(\mathcal{D}) - \textsf{Var}_\text{clip}(\mathcal{D})\right),  \max_\mathcal{D} \left(\textsf{Var}_\text{clip}(\mathcal{D}) - \textsf{Var}(\mathcal{D})\right)\right\}. \label{eq:app9}
\end{align}
Now, from \eqref{eq:app6}, note that
\begin{align}
\max_\mathcal{D} \left(\textsf{Var}_\text{clip}(\mathcal{D}) - \textsf{Var}(\mathcal{D})\right) &\leq \E\left[(X-\mu(A))^2\mid X\in A\right]P(A^c) \notag\\
&\leq \frac{U^2|A^c|}{4\cdot \sum_\ell m_\ell}. \label{eq:app8}
\end{align}
By comparing \eqref{eq:app7} and \eqref{eq:app8},  plugging back into \eqref{eq:app9}, and noting that $|A|>|A^c|$, we obtain the statement of the lemma.
\end{proof}
Thus, from the above lemma and from \eqref{eq:app6}, we obtain that when $|A|>|A^c|$, 
\begin{align}
	E_\textsf{Var} = 
	 (\mu-\mu(A))^2& P(A) + (\mu-\mu(A^c))^2 P(A^c)\ + \notag\\
	&\E\left[(X -\mu(A^c))^2\mid X\in A^c\right] P(A^c) - E\left[(X-\mu(A))^2\mid X\in A\right]P(A^c). \notag
\end{align} 
Now, clearly, we have that $E_\textsf{Var}$ above is maximized by setting $X = \mu(A)$, when $X\in A$, or, in other words, setting $S_\ell^{(j)} = \mu(A)$, for all $\ell\in \mathcal{L}$ and $j\in [\Gamma_\ell]$. We thus obtain the following lemma:
\begin{lemma}
	\label{lem:app4}
	When $|A|>|A^c|$, we have that
	\[
	E_\textsf{Var} = \max_{\mathcal{D}:\ S_\ell^{(j)} = \mu(A), \forall S_\ell^{(j)}\in A} \textsf{Var}(\mathcal{D}).
	\]
\end{lemma}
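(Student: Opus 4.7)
The plan is to read the lemma essentially directly off the decomposition \eqref{eq:app6} combined with Lemma \ref{lem:app3}. Under the hypothesis $|A|>|A^c|$, Lemma \ref{lem:app3} lets me drop the absolute value and write
\begin{align*}
E_\textsf{Var} = \max_\mathcal{D} \Big[& (\mu-\mu(A))^2\,P(A) + (\mu-\mu(A^c))^2\,P(A^c)\\
&+ \E\left[(X-\mu(A^c))^2\mid X\in A^c\right]P(A^c) \\
&- \E\left[(X-\mu(A))^2\mid X\in A\right]P(A^c)\Big].
\end{align*}

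The key structural observation is that within this objective the samples in $A$ enter only in two ways: through the scalar $\mu(A)$ (which, together with $\mu(A^c)$, determines $\mu$), and through the non-negative quantity $\E\left[(X-\mu(A))^2\mid X\in A\right]$, which appears with a negative coefficient. Every other term depends solely on $\mu(A)$, $\mu(A^c)$, and the samples in $A^c$.

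Consequently, for any fixed $\mu(A) \in [0,U]$ and any fixed configuration of samples in $A^c$, the objective is maximized by driving $\E\left[(X-\mu(A))^2\mid X\in A\right]$ to zero. This is achievable and feasible via the constant assignment $S_\ell^{(j)} = \mu(A)$ for every $S_\ell^{(j)}\in A$, which trivially preserves the value $\mu(A)$ and respects $S_\ell^{(j)}\in[0,U]$. Under this assignment $\textsf{Var}_\text{clip}(\mathcal{D}) = 0$, so $\textsf{Var}(\mathcal{D})-\textsf{Var}_\text{clip}(\mathcal{D}) = \textsf{Var}(\mathcal{D})$, and the outer maximization collapses to $\max_{\mathcal{D}:\, S_\ell^{(j)} = \mu(A),\, \forall S_\ell^{(j)}\in A} \textsf{Var}(\mathcal{D})$, which is precisely the stated identity.

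I anticipate no substantive obstacle; the argument is pure bookkeeping given Lemma \ref{lem:app3} and \eqref{eq:app6}. The only mild point to verify is that, after restricting to datasets with the prescribed constant structure on $A$, the remaining free parameters (the scalar $\mu(A)\in[0,U]$ and the samples in $A^c$) still range over all candidate maximizers of the original problem, so that no improvement is foregone by passing from the unconstrained to the constrained maximization.
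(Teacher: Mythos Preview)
Your proposal is correct and follows essentially the same approach as the paper: combine Lemma \ref{lem:app3} with the decomposition \eqref{eq:app6}, then observe that the only dependence on the individual samples in $A$ beyond their mean $\mu(A)$ is through the non-negative term $\E\left[(X-\mu(A))^2\mid X\in A\right]$ appearing with a negative coefficient, so setting all samples in $A$ equal to $\mu(A)$ is optimal. Your write-up is in fact more careful than the paper's, which dispatches this step in a single ``clearly'' sentence.
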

Note the similarity between Lemma \ref{lem:app4} and Lemma \ref{lem:app2} in Appendix \ref{app:versens}. The proof of Theorem \ref{thm:varworst} is then immediate.
\begin{proof}[Proof of Theorem \ref{thm:varworst}]
	Following on from Lemma \ref{lem:app4}, by arguments analogous to those in the proof of Proposition \ref{prop:varsens} in Appendix \ref{app:versens}, we get that when $|A|>|A^c|$, $E_\textsf{Var} = \frac{U^2\cdot |A|\cdot |A^c|}{\left(\sum_\ell m_\ell\right)^2}$, which in turn equals $\frac{U^2\cdot \sum_\ell \Gamma_\ell\cdot \sum_{\ell'} (m_{\ell'} - \Gamma_{\ell'})}{\left(\sum_\ell m_\ell\right)^2}$. The case when $|A|\leq |A^c|$ was already discussed earlier, wherein $E_\textsf{Var} = \frac{U^2}{4}$, if $\sum_\ell m_\ell$ is even, and $E_\textsf{Var} = \frac{U^2}{4}\cdot \left(1-\frac{1}{(\sum_\ell m_\ell)^2}\right)$, if $\sum_\ell m_\ell$ is odd.
\end{proof}
\end{document}